\title{Duality of Orthogonal and Symplectic Random Tensor Models}
\date{}
\author[1,2,3]{Razvan Gurau}
\author[1]{Hannes Keppler}
\affil[1]{\normalsize\itshape 
	Heidelberg University, Institut f{\"u}r Theoretische Physik, Philosophenweg 19, 69120 Heidelberg, Germany
	\authorcr \hfill}
\affil[2]{\normalsize \itshape 
	CPHT, CNRS, Ecole Polytechnique, Institut Polytechnique de Paris, Route de Saclay, \authorcr 91128 PALAISEAU, 
	France
	\authorcr \hfill}
\affil[3]{\normalsize\itshape 
	Perimeter Institute for Theoretical Physics, 31 Caroline St. N, N2L 2Y5, Waterloo, ON,
	Canada
	\authorcr \hfill
	\authorcr
	Emails: gurau@thphys.uni-heidelberg.de, keppler@thphys.uni-heidelberg.de
	\authorcr \hfill}
\numberwithin{equation}{section}
\definecolor{BScRed}{RGB}{157,0,0}
\definecolor{BScBlue}{RGB}{1,1,141}
\definecolor{BScGreen}{RGB}{0, 119, 85}
\newtheoremstyle{plain}
{}{}
{\itshape}{}{\bfseries}
{\ifx\thmnote\@gobble\else\normalfont\fi.}
{.5em}{}
\newtheoremstyle{definition}
{}{}
{\normalfont}{}{\bfseries}
{\ifx\thmnote\@gobble\else\normalfont\fi.}
{.5em}{}
\newtheoremstyle{remark}
{}{}{\normalfont}{}{\itshape}{.}{.5em}{}
\theoremstyle{plain}    % kursive Schrift
\newtheorem{theorem}{Theorem}
\newtheorem{lem}{Lemma}
\newtheorem{prop}{Proposition}
\newtheorem{definition}{Definition}
\newtheorem{remark}{Remark}
\renewcommand{\del}{\partial}
\newcommand{\tr}{\mathrm{Tr}}
\newcommand{\isom}{\cong}
\DeclareMathOperator{\sgn}{sgn}
\newcommand{\suprho}{\raisebox{.5em}{{$\kern-.08em\mathsmaller\varrho$}}}
\begin{document}

%------
% Title and abstract
%------
{\hypersetup{allcolors=black}
\maketitle
\begin{abstract}
	The groups \(O(N)\) and \(Sp(N)\) are related by an analytic continuation to negative values of \(N\), \linebreak[4]\mbox{\(O(-N)\simeq Sp(N)\)}. This duality has been studied for vector models, \(SO(N)\) and \(Sp(N)\) gauge theories, as well as some random matrix ensembles. We extend this duality to real random tensor models of arbitrary order \(D\) with no symmetry under permutation of the indices and with quartic interactions. The \(N\) to \(-N\) duality is shown to hold graph by graph to all orders in perturbation theory for the partition function, the free energy and the connected two point function.
\end{abstract}
\microtypesetup{protrusion=false}
\setcounter{tocdepth}{2}
\tableofcontents
\microtypesetup{protrusion=true}
}
%------
% BODY OF THE PAPER 
%------

%%%%%%%%%%%%% Introduction %%%%%%%%%%%%%%%%%%%%%%%%%5

\section{Introduction and Conclusion}\label{sec: introduction}

Dualities are non trivial relations between seemingly different models and therefore of great use in physics and mathematics. It has been known for some time \cite{Mkrtchian} that, for even \(N\), \(SO(N)\) and \(Sp(N)\) gauge theories are related by changing \(N\) to \(-N\) and that one can make sense of the relation \mbox{\(SO(-N)\simeq Sp(N)\)} for the representations of the respective groups \cite{Cvitanovic}. This duality has furthermore been shown to hold between orthogonal and symplectic matrix ensembles  \cite{goe-gse}\footnote{These correspond to the \(O(N)\otimes O(N)\) and \(Sp(N)\otimes Sp(N)\) matrix models of Sec.\,\ref{sec: matrix}.}.

The \(N\) to \(-N\) duality inspired in part the conjectured holographic duality between Vasiliev's higher spin gravity \cite{Vasiliev} in four-dimensional de Sitter space and the three-dimensional euclidean \(Sp(N)\) vector model with anti commuting scalars \cite{dscft}. This  dS/CFT correspondence is in turn based on the conjectured Giombi-Klebanov-Polyakov-Yin duality \cite{KP,GY} relating the three dimensional \(O(N)\) vector model in the large \(N\) limit to Vasiliev gravity in four-dimensional anti-de Sitter space. In this context \(N\sim {(\Lambda G_N)}^{-1}\) so that the sign change of the cosmological constant \(\Lambda\) (holding \(G_N\) fixed) is accompanied by a change \(N\to -N\).

The perturbative expansion of random matrix models is a sum over \emph{ribbon graphs} representing topological surfaces. The weight of each graph is fixed by the Feynman rules and the perturbative series can be organized \cite{tHooft} as a topological expansion in \(1/N\). Random matrices yield a theory of random two-dimensional topological surfaces relevant for the study of conformal field theories (CFTs) coupled to two-dimensional Liouville gravity \cite{DiFrancesco,Kazakov,Douglas,Brezin,KPZ} and two-dimensional Jackiw-Teitelboim gravity \cite{Saad, Stanford, Johnson}. They have applications as combinatorial generating functions to several counting problems \cite{itzyksonzuber,zinn-justin,tanasabook} and to the intersection theory on the moduli space of Riemann surfaces \cite{penner,witten,kontsevich}.
 
Random matrices generalize to random tensor models \cite{Ambjorn,gurau,Gurau-inviation,Guraureview} of higher order\footnote{In the physics literature one often uses ``rank'' instead of order, but this may lead to confusion with the many notions of tensor rank in abstract algebra.} $D$ which are probability measures of the type:
\begin{equation*}
d\mu[T] = e^{-S[T]} \prod_{(a_1,\dots,a_D)}\frac{dT^{a_1\dots a_D}}{\sqrt{2\pi}} \; ,
\end{equation*}
where the action \(S[T]\) is build out of invariants under some symmetry transformation. These models can also be viewed as 0-dimensional quantum field theories. 
The Feynman graphs of such models can be interpreted as higher dimensional cellular complexes and the perturbative series can be reorganized as a series in $1/N$
\cite{Gurau-N,Bonzom:2012hw,Carrozza:2015adg,sabine,sylvan,Carrozza:2021qos} which is not topological for $D\ge 3$. 
Zero dimensional random tensors yield a framework for the study random topological spaces; in one dimension tensor models provide an alternative to the Sachdev-Ye-Kitaev model without quenched disorder \cite{Witten:2016iux}; in higher dimensions they lead to tensor field theories and a new class of large  \(N\) \emph{melonic} conformal field theories \cite{Giombi:2017dtl,Bulycheva:2017ilt,Giombi:2018qgp,Klebanov:2018fzb,Gurau-TFT}.

\paragraph{Main result.}
 In this paper we deal with tensors with $D$ indices
(i.\,e.\ of order $D$) with no symmetry under their permutations. The position of an index is called its color \(c\), with \(c=1,2,\dots D\). 
 The tensors transform in the tensor product of $D$ fundamental representations of \(O(N)\) and/or  \(Sp(N)\), i.\,e.\ each tensor index is transformed by a different \(O(N)\) or \(Sp(N)\) matrix.   
The tensor components are real gra\ss mann valued (anticommuting, odd) if the number of \(Sp(N)\) factors is odd and real bosonic (commuting, even) if this number is even\footnote{The tensors are even multilinear maps on \(\mathds{R}^{m|n}\), the real graded supervector space with \(m\) even and \(n\) odd directions. This is natural because the orthosymplectic super Lie group \(OSp(m,n)\) contains both \(O(m)\) and \(Sp(n)\) and acts on \(\mathds{R}^{m|n}\). This will be our guideline in constructing the models of interest.}. We assign a parity to the tensor indices: \(\abs{c}= 0\) or \(\abs{c} = 1\) if the index transforms under \(O(N_c)\) or \(Sp(N_c)\), respectively. We consider actions consisting in invariants up to quartic order (see Sec.\,\ref{sec: deftensor} for more details).

\begin{definition}\label{def: intro}
The real quartic graded tensor model, where ``graded'' refers to symmetry under:
\[
\pmb{O}_1(N_1)\otimes \pmb{O}_2(N_2)\otimes\dots\otimes \pmb{O}_D(N_D), \quad \pmb{O}_c(N_c) = \smash[tb]{\begin{cases}
O(N_c), \quad \abs{c}=0 \\
Sp(N_c), \quad \abs{c}=1
\end{cases}} \; ,
\]
is defined by the measure:
\begin{align*}
d\mu[T] \simeq e^{-S[T]}\ \prod_{a_1,\dots,a_D} dT^{a_1\dots a_D} \; ,
 \qquad
S[T]= \frac{1}{2} \Big(T^{a_1\dots a_D} T^{b_1\dots b_D} \prod_{c=1}^D g^c_{a_c b_c}\Big) + \sum_{q\in\mathscr{Q}} \frac{\lambda_q}{4} I^{q}(T) \; ,
\end{align*}
where \(g^c_{a_cb_c}\) is the Kronecker $\delta_{a_cb_c} $ for \(\abs{c}=0\) or the canonical symplectic form \(\omega_{a_cb_c}\) for \(\abs{c}=1\) and the sum over \(\mathscr{Q}\) runs over all the independent quartic trace invariants \(I^{q}(T)\).
\end{definition}

The partition function \(Z\) and the connected two-point function \(G_2\) of the model are defined by:
\[
Z(\lambda)=\int d\mu[T], \quad\text{and}\quad G_2( \lambda )=\frac{1}{Z} \int d\mu[T]\ T^{a_1\dots a_D} T^{b_1\dots b_D} \prod_{c=1}^D g^c_{a_c b_c} \; , \]
 and can be evaluated in a perturbative expansion.
Our main theorem is the following.
 
\begin{theorem}\label{thm: main}
The perturbative series of the free energy \(\ln Z\) and of the connected two point function \(G_2\) can be expressed as formal sums over connected, colored multi-ribbon graphs:
\begin{equation}\label{eq:combiintro}
\begin{split}
\ln Z(\lambda) &= \sum_{\substack{\lbrack\mathds{G}\rbrack\ \text{connected, rooted,} \\ \text{at least one } E_q>0}}
\frac{1}{2^{C(\mathds{G}-\mathcal{E}\suprho)+1} \sum_{q\in\mathscr{Q}} E_q} \; 
\mathscr{A}(\mathds{G}) \; , \crcr
G_2( \lambda ) &= \sum_{\lbrack\mathds{G}\rbrack\ \text{connected, rooted }}
\frac{1}{2^{C(\mathds{G}-\mathcal{E}\suprho)-1}} \; 
\mathscr{A}(\mathds{G}) \; ,
\end{split}
\end{equation}
with amplitude:
\begin{equation}\label{eq:ampliintro}
\mathscr{A}(\mathds{G}) = 2^{E\suprho(\mathds{G})}\prod_{q\in\mathscr{Q}}   \left(-\lambda_q\right)^{E_q(\mathds{G})} \prod_{c=1}^D \big((-1)^{\abs c} N_c\big)^{F_c(\mathds{G})} \; ,
\end{equation}
where \(E_q\), \(E\suprho\), \(F_c\), \(C(\mathds{G}-\mathcal{E}\suprho)\) are some combinatorial numbers associated to the multi-ribbon graph \(\mathds{G}\) (see Sec.\,\ref{sec: tensor-if} for the relevant definitions). 
\end{theorem}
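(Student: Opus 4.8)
The plan is to obtain both identities by the standard route --- Gaussian integration, Wick's theorem, graphical reorganization --- the bulk of the work being a careful account of the signs produced by the symplectic forms and by the Grassmann parity of \(T\). First I would identify the covariance of the Gaussian part of \(d\mu[T]\): inverting the quadratic form \(\tfrac12 T^{a_1\dots a_D}T^{b_1\dots b_D}\prod_c g^c_{a_cb_c}\) gives a propagator proportional to \(\bigotimes_c (g^c)^{-1}\). Since \(g^c\) is symmetric for \(\abs{c}=0\) and antisymmetric for \(\abs{c}=1\), while \(T\) is even or odd according to the parity of \(\sum_c\abs{c}\), one must record the relative sign generated when the two \(T\)'s of a contraction are transposed; I would fix it once and for all as a rule attached to each propagator edge, observing that the two bookkeeping conventions are mutually consistent precisely because an even number of symplectic colors makes \(T\) bosonic. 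This is the combinatorial shadow of the \(OSp\) picture mentioned after Definition~\ref{def: intro}.

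Second, I would expand the interaction factor \(\prod_{q\in\mathscr{Q}} e^{-\frac{\lambda_q}{4}I^q(T)}\) in powers of the \(\lambda_q\) and apply Wick's theorem in its bosonic or fermionic form. Drawing each quartic invariant \(I^q\) as a four-valent vertex whose four \(T\)-legs each carry \(D\) colored strands routed by the trace pattern of \(I^q\) (via the \(g^c\)), and each Wick contraction as an edge whose \(D\) strands are routed by the \((g^c)^{-1}\), turns every monomial into a connected, colored multi-ribbon graph \(\mathds{G}\): an ordinary, possibly twisted, ribbon graph for each color, the \(D\) of them glued along common vertices and edges, together with the marking of the \(\suprho\)-edges defined in Sec.\,\ref{sec: tensor-if}. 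Summing over all pairings produces the sum over isomorphism classes \([\mathds{G}]\), each weighted by \(1/\abs{\mathrm{Aut}(\mathds{G})}\) and by \(\prod_q(-\lambda_q/4)^{E_q}/E_q!\).

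Third --- the heart of the argument --- I would evaluate the scalar amplitude of a fixed \(\mathds{G}\). Following the colored strands, the index sums factorize over the closed strand loops of each color, that is over the faces: a face of color \(c\) with \(\abs{c}=0\) gives \(\tr\mathds{1}_{N_c}=N_c\), whereas a face of color \(c\) with \(\abs{c}=1\) gives a cyclic contraction of \(\omega\)'s and \(\omega^{-1}\)'s. I would (i) choose a reference orientation for every face; (ii) show that the antisymmetry of \(\omega\) contributes exactly one factor \(-1\) per symplectic face, independently of this choice; (iii) check that the residual two-fold ambiguities, the edge signs of the first step and the Grassmann reordering signs of the second step collapse into the single global factor \(2^{E\suprho(\mathds{G})}\). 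The outcome is \(\mathscr{A}(\mathds{G})=2^{E\suprho(\mathds{G})}\prod_q(-\lambda_q)^{E_q}\prod_c\big((-1)^{\abs{c}}N_c\big)^{F_c}\), i.e.\ the purely orthogonal amplitude continued by \(N_c\mapsto -N_c\) on the symplectic colors. I expect this sign bookkeeping to be the principal obstacle; everything else reduces to combinatorics already present in the \(O(N)\) case.

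Finally I would pin down the prefactors. The \(1/E_q!\) and \(1/4^{E_q}\) are absorbed by relabelling the \(E_q\) vertices and their legs; passing to rooted isomorphism classes then replaces \(1/\abs{\mathrm{Aut}(\mathds{G})}\) by \(1/\big(2^{C(\mathds{G}-\mathcal{E}\suprho)+1}\sum_q E_q\big)\) for \(\ln Z\) and by \(1/2^{C(\mathds{G}-\mathcal{E}\suprho)-1}\) for \(G_2\): the factor \(\sum_q E_q\) counts the inequivalent roots, the power of \(2\) records the residual ``flip'' automorphisms acting on the \(C(\mathds{G}-\mathcal{E}\suprho)\) connected components of \(\mathds{G}\) with its \(\suprho\)-edges deleted, and the shift in the exponent of \(2\) between the two cases is the familiar difference between a vacuum graph and a two-point graph carrying an extra marked edge. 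Assembling the four steps gives \eqref{eq:combiintro}--\eqref{eq:ampliintro}.
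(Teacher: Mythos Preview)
Your approach is genuinely different from the paper's, and while the overall strategy could be made to work, there is a real gap at the step you yourself flag as ``the heart of the argument''.

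\textbf{Difference of route.} The paper does \emph{not} expand the quartic interaction directly and Wick-contract the \(T\)'s. Instead it introduces one intermediate (Hubbard--Stratonovich) field \(\sigma_q\) per quartic invariant, integrates out \(T\) exactly to obtain \(\exp\{-\tfrac{(-1)^{\sum_c|c|}}{2}\tr\ln(1^{\otimes\mathcal{D}}+A)\}\), and then expands in the \(\sigma_q\). In that picture a \emph{vertex} of the multi-ribbon graph \(\mathds{G}\) is a single \(\tr\ln R\) disc of arbitrary valence and an \emph{edge} is an intermediate-field propagator; the (anti)symmetric projector \(P_{\mathcal{C}_q}\) in the \(\sigma_q\) covariance is what generates the twisted/untwisted ribbon states. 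Your four-valent \(T\)-vertices and \(T\)-propagators are the dual (pre-Hubbard--Stratonovich) stranded graphs. The quantities \(F_c(\mathds{G})\), \(E\suprho(\mathds{G})\), \(C(\mathds{G}-\mathcal{E}\suprho)\) in the theorem are defined for the intermediate-field graphs of Sec.~4.2, so even if your computation is correct you would still owe a dictionary between your graphs and the paper's \(\mathds{G}\). The intermediate-field route buys you this dictionary for free and, more importantly, factorizes the amplitude over the single-color ribbon graphs \(\mathds{G}_c\), which is what makes the sign analysis tractable.

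\textbf{The gap.} Your step (ii) --- ``show that the antisymmetry of \(\omega\) contributes exactly one factor \(-1\) per symplectic face'' --- is exactly the content of the paper's Theorem~3 (Appendix~A), and it is not a bookkeeping remark but a lemma requiring proof. A symplectic face is a cyclic word in \(g^c\) and \((g^c)^{-1}\) with interspersed transpositions coming from twisted edges and from reorienting corner arrows; the claim that all of these signs conspire to \((-1)^{F_c}\) (equivalently, that \((-1)^{V}\prod_e(-1)^{\tau(e)}\prod_f(-1)^{t(f)}=(-1)^{F}\) for each color) is proved in the paper by reducing an arbitrary ribbon graph to the canonical form \(\mathcal{G}_{\Sigma_g}\vee(\text{0, 1 or 2 copies of }\mathcal{G}_{\mathds{RP}^2})\) via contraction and edge-sliding moves, and checking that the sign is invariant under each move. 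Without such an argument your step (ii) is an assertion, not a proof. Your step (iii) also mislocates the \(2^{E\suprho}\): it does not arise from collapsing residual sign ambiguities but simply from the fact that the edges with \(\mathcal{C}_q=\emptyset\) carry no projector \(P_{\mathcal{C}_q}\) and hence no twisted/untwisted doubling to absorb the factor of \(2\) present for every other edge type.

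Finally, your account of the prefactors is morally right but schematic: in the paper the passage from labeled maps to unlabeled rooted multi-ribbon graphs goes through the orbit--stabilizer theorem for the twist group \(\mathfrak{T}\) (Lemma~3 and Proposition~4), which is what produces the exact exponent \(C(\mathds{G}-\mathcal{E}\suprho)-1\). The ``residual flip automorphisms'' you invoke are precisely \(\mathrm{Stab}_{\mathfrak{T}}(\mathds{G})\), but computing its order requires the connected-component analysis done there.
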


\begin{proof}
The theorem follows from Eq. \eqref{eq: tensor-amplitude}, \eqref{eq: G2-tensor} and \eqref{eq: lnZ-tensor}.

\end{proof}

The crucial remark is that all the factors \(N_c\) come in the form \((-1)^{\abs c} N_c\), hence  each term is mapped into itself by exchanging $O(N_c) \leftrightarrow  Sp(N_c) $ and $N_c \leftrightarrow -N_c$.

%%%%%%%%%%%%%%%%%%%%%%%%%%%% Conslusion %%%%%%%%%%%%%%%%%%%%%%%%%%%%%%%%%%%

\paragraph{Conclusion and Outlook.}\label{sec: conclusion} We list some comments on, and possible generalizations of, our result:
\begin{itemize}
\item In order to prove our main theorem we will use in this paper an intermediate field representation adapted to quartic interactions. It should however be possible to extend this result to more general interactions \cite{Lionni-IF}. 
\item While more general models with \(OSp(m,n)\) symmetry could be considered, the construction of super tensor actions is complicated because of the abundance of sign factors \cite{Sasakura}.
\item For $D=2$ (matrices), the contributions of ribbon graphs and their duals cancel exactly in the fermionic case (see Remark~\ref{cor: cancellations}). It would be interesting to understand similar cancellations in the graded tensor models. This should be related to Poincaré duality between lower dimensional colored subgraphs.
\item One should explore the implications of the $N\to -N$ duality for tensor field theories. The sign changes may generate new renormalization group fixed points, and the duality may not hold for all the physical properties \cite{LeClair}. Quantum mechanical models of order three tensors with \(Sp(N)\) symmetry have been studied in \cite{diversenumbers, Carrozza}.
\end{itemize}

\paragraph{Outline of the paper.} This paper is organized as follows. In Sec.\,\ref{sec: deftensor} the quartic graded tensor model is defined, the relation between directed edge colored graphs and quartic trace invariants is explained, and we collects some definitions and notations on ribbon graphs,
Sec.\,\ref{sec: matrix} deals in detail with the order 2 (matrix) case.
Sec.\,\ref{sec: tensor} continues with the general case of arbitrary order \(D\) tensors.
Appendix~\ref{ap: ribbon} contains the calculation of the sign of each ribbon graph amplitude and Appendix~\ref{ap: symmfac} gives details on the calculation of the symmetry factors of the Feynman graphs. 

%%%%%%%%%%%%%%%%%%%% Acknowledgments %%%%%%%%%%%%%%%%%%%%%%

\subsection*{Acknowledgments}

The authors would like to thank Dario Benedetti for comments and discussions at the early stages of this project.
This work has been supported by the European Research Council (ERC) under the European Union’s Horizon 2020 research and innovation program (grant agreement No818066) and by the Deutsche Forschungsgemeinschaft (DFG, German Research Foundation) under Germany’s Excellence Strategy EXC-2181/1 -
390900948 (the Heidelberg STRUCTURES Cluster of Excellence).

%%%%%%%%%%%%%%%%%%%% Mainpart %%%%%%%%%%%%%%%%%%%%%%
%%%%%%%%%%%%%%%%%%%%%%%%%%%%%%%%%%%%%%%%%%%%%%%%%%

\section{Definitions}\label{sec: deftensor}

In this section we define the models we will be studying. We also give some standard definitions about ribbon graphs and combinatorial maps.

\subsection{The Real Quartic Graded Tensor Models}

The orthosymplectic super Lie group
\(OSp(m,n)\) is the isometry group of the canonical graded-symmetric bilinear form on the supervector space \(\mathds{R}^{m|n}\):
\begin{equation*}
	\eta\, :\ \mathds{R}^{m|n} \times \mathds{R}^{m|n} \longrightarrow \Lambda_\infty,
	\quad
	\left(\eta_{ij}\right) =
	\left(\begin{array}{c|c}
		\,\delta & 0 \\ \hline
		0 & \omega
	\end{array}\right) \;, 
\end{equation*}
where \(\Lambda_\infty\) is the graßmann algebra generated by an infinite number of anticommuting generators. \(\mathds{R}^{m|n}\) is a free module over \(\Lambda_\infty\) with \(m\) even (commuting) and \(n\) odd (anticommuting) basis vectors. Note that non-singularity of \(\eta\) demands that \(n\) is an even integer. For later comparison, \(m\) is also taken to be even.

Since we are only interested in \(O(N)\) and \(Sp(N)\), and not the whole \(OSp(m,n)\), we restrict to supervector spaces that are either purely odd or purely even, and thus have either \(O(N)\) or \(Sp(N)\) as their isometry group. This information can be encoded as a parity of the index color \(\abs c\in\{0,1\}\), with \(\abs c=0\) corresponding to orthogonal, and  \(\abs c=1\) to symplectic symmetry. The tensor components are commuting bosonic or anticommuting gra\ss mannian, depending on whether the number of indices with $\abs{c}=1$ is even or odd.
Suitable invariants are defined to construct the actions of the models.

\paragraph{Vector spaces.} Let $\mathscr{H}_c  =  \mathds{R}^{N_c|0}$ for $\abs{c}=0 $, respectively $ \mathscr{H}_c  =\mathds{R}^{0|N_c}$ for $\abs{c}=1  $ be a real supervector space of dimension \(N_c\) that is either purely even or purely odd and is endowed with a non-degenerate graded symmetric inner product 
$g^c\, : \mathscr{H}_c\times \mathscr{H}_c \rightarrow \Lambda_\infty$:
\[
g^c(u,v) = (-1)^{\abs{c}} g^c(v,u),\quad \forall u, v\in \mathscr{H}_c  \;,\qquad
g^c(\,\cdot\, ,v) = 0\ \Leftrightarrow \ v=0  \;. 
\]

In a standard basis \(g^c\) agrees with the standard symmetric or symplectic form, that is $g^c_{a_cb_c} = \delta_{a_cb_c}$ for $\abs{c}=0 $, respectively $ g^c_{a_cb_c} = \omega_{a_cb_c}$ for $\abs{c}=1 $. We denote $(g^c)^{a_cb_c}$ the matrix element of the inverse $(g^c)^{-1}$. The isometry group preserving \(g^c\) is either \(O(N_c)\) in the \(\abs{c}=0\) case or \(Sp(N_c)\) in the \(\abs{c}=1\) case, denoted collectively by 
$\pmb{O}_{c}(N_c) := \{ O_c\ |\ g^c_{a_cb_c} = O_{a_c}^{\ a'_c} O_{b_c}^{\ b'_c} g^c_{a'_c b'_c}  = ( O g^c O^T )_{a_cb_c}\}$.

\paragraph{Tensors.}
Tensors are \emph{even} elements of the tensor product space
$T\in \bigotimes_{c=1}^{D} \mathscr{H}_c$.
Choosing a basis  \(\{[\psi^c]_{a_c}\}_{a_c=1,\dots,N_c} \) in each \(\mathscr{H}_c\) and denoting the dual basis by \(\{[\psi^\vee_c]^{a_c}\}_{a_c=1,\dots,N_c} \), the components of a tensor are:
\[
T^{a_1\dots a_D} \equiv T([\psi^\vee_1]^{a_1},\dots, [\psi^\vee_D]^{a_D}),\quad T= \sum_{a_c=1,\dots,N_c,\,\forall c} T^{a_1\dots a_D}\ [\psi^1]_{a_1}\otimes \dots\otimes [\psi^D]_{a_D} \;.
\]

A generic tensor has no symmetry properties under permutation of its indices \(a^1,\dots,a^D\) hence the indices have a well defined position \(c\), called their \emph{color}. The set of colors is denoted \(\mathcal{D}=\{1,\dots, D\}\). We sometimes call the colors with $\abs{c}=0$ even and the ones with $\abs{c}=1$ odd. As the tensors are taken to be even elements of the tensor product space, the tensor components are bosonic (even) if the number of colors with $\abs{c}=1$ (i.\,e.\ odd colors) is even and fermionic (odd) otherwise: the Graßmann number $T^{a_1\dots a_D}$ has the same parity as $\sum_{c\in\mathcal{D}} \abs{c}$.

The tensors transform in the tensor product representation of several orthogonal and symplectic groups according to the type of the individual \(\mathscr{H}_c\)'s:
\[
	T^{a_1\dots a_D} \to \tensor{(O_1)}{^{a_1}_{b_1}} \dots \tensor{(O_D)}{^{a_D}_{b_D}}\ T^{b_1\dots b_D},
	\quad O_1\otimes\dots\otimes O_D\in\bigotimes_{c\in\mathcal{D}} \pmb{O}_c(N_c) \; .
\]

A tensor can be viewed as a multilinear map
$T\, :\ \bigotimes_{c\in\mathcal{C}}\mathscr{H}_c^\vee \rightarrow \bigotimes_{c\in\mathcal{D}\backslash\mathcal{C}}\mathscr{H}_c$ for any subset of colors \mbox{\(\mathcal{C}\subset\mathcal{D}\)}. 
As the inner product \(g^c\) induces an isomorphism between \(\mathscr{H}_c\) and its dual, denoting 
\mbox{$a_{\mathcal{C}} = (a_c,\ c\in\mathcal{C})$}, the matrix elements of this linear map in the tensor product basis
are  \mbox{\(T^{a_{\mathcal{D}\backslash\mathcal{C}}\, a_{\mathcal{C}}}\equiv T^{a_1\dots a_D}\)}.

\paragraph{Edge colored graphs.}
Invariant polynomials in the tensor components can be constructed by contracting the indices of color \(c\) with the inner product \(g^c\).
The unique quadratic invariant is:
\begin{equation*}
	g^{\otimes\mathcal{D}}(T,T) := T^{a_{\mathcal{D}}} T^{b_{\mathcal{D}}} \prod_{c\in\mathcal{D}} g^c_{a_c b_c} \; .
\end{equation*}
General \emph{trace invariants} are polynomials in the \(T^{a_{\mathcal{D}}}\)'s build by contracting pairs of indices of the same color. These invariants form an algebraic complete set for all invariant polynomials and admit a straightforward graphical representation as \emph{edge colored graphs}. 

\begin{definition}[Edge Colored Graphs \cite{gurau}]\label{def: coloredgraph}
A closed edge \(D\)-colored graph is a graph \mbox{\(\mathcal{B}=(\mathcal{V}(\mathcal{B}),\mathcal{E}(\mathcal{B}))\)} with vertex set \(\mathcal{V}(\mathcal{B})\) and edge set \(\mathcal{E}(\mathcal{B})\) such that:
\begin{itemize}
\item The edge set is partitioned into \(D\) disjoint subsets \mbox{\(\mathcal{E}(\mathcal{B})=\bigsqcup_{c=1}^{D} \mathcal{E}^c(\mathcal{B})\)}, where \linebreak[5]\mbox{\(\mathcal{E}^c(\mathcal{B})\ni e^c=(v,w)\)}, \(v,w\in\mathcal{V}(\mathcal{B})\), is the subset of edges of color \(c\).
\item All vertices are \(D\)-valent with all the  edges incident to a vertex having distinct colors.
\end{itemize}
\end{definition}
In order to incorporate the odd colors appropriately, one needs to consider \emph{directed graphs}, that is graphs with an additional arrow for every edge (see Fig.~\ref{fig: quartic inv} for an example). Two graphs which are identical up to reorienting one edge of an odd color represent the same invariant up to a global $-$ sign. We will fix the global sign in the case of quartic invariants below.

\begin{figure}[ht]
	\centering
	\begin{tikzpicture}[scale=0.875,font=\small]
		\node {\includegraphics[scale=3.5]{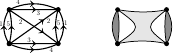}};
		\node at (2.32,0) {\(\mathcal{C}\)};
		\node at (0.55,0) {\(\mathcal{D}\backslash\mathcal{C}\)};
		\node at (4,0) {\(\mathcal{D}\backslash\mathcal{C}\)};
	\end{tikzpicture}
	\caption{\emph{Left:} Quartic 5-colored graph. \emph{Right:} Schematic representation of a general quartic invariant.}
	\label{fig: quartic inv}
\end{figure}

\paragraph{Quartic invariants.} 
Quartic invariants are represented by \(D\)-colored graphs with four vertices (see Fig.~\ref{fig: quartic inv}) and directed edges. Due to the sign ambiguity induced by reversing the edges corresponding to the odd colors, we need to give a prescription to fix the global sign of an invariant. Every directed quartic $D$-colored graph can be canonically oriented as follows (see again Fig.~\ref{fig: quartic inv}):
\begin{itemize}
 \item the color $1$ edges give a pairing of the vertices. We denote $a^1$ and $b^1$ the source vertices of the oriented edges $1$ and $a^2$ and $b^2$ their targets.
 \item we orient all the edges that connect $(a^1,a^2)$ respectively $(b^1,b^2)$ parallel to the edges $1$. We denote their colors $c\in \mathcal{D}\setminus\mathcal{C}$.
 \item all the edges of colors $c\in \mathcal{C}$ connect the $a$ pair with the $b$ pair. We orient all of them from the $a$ pair to the $b$ pair. These edges fall into two classes
    \begin{itemize}
     \item either they connect $a^1$ with $b^1$ and $a^2$ with $b^2$ in which case we say they run in the \emph{parallel} channel
     \item or they connect $a^1$ with $b^2$ and $a^2$ with $b^1$ in which case we say they run in the \emph{cross} channel.
    \end{itemize}
\end{itemize}
A canonically oriented graph is indexed by a subset of colors \(\mathcal{C}\subset\mathcal{D}\), \(1\notin\mathcal{C}\) and permutations of two elements \(\pi^c\in\mathfrak{S}_2=\{ \mathrm{id},(12)\}, \, c\in \mathcal{C}\). The associated invariant is:
\begin{equation}\label{eq: quarticinv}
\begin{split}
I(T) & = \sum_{a_{\mathcal{D}}^1, a_{\mathcal{D}}^2, b_{\mathcal{D}}^1, b_{\mathcal{D}}^2}
\Big(T^{a_{\mathcal{D}}^1}T^{a_{\mathcal{D}}^2} \prod_{c\in\mathcal{D}\backslash\mathcal{C}} g^c_{a_c^1 a_c^2} \Big)
\Big(T^{b_{\mathcal{D}}^1}T^{b_{\mathcal{D}}^2} \prod_{c\in\mathcal{D}\backslash\mathcal{C}} g^c_{b_c^1 b_c^2} \Big)\crcr
& \hspace{5.4em} \cdot\Big(\prod_{c\in\mathcal{C}} \big(-\sgn(\pi^c)\big)^{\abs c} g^c_{a_c^1 b_c^{\pi^c(1)}} g^c_{a_c^2 b_c^{\pi^c(2)}}\Big) \\
& = \sum_{a_{\mathcal{C}}^1, a_{\mathcal{C}}^2, b_{\mathcal{C}}^1, b_{\mathcal{C}}^2}
\big(g^{\otimes\mathcal{D}\backslash\mathcal{C}}(T,T)\big)^{a_{\mathcal{C}}^1 a_{\mathcal{C}}^2}
K_{a_{\mathcal{C}}^1 a_{\mathcal{C}}^2, b_{\mathcal{C}}^1 b_{\mathcal{C}}^2}
\big(g^{\otimes\mathcal{D}\backslash\mathcal{C}}(T,T)\big)^{b_{\mathcal{C}}^1 b_{\mathcal{C}}^2} \; ,
\end{split}
	\end{equation}
where we introduced the shorthand notation $K$ for the contractions of the indices transmitted between the pairs. Note that this is invariant by exchanging the $b$ vertices and that $ \big(-\sgn(\pi^c)\big)^{\abs c}$ is the signature of the permutation $(a^1a^2)(b^1b^2)$ to $(a^1 b^{\pi(1)} ) (a^{2}b^{\pi(2)})$ for the odd colors.

\begin{lem}\label{lem:invcount}
There are \(\frac{1+3^{D-1}}{2}\) different quartic trace invariants (see Fig.~\ref{fig:rk3inv} for the $D=3$ case).
\end{lem}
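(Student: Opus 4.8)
The plan is to reduce the lemma to a short orbit count. First I would recall from the construction preceding Eq.~\eqref{eq: quarticinv} that a canonically oriented quartic $D$-colored graph has four vertices, that its color-$1$ edges form a perfect matching splitting the vertices into an $a$-pair $\{a^1,a^2\}$ and a $b$-pair $\{b^1,b^2\}$, and that each remaining color $c$ sits in one of three ``channels'': it may join the two vertices within each pair (as color $1$ does), it may join the pairs in the \emph{parallel} channel ($\pi^c=\mathrm{id}$), or in the \emph{cross} channel ($\pi^c=(12)$). A graph is therefore nothing but a choice of channel for each of the colors $2,\dots,D$, color $1$ being fixed to the within-pair channel; there are $3^{D-1}$ such choices, equivalently $\sum_{\mathcal C\subseteq\{2,\dots,D\}}2^{\abs{\mathcal C}}=3^{D-1}$, with $\mathcal C$ the set of colors joining the pairs.

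The second step is to decide which of these $3^{D-1}$ graphs carry the same invariant. Since any isomorphism of colored graphs sends color-$1$ edges to color-$1$ edges it must preserve their matching, so the only relabellings to consider form the stabilizer of that matching in $\mathfrak S_4$, a group of order $8$. A Klein four-subgroup $V_4$ of index two — the one fixing each of the three perfect matchings of $K_4$ setwise — leaves every channel untouched, whereas every relabelling outside $V_4$, in particular the exchange of the two $b$ vertices, turns the parallel channel into the cross channel and hence flips the channel of \emph{every} color joining the pairs at once while fixing the within-pair colors. So the unique coincidence among the $3^{D-1}$ graphs is the ``global flip'' $(\mathcal C,(\pi^c)_c)\sim(\mathcal C,((12)\pi^c)_c)$: that this really is a symmetry of the sign-decorated invariant is exactly the statement, noted below Eq.~\eqref{eq: quarticinv}, that $I(T)$ is invariant under exchanging the $b$ vertices, and conversely distinct orbits give distinct invariants because inequivalent colored graphs yield distinct trace invariants \cite{gurau}.

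It then remains to count orbits of the $\mathds{Z}_2$-action ``flip all channels joining the pairs'' on the $3^{D-1}$ graphs. By Burnside's lemma this number is $\tfrac12\bigl(3^{D-1}+\#\{\text{fixed graphs}\}\bigr)$, and the only fixed graph is the one in which no color joins the pairs — two disjoint pairs of vertices, each pair joined by all $D$ colors — so the count is $\tfrac12(3^{D-1}+1)$ (e.g.\ $2$ for $D=2$ and $5$ for $D=3$, cf.\ Fig.~\ref{fig:rk3inv}). Equivalently, summing over $\mathcal C$, the empty $\mathcal C$ contributes $1$ and each $\mathcal C$ of size $k\ge 1$ contributes $2^{k-1}$, giving $1+\sum_{k=1}^{D-1}\binom{D-1}{k}2^{k-1}=1+\tfrac12(3^{D-1}-1)$. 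The step needing genuine care, rather than routine bookkeeping, is the rigidity claim of the second paragraph — checking that exchanging the two $b$ vertices is a symmetry of the invariant of Eq.~\eqref{eq: quarticinv}, signs included (this is where the factors $\bigl(-\sgn(\pi^c)\bigr)^{\abs c}$ and the parity of $T$ conspire), and that no further identifications occur; everything else is a one-line computation.
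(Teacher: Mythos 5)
Your proof is correct and takes essentially the same route as the paper's: both enumerate configurations by the subset $\mathcal{C}\subseteq\{2,\dots,D\}$ of colors joining the two pairs, assign $2^{\abs{\mathcal C}}$ channel choices, and divide by $2$ for the $b$-vertex swap (the paper's $1+\tfrac12\sum_{q\ge 1}\binom{D-1}{q}2^q$ -- note its displayed $\binom{D}{q}$ is a typo). Your Burnside-style orbit count and explicit identification of the order-$8$ stabilizer of the color-$1$ matching, with its $V_4$ acting trivially, supply the rigor behind the paper's one-line assertion that the $b$-swap is the only identification.
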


\begin{proof}
 There is only one invariant corresponding to 
 $\mathcal{C}=\emptyset$. If $\mathcal{C}$ has $q$ elements, there are $2^q$ choices for the channels and an overall $1/2$ for the relabeling of the $b$ vertices. Thus the total number of invariants is:
 \[
  1 + \frac{1}{2} \sum_{q=1}^{D-1} \binom{D}{q} 2^{q}  
   = \frac{3^{D-1} +1 }{2} \;.
 \]
\end{proof}

\begin{figure}[ht]
	\centering
	\begin{tikzpicture}[scale=0.875,font=\small]
		\node {\includegraphics[scale=3.5]{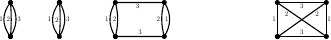}};
		\node at (2,0) {\(+ 2\left(\parbox{1.8cm}{\centering color permutations}\right)\)};
	\end{tikzpicture}
	\caption{The 5 quartic invariants at order 3 know as double trace, pillow and tetrahedron.}
	\label{fig:rk3inv}
\end{figure}

Denote the set of distinct quartic \(D\)-colored graphs and the associate trace invariants by \(\mathscr{Q}\ni q\) and \( I^{q}(T)\) respectively.
\begin{definition}[Real Quartic Graded Tensor Model]\label{def: tensormodel} The \emph{real quartic ``graded'' tensor model} is the measure:
\begin{gather*}
d\mu[T]=e^{-S[T]}\ [dT], \quad [dT]=\prod_{a_{\mathcal{D}}} dT^{a_1\dots a_D} \cdot
\begin{cases}
\frac{1}{(2\pi)^{  \prod_c N_c/2 } }, & \sum_{c=1}^{D} \abs{c} = 0 \mod 2\\
1, & \sum_{c=1}^{D} \abs{c} = 1 \mod 2
\end{cases} \; ,
\\ \text{with}\quad
S[T]= \frac{1}{2}g^{\otimes\mathcal{D}}(T,T) + \sum_{q\in\mathscr{Q}} \frac{\lambda_q}{4} I^{q}(T) \; ,
\end{gather*}
where the normalization is such that \(\int d\mu[T] =1\) for \(\lambda_q=0\ \forall q\in\mathscr{Q}\).

\end{definition}

\paragraph{Convergence issues.} Throughout this paper we treat the measures $d\mu[T]$ as perturbed Gaussian measures. As such we do not concern ourselves with the convergence of the various tensor and matrix integrals. The integrals are always convergent if $T$ is fermionic. If $T$ is bosonic, the integrals converge if $|c|=0$ for all $c$, but not necessarily in the other cases. As we treat the Gaussian integrals as generating functions of graphs, we will not worry about such issues.

\subsection{Ribbon Graphs and Combinatorial Maps}\label{sec: ribbon}

As ribbon graphs \cite{embeddedgraphs, grosstucker} and combinatorial maps play a significant role in the derivation of our results, we review here some of their properties.

\begin{figure}[ht]
\centering
\includegraphics[scale=5]{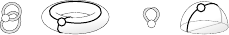}
\caption{Ribbon graphs, which we denote  \(\mathcal{G}_{T^2}\) and \(\mathcal{G}_{\mathds{RP}^2}\), and their cellular embeddings. The rightmost surface is the hemisphere representation of the real projective plane where opposite points along the equator are identified.}
\label{fig: ap-ribbon}
\end{figure}

Ribbon graphs, see Fig.~\ref{fig: ap-ribbon} for some examples, are cellularly embedded graphs on topological surfaces, and thus can be viewed as 2-cell-complexes. Due to the embedding, each vertex carries an orientation and the order of edges around a vertex is fixed. A vertex can be re-embedded with the opposite orientation: this amounts to reversing the order of the incident edges and giving them a twist, see Fig.~\ref{fig: reembedding}.
	
\begin{definition}[Ribbon Graph \cite{embeddedgraphs}]\label{def: ribbon}
A \emph{ribbon graph} \(\mathcal{G}=( \mathcal{V}(\mathcal{G}), \mathcal{E}(\mathcal{G}) )\) is a (possibly non-orientable) surface with boundary, represented as the union of two sets of topological discs, a set of \emph{vertices} \(\mathcal{V}(\mathcal{G})\), and a set of \emph{edges} \(\mathcal{E}(\mathcal{G})\), such that:
\begin{enumerate}
\item The vertices and edges intersect in disjoint line segments.
\item Each such line segment lies on the boundary of precisely one vertex and precisely one edge.
\item Every edge contains exactly two such line segments.
\end{enumerate}
The boundary components of \(\mathcal{G}\) are called \emph{faces}. The two disjoint boundary segments of an edge that are not connected to a vertex (i.e. the two sides of the edge) are called \emph{strands}.  We denote the set of faces of $\mathcal{G}$ by $\mathcal{F}(\mathcal{G})$. A ribbon graded becomes a 2 dimensional CW complex by sewing two dimensional patches along its faces.

The numbers of vertices, edges and faces of $\mathcal{G}$ are denoted by \(V(\mathcal{G})\), \(E(\mathcal{G})\) and \(F(\mathcal{G})\), respectively.
\end{definition}

\begin{figure}[ht]
		\centering
		\begin{tikzpicture}[scale=1, font=\small]
			\node at (0,0) {\includegraphics[scale=6]{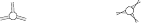}};
			\node at (-4.55,-.5) {1};
			\node at (-3.55,.84) {3};
			\node at (-2.45,-.5) {2};
			\node at (4.42,.85) {2};
			\node at (4.24,-.843) {3};
			\node at (2.5,-.15) {1};
			\node at (0,0) {\(\sim\)};
		\end{tikzpicture}
		\caption{Re-embedding a vertex: the order of halfedges is reversed and they gain additional twists. This is an equivalence relation of ribbon graphs.}
		\label{fig: reembedding}
\end{figure}
    
Several remarks are in order:
\begin{itemize}
\item The strands of an edge can run parallel, in which case the edge is called untwisted, or cross, in which case the edge is called twisted.
\item If \(V(\mathcal{G})=1\) the graph is called a \emph{rosette graph}. A rosette graph with only one face is called a \emph{superrosette graph}.
\item A \emph{self-loop} in \(\mathcal{G}\) is an edge \(e=\{h_v,h'_v\} \in\mathcal{E}(\mathcal{G})\) connected to just one vertex \(v\in\mathcal{V}(\mathcal{G})\). 
A \emph{simple self-loop} is a self-loop such that its halfedges are direct neighbors in the cyclic ordering around \(v\), thus \(v\) has a corner of the form \( (h_v,h'_v)\). If \(e\) is (un-)twisted the simple self-loop is called likewise.
\item We denote the ribbon graph consisting in only one vertex with no edge by \(\mathcal{G}_{\kern-.1em\mathlarger{{\circ}}}\). By definition this graph has one face.
We denote the ribbon graph with one vertex and one twisted self-loop edge by \(\mathcal{G}_{\mathds{RP}^2}\) and the ribbon graph with one vertex, two untwisted self-loop edges but no simple self-loop by \(\mathcal{G}_{T^2}\)\footnote{As their names suggest, these graphs can be cellularly embedded into \(\mathds{RP}^2\) or \(T^2\), respectively.}. The last two graphs are depicted in Fig.~\ref{fig: ap-ribbon}. As a topological surface with boundary \(\mathcal{G}_{\mathds{RP}^2}\) is homeomorphic to a Möbius strip.
\end{itemize}

Every ribbon graph has a dual ribbon graph with the same number of edges, but with the roles of the vertices and the faces interchanged.
\begin{definition}[Dual Ribbon Graph
\cite{embeddedgraphs}]\label{def: dualgraph}
		Let \(\mathcal{G}\) be a ribbon graph. The \emph{dual} ribbon graph \(\mathcal{G}^\ast\) is obtained by sewing discs along the faces of \(\mathcal{G}\) and deleting the original vertex discs of \(\mathcal{G}\). The new discs make up the dual vertex set \(\mathcal{V}(\mathcal{G}^\ast)\), and the new boundary components created by the deletion are the faces  of \(\mathcal{G}^\ast\). See Fig.~\ref{fig: dualgraph} for an illustration. 
	\end{definition}

	\begin{figure}[ht]
		\centering
		\includegraphics[scale=5]{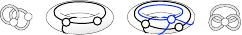}
		\caption{The dual graph. }
		\label{fig: dualgraph}
	\end{figure}
	
Besides ribbon graphs, we will encounter combinatorial maps below.

\begin{definition}[Combinatorial Map]\label{def: c-map}

A \emph{combinatorial map} \(\mathcal{M}=(\mathcal{S},\pi,\alpha)\) is a finite set \(\mathcal{S}\) of halfedges (or darts) of even cardinality, together with a couple of permutations \((\pi, \alpha)\) on \(\mathcal{S}\), where \(\alpha\) is an involution with no fixed points (a ``pairing'' of halfedges).

\(\mathcal{M}\) is called connected if the group freely generated by \(\pi\) and \(\alpha\) acts transitively on \(\mathcal{S}\).
The dual of \(\mathcal{M}\) is the combinatorial map \(\mathcal{M}^\ast=(\mathcal{S},\alpha\circ\pi,\alpha)\).
\end{definition}

Combinatorial maps can be represented as graphs embedded in orientable surfaces. The cycles of \(\pi\) represent vertices with a cyclic order of their halfedges (chosen to be counter-clockwise), and \(\alpha\) encodes pairings of halfedges into edges. The faces of a combinatorial map are the cycles of the permutation \(\alpha\circ\pi\). In the dual combinatorial map, the role of vertices and faces is reversed.

The definition of combinatorial maps and ribbon graphs can be extended to include a second kind of edges. 

\begin{definition}[Combinatorial Map with \(\varrho\)-Edges]\label{def: c-map-rho}\ A \emph{combinatorial map with \(\varrho\)-edges} \linebreak[4] \mbox{\(\mathcal{M}\suprho=(\mathcal{S}\sqcup\mathcal{S}\suprho,\pi, \alpha,\alpha\suprho)\)} is a finite set \(\bar{\mathcal{S}}=\mathcal{S}\sqcup\mathcal{S}\suprho\) that is the disjoint union of two sets of halfedges, both of even cardinality, together with a triple of permutations \((\pi, \alpha,\alpha\suprho)\) on \(\bar{\mathcal{S}}\). \(\alpha\) and \(\alpha\suprho\) are fixed-point free involutions on \(\mathcal{S}\) and \(\mathcal{S}\suprho\) respectively, and extended to the whole of \(\bar{\mathcal{S}}\) by setting \(\alpha(h)=h\ \forall h\in\mathcal{S}\suprho\) and analogous for \(\alpha\suprho\).
		
The cycles of \(\alpha\suprho\) are pairs of halfedges in \(\mathcal{S}\suprho\) which we will call \(\varrho\)-edges. 
\(\mathcal{M}\suprho\) is connected if the group freely generated by \(\pi\), \(\alpha\) and \(\alpha\suprho\) acts transitively on \(\bar{\mathcal{S}}\).
The cycles of \(\pi\) are the vertices and the cycles of \(\pi\circ\alpha\) are the faces of \(\mathcal{M}\suprho\).
The dual map is defined by changing the role of vertices and faces but not touching the \(\varrho\)-edges \(\mathcal{M}\suprho^\ast =(\mathcal{S}\sqcup\mathcal{S}\suprho,\alpha\circ\pi, \alpha,\alpha\suprho)\).

Deleting all the \(\varrho\)-edges one obtains an ordinary combinatorial map.
\end{definition}

Ribbon graphs can be obtained from combinatorial maps by replacing their edges with twisted or untwisted ribbon edges. The same holds true for combinatorial maps with \(\varrho\)-edges and \emph{ribbon graphs with \(\varrho\)-edges}.

\begin{definition}[Ribbon Graph with \(\varrho\)-Edges]\label{def: ribbon-rho}
A \emph{ribbon graph with \(\varrho\)-edges} \mbox{\(\mathcal{G}\suprho=( \mathcal{V}, \mathcal{E},\mathcal{E}\suprho )\)} is a ribbon graph \(\mathcal{G}=( \mathcal{V}, \mathcal{E})\), together with a set of line segments \(\mathcal{E}\suprho\), called \(\varrho\)-edges, such that their endpoints are connected to the corners of the ribbon graph. \(\mathcal{G}\suprho\) is called connected if it is connected as a topological space. 
The notions of faces, corners and edges of \(\mathcal{G}\suprho\) refer to the ones of the ribbon graph \(\mathcal{G}=\mathcal{G}\suprho-\mathcal{E}\suprho\), that is obtained by deleting the \(\varrho\)-edges.\footnotemark

This dual of a ribbon graph with $\varrho$-edges is obtained by performing the partial dual \cite{chmutov,embeddedgraphs,twistedduality} with respect to the ribbon edges.
This is the dual of the underlying ribbon graph obtained by ignoring the \(\varrho\)-edges, where we keep track of the corners to which the \(\varrho\)-edges are hooked. 
\end{definition}
\footnotetext{Ribbon graphs with \(\varrho\)-edges are-embedded in Nodal Surfaces, that is Riemann surfaces glued at marked points. The ribbon graphs encode closed topological surfaces and by identifying points that are connected by an \(\varrho\)-edge, a gluing prescription is given.}

\section{Matrix Models}\label{sec: matrix}

We first deal with the case of matrices (order $D=2$ tensors) in Def.~\ref{def: tensormodel}. In particular:
\begin{equation*}
	M^{a_1a_2}M^{b_1b_2} = (-1)^{\abs{1}+\abs{2}} M^{b_1b_2}M^{a_1a_2} \; ,
\end{equation*}
i.e. the models with mixed symmetry are fermionic.
We show that, for each ribbon graph in the perturbative expansion of the free energy and the two point function of the model, changing one (or both) of the symmetry group factors in the \(O(N_1)\otimes O(N_2)\)-model from \(O(N)\) to \(Sp(N)\) amounts to changing the sign accompanying the corresponding \(N\) factor. 

Complex random matrix models in the intermediate field representation have been studied in \cite{guraumatrix}.  The sign changes between the \mbox{\(O(N_1)\otimes O(N_2)\)} and the \mbox{\(Sp(N_1)\otimes Sp(N_2)\)} model has also been studied in \cite{goe-gse} by different methods. 

Denoting with superscript $T$ the transpose, the action of the real quartic graded matrix model writes:\footnote{In the pure \(Sp(N)\) case with \(g^1=g^2=\omega\), the convergence of \eqref{eq: supermatrixaction} is not clear, since the quadratic part has negative modes.}
\begin{align}\label{eq: supermatrixaction}
	S[M] 	=&
	\frac{1}{2} M^{a_1a_2}g^1_{a_1 b_1} g^2_{a_2 b_2} M^{b_1 b_2} + \frac{\kappa}{4} \Big(M^{a_1a_2}g^1_{a_1 b_1} g^2_{a_2 b_2} M^{b_1 b_2}\Big)^2 \nonumber \\
	&+ \frac{\lambda}{4}(-1)^{\abs{2}}\, ( M^{a^1_1a^1_2}g^1_{a^1_1 a^2_1}M^{a^2_1a^2_2} ) g^2_{a^1_2 b^1_2 } g^2_{ a^2_2 b^2_2 } 
	 ( M^{b^1_1b^1_2}g^1_{b^1_1 b^2_1}M^{b^2_1b^2_2} )  \; ,\nonumber \\
	=&\frac{1}{2}\tr\big[ Mg^2 M^T (g^1)^T \big]
	+ \frac{\kappa}{4}\big( \tr\big[ Mg^2 M^T (g^1)^T \big]\big)^2
	+ \frac{\lambda}{4}\,(-1)^{\abs 1}\, \tr\big[\big(Mg^2 M^T g^1\big)^2\big] \;,
\end{align}
where we note that trace is 
$\tr[A]=A^{a}_{\ a} = A^{ab} g_{ba}$.
This action is invariant under the transformation
\(	M\to O_1 X O_2^{T} \) with \(O_1\in \pmb{O}_{1}(N_1),\ O_2\in \pmb{O}_{2}(N_2) \). The three terms in Eq.~\eqref{eq: supermatrixaction}
can be represented by 2-colored graphs or alternatively ribbon graphs, as depicted in Fig.~\ref{fig: oomatrix}.

\begin{figure}[ht]
	\centering
	\includegraphics[scale=3]{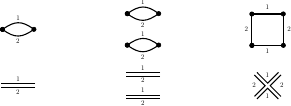}
	\caption{Graphical representation of the matrix model invariants up to quartic order.}
	\label{fig: oomatrix}
\end{figure}

Whereas all terms in the action of the \(O(N_1)\otimes O(N_2)\)-model are positive for \(\kappa, \lambda\in\mathds{R}_+ \), in the \(Sp(N_1)\otimes Sp(N_2)\)-model, this is only true for the $\kappa$ term: the quadratic and the $\lambda$ terms are in general indefinite.

\subsection{Intermediate Field Representation}

The intermediate field (Hubbard-Stratonovich) representation is obtained by introducing an auxiliary field per quartic interaction and integrating out the original field. To be precise we use:
\begin{align}\label{eq: ifreplacement}
	&\exp\Big\lbrace - \frac{\kappa}{4} \big(\tr\big[Mg^2 M^{T} (g^1)^T\big]\big)^{2} -\frac{\lambda}{4} (-1)^{\abs{1}}\tr\big[ Mg^2M^Tg^1Mg^2M^Tg^1 \big] \Big\rbrace
	\\ 
	&=
	\Big[ e^{ \frac{1}{2}\left(\frac{\del}{\del\sigma} P \frac{\del}{\del\sigma}\right)}
	e^{ \frac{1}{2}\frac{\del}{\del\varrho}\frac{\del}{\del\varrho}} \;
	\exp\Big\lbrace  - \imath \sqrt{\frac{\kappa}{2}}\, \tr\big[Mg^2 M^{T} (g^1)^T\big] \rho - \imath \sqrt{ \frac{\lambda}{2}}\tr\big[ Mg^2M^T (g^1 \sigma g^1)^T \big] \Big\rbrace
	\Big]_{\varrho, \sigma=0} \; , \nonumber
\end{align}
where \(\varrho\) is a real commuting (bosonic) scalar field and \(\sigma=(-1)^{\abs 1}\sigma^{T}\) is a (bosonic) real graded-symmetric matrix and we introduce the shorthand notation:
\begin{align*}
	\Big(\frac{\del}{\del\sigma} P\frac{\del}{\del\sigma}\Big) := \frac{\del}{\del\sigma^{ab}}P^{ab,dc}\frac{\del}{\del\sigma^{cd}} \;,\qquad	
P^{ab,dc} = \frac{1}{2} \left( g_1^{ad}g_1^{bc} + (-1)^{\abs 1}\, g_1^{ac}g_1^{bd} \right)  \;,
\end{align*}
with \(P\) the (anti-)symmetric projector, taking into account the symmetry of the \(\sigma\) field. Note that $g^1 Mg^2 M^T g^1$ has the same graded symmetry as $\sigma$.

Equation \eqref{eq: ifreplacement} is just a Gaussian integral over the intermediate fields $\varrho$ and $\sigma$. We favor here the notation of the Gaussian integral as a differential operator (see for instance
\cite{brydges2007lectures}) for two reasons. First, the Gaussian integral is formal in some cases (that is the covariance is not necessarily positively defined). Second, in this form the perturbative expansion of the Gaussian integral is straightforward.
In order to prove \eqref{eq: ifreplacement} we expand the exponentials and commute the sum and the derivatives:
\[
 \begin{split}
&\Big[ e^{ \frac{1}{2}\left(\frac{\del}{\del\sigma} P \frac{\del}{\del\sigma}\right)} e^{ \frac{1}{2}\frac{\del}{\del\varrho}\frac{\del}{\del\varrho}}
\sum_{n,p=0}^{\infty}  \tfrac{ (- \tfrac{\kappa}{2} )^n  (- \tfrac{\lambda}{2} )^p }{(2n)!(2p)!}  \big(\tr\big[Mg^2 M^{T} (g^1)^T\big] \rho\big)^{2n} \big(\tr\big[ Mg^2M^T (g^1 \sigma g^1)^T \big] \big)^{2p} 
	\Big]_{\varrho, \sigma=0}
\crcr
& =\Big[ \sum_{n,p=0}^{\infty} \frac{(- \tfrac{\kappa}{2} )^n  (- \tfrac{\lambda}{2} )^p } {2^n n! 2^p p! (2n)!(2p)!}
(\tfrac{\del}{\del\varrho}\tfrac{\del}{\del\varrho})^n
\big(\tr\big[Mg^2 M^{T} (g^1)^T\big] \rho\big)^{2n}
(\tfrac{\del}{\del\sigma}P \tfrac{\del}{\del\sigma} )^p \big(\tr\big[ Mg^2M^T (g^1 \sigma g^1)^T \big] \big)^{2p} 
	\Big]_{\varrho, \sigma=0}
\crcr
&	= \Big[ \sum_{n,p=0}^{\infty} \frac{(- \tfrac{\kappa}{4} )^n  (- \tfrac{\lambda}{4} )^p }{n!p!} \big(\tr\big[Mg^2 M^{T} (g^1)^T\big]\big)^{2n}
\big((-1)^{\abs{1}}\tr\big[ Mg^2M^Tg^1Mg^2M^Tg^1\big]\big)^{p} 
\Big]_{\varrho, \sigma=0} \; ,
 \end{split}
\]
where we used
\( [g^1 Mg^2M^T g^1]_{ab} g^{ad}g^{bc}
[g^1 Mg^2M^T g^1]_{cd}    = (-1)^{|1|}  \tr[g^1 Mg^2M^T g^1 Mg^2M^T g^1]\). The partition function now reads:
\[
\begin{split}
Z(\kappa,\lambda) = & \int [dM]\, e^{-\frac{1}{2}\tr [Mg^2M^T(g^1)^T]} \crcr
& \quad \Big[ e^{ \frac{1}{2}\left(\frac{\del}{\del\sigma}P \frac{\del}{\del\sigma}\right)}
e^{ \frac{1}{2}\frac{\del}{\del\varrho}\frac{\del}{\del\varrho}}
e^{  - \imath \sqrt{\frac{\kappa}{2}}\, \tr[Mg^2 M^{T} (g^1)^T] \rho - \imath \sqrt{ \frac{\lambda}{2}}\tr[ Mg^2M^T (g^1 \sigma g^1)^T ] }
\Big]_{\substack{\varrho,\sigma=0}} \; ,
\end{split}
\]
and all the terms containing $M$ can be collected in a quadratic form using $\tr(MAM^TB^T) = M(B \otimes A)M$. The exponent writes 
$ - \frac{1}{2} M(R^{-1}\otimes g^2)M $
with $R$ the \emph{resolvent operator}:
\[
\tensor{[ R^{-1}(\kappa,\lambda) ]}{^{a}_{b}}  =
	(1+\imath\sqrt{2\kappa} \, \varrho) \tensor{\delta}{^a_{b}} +\imath\sqrt{2\lambda} \; \tensor{(\sigma g^1)}{^{a}_{b}} \;. 
\]
As the resolvent and its inverse are operators we write them with a covariant and a contravariant index. These indices are lowered with $g^1$ and raised with 
$(g^1)^{-1}$.

Commuting the integral and derivative operators, the \(M\) integral is gau\ss ian and can be performed leading to the \emph{intermediate field representation}:
\begin{align}\label{eq: intermedrep2}
	Z(\kappa,\lambda) = \Big[ e^{ \frac{1}{2}\left(\frac{\del}{\del\sigma}P \frac{\del}{\del\sigma}\right)}	e^{ \frac{1}{2}\frac{\del}{\del\varrho}\frac{\del}{\del\varrho}}  \;
	e^{ (-1)^{\abs 1 + \abs 2}\frac{N_2}{2} \tr\ln R(\kappa,\lambda)}
	\Big]_{\varrho, \sigma=0}  \; .
\end{align}
\(N_2\) is now an explicit parameter in the integral, while \(N_1\) is hidden in the remaining traces.
The sign \((-1)^{\abs 1 + \abs 2}\) tracks the bosonic/fermionic character of the original matrix. The sign 
\((-1)^{\abs 1} \) tracks the symmetry of the intermediate matrix field \(\sigma\) (which agrees with that of \(g^1\)). Both indices of $\sigma$ have color $1$ which reflects the fact that $\sigma$ transforms in the (anti-)symmetric tensor representation of \(\pmb{O}_1(N_1)\) that is 
\(	\sigma \to O_1 \sigma O_1^{T} \) for \( O_1\in \pmb{O}_1(N_1) \).
This is to be contrasted with the field \(M\) which transforms in the tensor product of the fundamental representations of \( \pmb{O}_1(N_1)\) and \(\pmb{O}_2(N_2)\). 

\subsection{Perturbative Expansion}\label{sec: pert exp and graphical}

The perturbative expansion of $Z$ is obtained by Taylor expanding the interaction:
\[
	Z(\kappa,\lambda) = \Big[ e^{ \frac{1}{2}\left(\frac{\del}{\del\sigma}P \frac{\del}{\del\sigma}\right)}
	e^{ \frac{1}{2}\frac{\del}{\del\varrho}\frac{\del}{\del\varrho}}
	\sum_{V=0}^{\infty} \frac{1}{V!}\Big(\frac{(-1)^{\abs 1 +\abs 2} N_2}{2}\, \tr\ln{R}(\kappa,\lambda;\varrho,\sigma)\Big)^V \Big]_{\varrho, \sigma=0}  \;,
\]
and commuting the gau\ss ian integration with the sum.
Note that $R$ denotes the resolvent operator hence it naturally has a covariant and a contravariant index. Taking into account that:
\[
\ln R = - \sum_{p\ge 1} \frac{(-1)^{p+1}} {p} \left(  
  \imath\sqrt{2\kappa} \, \varrho +\imath\sqrt{2\lambda} \; \sigma g^1  \right)^p \;, \qquad R = \sum_{p\ge 0} (-1)^p   \left(  
  \imath\sqrt{2\kappa} \, \varrho +\imath\sqrt{2\lambda} \; \sigma g^1  \right)^p \;,
\]
the derivatives of the resolvent and its logarithm are:
\[
\begin{split}
& \frac{\del}{\del\sigma^{ab}} \tr\ln{R} = - \imath\sqrt{2\lambda} P_{ab,cd} {R}^{dc} \;,\qquad
\frac{\del}{\del\sigma^{ab}} {R}^{cd} = -\imath\sqrt{2\lambda} P_{ab,ef}{R}^{ce}{R}^{fd}  \;, \crcr
& \frac{\del}{\del\varrho} \tr\ln{R} = - \imath\sqrt{2\kappa}\, \tr[R] \;, \qquad
 \frac{\del}{\del\varrho} {R}^{ab} = - \imath\sqrt{2\kappa}\, {R}^{ac}g^1_{cd}{R}^{db} = - \imath\sqrt{2\kappa} (R^2)^{ab}\; ,
\end{split}
\]
where $R^2$ denotes the square of the operator $R$.

Each term in the perturbative series can be represented as a ribbon graph with \(\varrho\)-edges (see Sec.\,\ref{sec: ribbon}) as depicted in Fig.~\ref{fig: matrix-arrows}:
\begin{itemize}
 \item we represent each \(\tr\ln{R}\) as a disk with boundary oriented counterclockwise. 

 \item the \(\sigma\) derivatives create ribbon halfedges representing the free indices of \(R\). The first derivative acting on a vertex creates a halfedge and an \(R\) associated to the \emph{corner} (region between two consecutive halfedges) of the vertex. Subsequent derivatives split the existing corners creating new \(R\)'s.
 
 The indices $ab$ of the resolvent $R^{ab}$ are associated to the ends of the corner: $a$ for the source and $b$ for the target in the sense of the arrow.
  
\item the ribbon halfedges are connected into ribbon edges corresponding to the projectors $P$ inside the 
\(
\frac{\del}{\del\sigma^{ab}}P^{ab,dc}\frac{\del}{\del\sigma^{cd}} \) operators. The edges have an orientation represented by arrows on the strands bounding an edge: corresponding to $P^{ab,dc}$ we orient the strands from $(ab)$ to $(dc)$. Note that: 
\[
\begin{split}
& ( \partial_{\sigma^{ab}}R^{pq} ) P^{ab,dc} (\partial_{\sigma^{cd} } R^{ef} ) = (-2\lambda) 
 (R^{pa'} R^{b'q}) P_{ ab , a'b'}P^{ab,dc} 
 P_{ cd ,c'd' }(R^{ec'} R^{d'f}) \crcr
 &  = (-\lambda) (R^{pa'} R^{b'q}) 
   2 P_{a'b'}^{\ \ \ d c} P_{ cd ,c'd' } 
 (R^{ec'} R^{d'f}) =  (-\lambda) (R^{pa'} R^{b'q}) 
 \; 2 P_{a'b'; d'c'}
  (R^{ec'} R^{d'f}) \; .
 \end{split}
\]

The projector generates two terms. The first one 
$g^1_{a'd'}g^1_{b'c'}$, corresponds to an edge with parallel strands. The second one $(-1)^{|1|}g^1_{a'c'}g^1_{b'd'}$ corresponds to a twisted edge.

\item a $ \varrho$ derivative splits corner of a vertex also, but connects these two halves by a \(g^1\). We represent this by a new type of halfedge, called \(\varrho\)-halfedge. The \(\varrho\)-halfedge 
are connected into \(\varrho\)-edges corresponding to the  $\frac{\del}{\del\varrho}\frac{\del}{\del\varrho}$ operators. We represent these edges as dashed lines.

In the end all intermediate fields are set to zero thus the resolvents are set to the identity \(R=\mathds{1}\). A corner that has been split by \(\varrho\)-halfedges behaves like a single ordinary corner of a ribbon graph: for this reason corner will always refer to the region between two ribbon halfedges only. 

\end{itemize}

\begin{figure}[ht]
	\centering
	\begin{tikzpicture}[font=\scriptsize]
		\node {\includegraphics[scale=5]{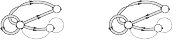}};
		\node at (0.04,0.05) {\(= \big((-1)^{\abs 1}\big)^5\)};
	\end{tikzpicture}
	\caption{\emph{Left:} A ribbon graph with \(\varrho\)-edges in the priori orientation: corners counter-clockwise and strands parallel. \emph{Right:} Coherent orientation of arrows along every face. Five arrows had to be reoriented.}
	\label{fig: matrix-arrows}
\end{figure}

Ignoring the twisting of the edges, a ribbon graph is a combinatorial map with \(\varrho\)-edges \(\mathcal{M}\suprho\). We denote by \(h_v\) the ribbon-halfedges of the vertex \(v\), each of which comes equipped with a pair of indices \((b_{h_v}, a_{h_v})\): $b$ is the target of an arrow and $a$ the source of another one. If \(h_v\) and \(h'_v\) are two neighboring ribbon-halfedges with \(h_v < h'_v\) in the cyclic order around \(v\), the corner between them is denoted by \((h_v, h'_v)\). An ribbon-edge connecting two vertices \(v,w\in\mathcal{M}\suprho\) is denoted by its halfedges \mbox{\(e=\{h_v,h_w\}\)}. Furthermore, we denote by \(V(\mathcal{M}\suprho)\), \(E(\mathcal{M}\suprho)\) and \(E\suprho(\mathcal{M}\suprho)\) the numbers of vertices, ribbon-edges and \(\varrho\)-edges of \(\mathcal{M}\suprho\) and by \(\deg v\) and \(\deg\suprho v\) the number of ribbon- and \(\varrho\)-halfedges at \(v\). The perturbative series writes as a sum over labeled combinatorial maps with \(\varrho\)-edges:
\[
\begin{split}
Z(\lambda) = \sum_{\mathcal{M}\suprho}
\frac{1}{V(\mathcal{M}\suprho)!\, 2^{V(\mathcal{M}\suprho)}} & \Big(\prod_{v\in\mathcal{M}\suprho} \frac{1}{\deg v!\deg\suprho v!}\Big)  \big((-1)^{\abs 1+\abs 2}N_2\big)^{V(\mathcal{M}\suprho)}
\left(-\lambda\right)^{E(\mathcal{M}\suprho)}
\left(-2\kappa\right)^{E\suprho(\mathcal{M}\suprho)}
\crcr 
& \Big( \prod_{v\in\mathcal{M}\suprho} \prod_{\substack{(h_v,h'_v)\\ \text{\tiny corner of}\ v}} g^1_{a_{h_v}b_{h'_v}}
\vphantom{\prod_{e=\{h_v,h_w\}}} \Big)
\Big( \prod_{\substack{e=\{h_v,h_w\} \\ \text{\tiny ribbon-edge}}} 2 P^{b_{h_v}a_{h_v},a_{h_w}b_{h_w}}
\vphantom{\prod_{e=\{h_v,h_w\}}} \Big) \; .
\end{split}
\]

We expand the two terms in each edge projector to sum over ribbon graphs with (twisted) edges and $ \varrho$-edge. This is because the 
amplitude depends on the twisting: every face 
(closed strand) of the ribbon graph contributes a factor of \(N_1\) because along a face an even number of \(g^1\)'s concatenate into a trace. However, it might be necessary to transpose several \(g^1\)'s in order to get this trace: we represent these transpositions by reversing the corresponding arrows along the edge strands and the corners of the ribbon graph, see 
Fig.~\ref{fig: matrix-arrows}. Overall we get (we explain the notation below):
\begin{align}\label{eq: intermedexpansion2}
Z(\lambda) & 
=\sum_{\mathcal{M}\suprho} \frac{1}{V(\mathcal{M}\suprho) !\, 2^{V(\mathcal{M}\suprho)}} \Big(\prod_{v\in\mathcal{M}\suprho} \frac{1}{\deg v!\deg\suprho v!}\Big) \big((-1)^{\abs 1 +\abs 2}N_2\big)^{V(\mathcal{M}\suprho)} \left(-\lambda\right)^{E(\mathcal{M}\suprho)} \left(-2\kappa\right)^{E\suprho(\mathcal{M}\suprho)} \crcr
& \qquad \qquad  \sum_{[\mathcal{G}\suprho]\in Orb_{\mathfrak{T}}(\mathcal{M}\suprho)}
\abs{Stab_{\mathfrak{T}}(\mathcal{G}\suprho)}
\left(N_1\right)^{F(\mathcal{G}\suprho)} \big((-1)^{\abs 1}\big)^{\#\text{transpositions} + \#\text{twists}}
\crcr
&= \sum_{\mathcal{M}\suprho}
\frac{\abs{Stab_{\mathfrak{T}}(\smash[t]{\mathcal{M}\suprho)}} }{V(\mathcal{M}\suprho)!\, 2^{V(\mathcal{M}\suprho)}}\Big(\prod_{v\in\mathcal{M}\suprho} \frac{1}{\deg v!\deg\suprho v!}\Big)
\sum_{[\mathcal{G}\suprho]\in Orb_{\mathfrak{T}}(\mathcal{M}\suprho)} \mathscr{A}(\mathcal{G}\suprho) \; ,
\end{align}
with $ \mathscr{A}(\mathcal{G}\suprho)$ the amplitude of the ribbon graph. Some notation has been introduced in this equation.
Because every ribbon-edge can be twisted or not there are naively \(2^E\) ribbon graphs, associated to \(\mathcal{M}\suprho\). But ribbon graphs are in fact equivalence classes, emphasized by \([\mathcal{G}\suprho]\). Two graphs are equivalent if one can be obtained from the other by successively reversing the order of halfedges of a subset of its vertices and---for each vertex separately---twisting all ribbon-edges connected to these vertices (edges with two twists are again untwisted). As proven in Appendix~\ref{ap: symmfac}, this degeneracy is counted by the cardinal of the stabilizer \(\abs{Stab_{\mathfrak{T}}(\smash[t]{\mathcal{G}\suprho})}\) of the action of a finite group\footnote{\(\mathfrak{T}\) is a subgroup of the so called ribbon group, introduced in \cite{twistedduality}, which also includes the operation of taking the partial dual of a ribbon graph with respect to a subset of its edges.} \(\mathfrak{T}\) whose elements twist a subset of the ribbon-edges of \(\mathcal{G}\suprho\) (\(\mathfrak{T}\) acts trivially on the \(\varrho\)-edges).
In the last step leading to (\ref{eq: intermedexpansion2}) we used the fact that, as
\(\mathfrak{T}\) is Abelian, $Stab_{\mathfrak{T}}(\mathcal{G})=Stab_{\mathfrak{T}}(\mathcal{M})$ for any $\mathcal{G}\in Orb_{\mathfrak{T}}(\mathcal{M})$, where $Orb_{\mathfrak{T}}(\mathcal{M})$ is the orbit of the combinatorial map $ \mathcal{M}$ under the action of $ \mathfrak{T}$. 

For example, the amplitude of the ribbon graph in Fig.~\ref{fig: matrix-arrows} is:
\begin{equation*}
	\underbrace{\left((-1)^{\abs 1 +\abs 2} N_2\right)^3}_{\text{vertices}}
	\underbrace{\left(-\lambda\right)^4}_{\text{edges}}
	\underbrace{\left(-2\kappa\right)^2}_{\varrho\text{-edges}}
	\underbrace{((-1)^{\abs 1})}_{\text{twists}}
	\underbrace{\left(N_1\right)}_{\text{faces}}
	\underbrace{\left((-1)^{\abs 1}\right)^5}_{\substack{\text{arrow} \\ \text{reorientations}}}
	= \left(-\lambda\right)^4\left(-2\kappa\right)^2
	\left((-1)^{\abs 1} N_1\right)
	\left((-1)^{\abs 2} N_2\right)^3
\end{equation*}

\paragraph{Amplitudes.}
The amplitude $\mathscr{A}(\mathcal{G}\suprho)$ can be further computed. 

In Proposition~\ref{thm: standardribbon1}, Appendix~\ref{ap: ribbon} we prove that any ribbon graph can be deformed into a connected sum of:
\begin{itemize}
 \item a graph without twisted edges embeddable into a closed orientable surface \(\Sigma_g\) of genus \(g\)
 \item either no, one or two graphs with a single twisted edge, embeddable into the projective plane \(\mathds{R}\mathds{P}^2\). 
\end{itemize}

This is the ribbon graph equivalent of the classification theorem of closed two dimensional surfaces. 
The crucial observation is that one can track the power of \(-1\) in the amplitude 
under these deformations. In Theorem~\ref{thm: ribbonamplitude} in Appendix~\ref{ap: ribbon} we show that 
for a ribbon graph with twists and which requires transpositions in order to coherently orient the faces:
\[
(-1)^{ V(\mathcal{G}) } \left( \prod_{e\in \mathcal{E}(\mathcal{G})}  (-1)^{\tau(e)} \right) 
\left( \prod_{f\in \mathcal{F}
(\mathcal{G} ) }
  (-1)^{ t(f) } \right) 
  =   (-1)^{ F(\mathcal{G}) } \;,
\]
where $\tau(e)=0$ if the edge $e$ is untwisted (straight) and $\tau(e)= 1 $ if the edge is twisted; $t(f) $ is the number of reorientations of arrows required to coherently orient the face $f$.

The graph $\mathcal{G}\suprho$ can be seen as the union of two ribbon graphs:
\begin{itemize}
 \item one ribbon graph has color $2$ and is trivial. It consisting in all the vertices 
 of $\mathcal{G}\suprho$, each bounded by one face and has no edges. The graph has no twists (as it has no edges) and all its faces  are coherently oriented.
 \item the second one is the graph of color $1$. It has twisted edges and some transpositions are need in order to coherently orient its faces.
\end{itemize}
The amplitude of a graph in \eqref{eq: intermedexpansion2} writes then:
\[
	\mathscr{A}(\mathcal{G}\suprho)= \left(-2\kappa\right)^{E\suprho(\mathcal{G}\suprho)} \left(-\lambda\right)^{E(\mathcal{G}\suprho)} \big( (-1)^{\abs 1} N_1 \big)^{F(\mathcal{G\suprho})}
	\big( (-1)^{\abs 2} N_2 \big)^{V(\mathcal{G}\suprho)} \; ,
\]
proving Eq.~\eqref{eq:ampliintro} in Theorem~\ref{thm: main} for $D=2$.

\paragraph{Combinatorial weights.}
The combinatorial weights in \eqref{eq: intermedexpansion2} simplify by gathering the labeled graphs corresponding to the same unlabeled ribbon graph with \(\varrho\)-edges:
\[
	Z(\kappa,\lambda)=\sum_{[\mathcal{G}\suprho]} \mathscr{W}(\mathcal{G}\suprho)\ \mathscr{A}(\mathcal{G}\suprho) \;,
\]
where the (positive) weights \(\mathscr{W}(\mathcal{G}\suprho)\) include all the combinatorial factors coming from partially resuming (\ref{eq: intermedexpansion2}) to a sum over equivalence classes.

\begin{remark}[Dual graphs]\label{cor: cancellations} The amplitudes of a ribbon graph and its dual are related by:
\[
\mathscr{A}(\mathcal{G}^\ast)  = \left(-\lambda\right)^{E(\mathcal{G}^\ast)}
\big( (-1)^{\abs 1} N_1 \big)^{F(\mathcal{G}^\ast)}
\big( (-1)^{\abs 2} N_2 \big)^{V(\mathcal{G}^\ast)} 
= 	\left(-\lambda\right)^{E(\mathcal{G})}
\big( (-1)^{\abs 2} N_2 \big)^{F(\mathcal{G})}
\big( (-1)^{\abs 1} N_1 \big)^{V(\mathcal{G})}  \;.
\]
We will see below that $\mathscr{W}(\mathcal{G}\suprho) = \mathscr{W}(\mathcal{G}\suprho^\ast) $.
In particular for the mixed \(O(N)\otimes Sp(N)\) models we get $\mathscr{A}(\mathcal{G}\suprho^\ast) = (-1)^{V(\mathcal{G}\suprho)+F(\mathcal{G}\suprho)} \mathscr{A}(\mathcal{G}\suprho)$ hence the contributions of a graph and its dual cancel if \mbox{\(V(\mathcal{G}\suprho)+F(\mathcal{G}\suprho)\)} is odd.

A heuristic argument why $\mathscr{W}(\mathcal{G}\suprho) = \mathscr{W}(\mathcal{G}\suprho^\ast) $ goes as follows.
We split the quartic interactions using an intermediate field \(\sigma_1\) with indices of color-1 coupling to \(M\) via
\( \propto M((g^1\sigma_1 g^1) \otimes g^2)M \). But one can choose the intermediate field to have indices of color-2 and  coupling
\( \propto M(g^1\otimes (g^2\sigma_2 g^2) )M\). The vertices now contribute factors of \(N_1\) and the faces \(N_2\). For any graph, contracting the intermediate field $\sigma_1$ and introducing $\sigma_2$ in the orthogonal channel passes to the dual graph.
\end{remark}

As the combinatorics is insensitive to the symmetry, we focus on the \(O(N_1)\otimes O(N_2)\) model. The connected two-point function of this model:
\[
	G_2(\kappa,\lambda) = Z^{-1}(\kappa,\lambda) \int d\mu[M] \;  \tr\big[ M\delta M^T \delta\big] \;,
\]
obeys a Dyson-Schwinger equation (DSE). Using:
\[
	0 =Z^{-1}(\kappa,\lambda)  \int [dM]\ \frac{\del}{\del M^{a_1a_2}} \Bigg( M^{a_1a_2}e^{- \frac{1}{2}\tr\big[ M\delta M^T\delta \big]
		- \frac{\kappa}{4}\left( \tr\big[ M\delta M^T\delta \big]\right)^2
		- \frac{\lambda}{4}\,\tr\big[\left( M\delta M^T\delta \right)^2\big]} \Bigg)  \;,
\]
we conclude that:
\begin{equation}\label{eq: DSE-G2}
	G_2(\kappa,\lambda) = N_1N_2 + \left( 4\kappa\,\del_\kappa +4\lambda\,\del_\lambda \right) \ln Z(\kappa,\lambda) \; .
\end{equation}

The free energy \(\ln Z\) expands in connected graphs. The derivative operator \(2\kappa\,\del_\kappa + 2\lambda\,\del_\lambda\) generates a rooting of the graph, that we get a sum over graphs with a marked \(\varrho\)- or ribbon-halfedge. Rooted graphs are simpler to count. In Proposition~\ref{prom:combitwopoint}, Appendix~\ref{ap: symmfac} we show that the perturbative series of \(G_2\) writes as:
\begin{align}\label{eq: G2}
	G_2(\kappa,\lambda) = \sum_{[\mathcal{G}\suprho]\text{ connected, rooted}} \frac{1}{2^{C(\mathcal{G}\suprho-\mathcal{E}\suprho)-1}}\ \mathscr{A}(\mathcal{G}\suprho) \; ,
\end{align}
where  $\mathcal{G}\suprho-\mathcal{E}\suprho$ is the graph obtained from $ \mathcal{G}\suprho$ by deleting all the \(\varrho\)-edges and \(C(\mathcal{G}\suprho-\mathcal{E}\suprho) \) denotes the number of its connected components. Note that even if \(\mathcal{G}\suprho\) is connected as a ribbon graph with \(\varrho\)-edges, the graph \(\mathcal{G}\suprho - \mathcal{E}\suprho\) may be disconnected.

It is well known that rooting trivializes the symmetry factors in ordinary combinatorial maps. What is non trivial is that it also simplifies the factor
$2^{-V}  \abs{Stab_{\mathfrak{T}}(\smash[t]{\mathcal{M}\suprho)}} 
 $ in \eqref{eq: intermedexpansion2} to
$2^{- ( C(\mathcal{G}\suprho-\mathcal{E}\suprho)-1)} $. The combinatorial weight in \eqref{eq: G2} is manifestly invariant under duality. Rooted ribbon graphs can be embedded into two dimensional surfaces with one boundary component corresponding to the rooted face.

The Dyson-Schwinger equation for the connected two-point function can be integrated in the sense of formal power series to yield the perturbative expansion of the free energy:
\[
	\ln Z(\kappa,\lambda) =  \sum_{\substack{[\mathcal{G}\suprho]\text{ connected, rooted,}\\ E\text{ or } E\suprho > 0}} \frac{1}{2^{C(\mathcal{G}\suprho-\mathcal{E}\suprho)+1} (E+E\suprho)}\ \mathscr{A}(\mathcal{G}\suprho)  \;,
\]
where $E$ and $ E\suprho$ denote the number of ribbon edges respectively $\varrho$-edges in 
$\mathcal{G}\suprho$. The integration does not spoil the symmetry under duality because the powers of the coupling constants in the amplitude only depend on the numbers of edges. Finally, the partition function \(Z\) can then be obtained by exponentiating \(\ln Z\).

%%%%%%%%%%%%%%%%%%%%%%%%%%%%%%%%%%%%%%%%%%%%
\section{Tensor Models}\label{sec: tensor}

The case $D\ge 3$ is treated similarly to the case $D=2$. However, as the number of available quartic invariants grows exponentially with $D$ (recall Lemma~\ref{lem:invcount}), the number of intermediate fields grows also. Moreover, the intermediate fields are matrices with different dimensions. At most one of the $N_c$ factors can be rendered explicit as a parameter in the integral, and one must rely on graphical methods to track the other $N_c$'s.

In $D\ge 3$ the perturbative expansion is an expansion in \emph{colored multi-ribbon graphs} which can be understood intuitively as stacked ribbon graphs. The \(N_c\) to \(-N_c\) duality holds graph by graph because only the combination \((-1)^{\abs c}N_c\) appears in the amplitude of a graph. 

If one aims to study tensor (or matrix) models with a sensible large $N$ limit one needs to rescale the coupling constants with powers of \(N\). Care has to be taken if one wants to preserve the manifest \(N\) to \(-N\) duality: this can sometimes require a flip of the sign of some of the coupling constants.

\subsection{Intermediate Field Representation}\label{sec: IFRtensor}

Complex random tensor models in the intermediate field representation were, for example, studied in \cite{Gurau:2013pca,quartic}. We introduce an intermediate field per quartic interaction. 
For $\mathcal{C} $ a subset of the colors we denote $\Sigma^{\mathcal{C}}$ the set of \(N^{\abs{\mathcal{C}}}\times N^{\abs{\mathcal{C}}}\) matrices (where $\abs{\mathcal{C}}$ denotes the cardinal of $\mathcal{C}$) taken to be symmetric if the sum of the parities of the indices in $\mathcal{C}$ is even and anti-symmetric if it is odd:
\[
\begin{split}
& \Sigma^{\mathcal{C}} =  
\begin{cases}
Sym^2\left(\bigotimes_{c\in\mathcal{C}}\mathscr{H}_c\right), \quad &\sum_{c\in\mathcal{C}}\abs c = 0 \mod 2 \\
\Lambda^2\left(\bigotimes_{c\in\mathcal{C}}\mathscr{H}_c\right), \quad &\sum_{c\in\mathcal{C}}\abs c = 1 \mod 2
\end{cases}  \;, \crcr
& \sigma^{a_{\mathcal{C}}^1 a_{\mathcal{C}}^2} = (-1)^{\sum_{c\in\mathcal{C}}\abs c} \; \sigma^{a_{\mathcal{C}}^2 a_{\mathcal{C}}^1} \; , \qquad \forall \sigma \in \Sigma^{\mathcal{C}}  \;,
\end{split}
\]
where we recall that $a_{ \mathcal{C}} $ denotes a multi index $(a_c | c\in \mathcal{C} )$.
Note that \(\sigma\) is always commuting (bosonic) because \(\bigotimes_{c\in\mathcal{C}}\mathscr{H}_c\) is either purely odd or even. For \(\mathcal{C}=\emptyset\), set \(\sigma\in\Lambda_\infty^0\) the commuting scalars. 
Since \(\sigma\) are (anti-)symmetric under exchange of their two multi-indices, it is useful to introduce the (anti-)symmetric projector:
\begin{equation}\label{eq: projector}
P_{\mathcal{C}} : \Sigma^{\mathcal{C}} \to \Sigma^{\mathcal{C}} \;,\qquad 
\tensor{(P_{\mathcal{C}} )}{^{a^1_{\mathcal{C}} a^2_{\mathcal{C}}}_{,b^1_{\mathcal{C}} b^2_{\mathcal{C}}} }
:= \frac{1}{2} \Big(
\prod_{c\in\mathcal{C}} \tensor{\delta}{^{a^1_c}_{b^1_c}} \tensor{\delta}{^{a^2_c}_{ b^2_c} }  + (-1)^{\sum_{c\in\mathcal{C}}\abs c} \prod_{c\in\mathcal{C}} \tensor{\delta}{^{a^1_c}_{b^2_c}} \tensor{\delta}{^{a^2_c}_{ b^1_c} }
\Big) \; , 
\end{equation}
and $ P_{\mathcal{C}} $  is the identity for 
\(\mathcal{C}=\emptyset\). The projector is such that:
\[
\tensor{ (P_{\mathcal{C}} )}{^{a^1_{\mathcal{C}} a^2_{\mathcal{C}}}_{, b^1_{\mathcal{C}} b^2_{\mathcal{C}}} } 
= (-1)^{\sum_{c\in\mathcal{C}}\abs c} \tensor{(P_{\mathcal{C}} )}{^{a^2_{\mathcal{C}} a^1_{\mathcal{C}}}_{,b^1_{\mathcal{C}} b^2_{\mathcal{C}}} } = (-1)^{\sum_{c\in\mathcal{C}}\abs c} \tensor{(P_{\mathcal{C}} )}{^{a^1_{\mathcal{C}} a^2_{\mathcal{C}}}_{,b^2_{\mathcal{C}} b^1_{\mathcal{C}}} } \;; \qquad
\smash[t]{\frac{\partial \sigma^{a^1_{\mathcal{C}} a^2_{\mathcal{C}}}}{\partial \sigma^{ b^1_{\mathcal{C}} b^2_{\mathcal{C}} }}}
= \tensor{ (P_{\mathcal{C}} )}{^{a^1_{\mathcal{C}} a^2_{\mathcal{C}}}_{, b^1_{\mathcal{C}} b^2_{\mathcal{C}}} } \; .
\]

\begin{lem}[Hubbard Stratonovich Transformation\label{lem: hubbardstratonovich}]
Every quartic tensor invariant \(I(T)\) in 
Eq.~\eqref{eq: quarticinv}:
\[
\begin{split}
I(T) & = \sum_{a_{\mathcal{D}}^1, a_{\mathcal{D}}^2, b_{\mathcal{D}}^1, b_{\mathcal{D}}^2}
\Big(T^{a_{\mathcal{D}}^1}T^{a_{\mathcal{D}}^2} \prod_{c\in\mathcal{D}\backslash\mathcal{C}} g^c_{a_c^1 a_c^2} \Big)
\Big(T^{b_{\mathcal{D}}^1}T^{b_{\mathcal{D}}^2} \prod_{c\in\mathcal{D}\backslash\mathcal{C}} g^c_{b_c^1 b_c^2} \Big) \Big(\prod_{c\in\mathcal{C}} \big(-\sgn(\pi^c)\big)^{\abs c} g^c_{a_c^1 b_c^{\pi^c(1)}} g^c_{a_c^2 b_c^{\pi^c(2)}}\Big) \\
& = \sum_{a_{\mathcal{C}}^1, a_{\mathcal{C}}^2, b_{\mathcal{C}}^1, b_{\mathcal{C}}^2}
\big(g^{\otimes\mathcal{D}\backslash\mathcal{C}}(T,T)\big)^{a_{\mathcal{C}}^1 a_{\mathcal{C}}^2}
K_{a_{\mathcal{C}}^1 a_{\mathcal{C}}^2, b_{\mathcal{C}}^1 b_{\mathcal{C}}^2}
\big(g^{\otimes\mathcal{D}\backslash\mathcal{C}}(T,T)\big)^{b_{\mathcal{C}}^1 b_{\mathcal{C}}^2} \; ,
\end{split}
\]
with $\pi^c$ fixed permutations of two elements
can (formally) be expressed as a Gau\ss ian integral:
\begin{equation*}
e^{-\frac{\lambda}{4}I(T)} = \Big[e^{\frac{1}{2} (\frac{\del}{\del\sigma} , P_{\mathcal{C}}K\frac{\del}{\del\sigma})} \ e^{-\imath\sqrt{\frac{\lambda}{2}} (\sigma, g^{\otimes\mathcal{D}\backslash\mathcal{C}}(T,T) ) } \Big]_{\sigma=0} ,
\end{equation*}
with 
\(\left(P_{\mathcal{C}}K\right)^{a^1_{\mathcal{C}} a^2_{\mathcal{C}}, b^1_{\mathcal{C}} b^2_{\mathcal{C}}} = \tensor{(P_{\mathcal{C}})}{^{a^1_{\mathcal{C}} a^2_{\mathcal{C}}}_{,c^1_{\mathcal{C}} c^2_{\mathcal{C}}} } K^{ c^1_{\mathcal{C}} c^2_{\mathcal{C}}, b^1_{\mathcal{C}} b^2_{\mathcal{C}}}\) and \(( A,B) = g^{a_1a_2} g^{b_1b_2}A_{a_1b_1}B_{a_2b_2}\), the standard pairing between a vector space and its dual.

\end{lem}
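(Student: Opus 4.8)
\quad
The plan is to repeat the Hubbard--Stratonovich manipulation of the matrix case, cf.\ Eq.~\eqref{eq: ifreplacement}, now with a multi-index intermediate field $\sigma\in\Sigma^{\mathcal{C}}$. Write $B:=g^{\otimes\mathcal{D}\backslash\mathcal{C}}(T,T)$, so that by \eqref{eq: quarticinv} one has $I(T)=B^{a^1_{\mathcal{C}}a^2_{\mathcal{C}}}K_{a^1_{\mathcal{C}}a^2_{\mathcal{C}},b^1_{\mathcal{C}}b^2_{\mathcal{C}}}B^{b^1_{\mathcal{C}}b^2_{\mathcal{C}}}$. First I would expand both exponentials on the right hand side and set $\sigma=0$; only the terms carrying as many factors $(\sigma,B)$ as $\sigma$-derivatives survive, so that
\[
\Big[e^{\frac12(\frac{\del}{\del\sigma},P_{\mathcal{C}}K\frac{\del}{\del\sigma})}\,e^{-\imath\sqrt{\frac{\lambda}{2}}(\sigma,B)}\Big]_{\sigma=0}
=\sum_{n\ge0}\frac{(-1)^{n}(\lambda/2)^{n}}{(2n)!}\Big[e^{\frac12(\frac{\del}{\del\sigma},P_{\mathcal{C}}K\frac{\del}{\del\sigma})}(\sigma,B)^{2n}\Big]_{\sigma=0}\;.
\]
Since $\partial\sigma/\partial\sigma=P_{\mathcal{C}}$ by \eqref{eq: projector}, the bracket on the right is the $2n$-th moment of the centered (formal) Gau\ss ian on $\Sigma^{\mathcal{C}}$ with covariance $P_{\mathcal{C}}K$; by Wick's theorem it equals $(2n-1)!!$ times the $n$-th power of the elementary contraction $\big[\tfrac12(\frac{\del}{\del\sigma},P_{\mathcal{C}}K\frac{\del}{\del\sigma})(\sigma,B)^2\big]_{\sigma=0}$.

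The second step is to identify this elementary contraction with $I(T)$. The key input is that $B=g^{\otimes\mathcal{D}\backslash\mathcal{C}}(T,T)$ already lies in $\Sigma^{\mathcal{C}}$, hence is fixed by $P_{\mathcal{C}}$: exchanging $a^1_{\mathcal{D}}\leftrightarrow a^2_{\mathcal{D}}$ in $T^{a^1_{\mathcal{D}}}T^{a^2_{\mathcal{D}}}\prod_{c\in\mathcal{D}\backslash\mathcal{C}}g^c_{a^1_ca^2_c}$ costs $(-1)^{\sum_{c\in\mathcal{D}}\abs c}$ from reordering the two Gra\ss mann tensors (each of parity $\sum_{c\in\mathcal{D}}\abs c$) and $(-1)^{\sum_{c\in\mathcal{D}\backslash\mathcal{C}}\abs c}$ from transposing the forms $g^c$, $c\in\mathcal{D}\backslash\mathcal{C}$, and $\sum_{c\in\mathcal{D}}\abs c+\sum_{c\in\mathcal{D}\backslash\mathcal{C}}\abs c\equiv\sum_{c\in\mathcal{C}}\abs c\bmod 2$, which is precisely the (anti)symmetry defining $\Sigma^{\mathcal{C}}$. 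Using $P_{\mathcal{C}}B=B$ in the contraction then yields $B^{a^1_{\mathcal{C}}a^2_{\mathcal{C}}}K_{a^1_{\mathcal{C}}a^2_{\mathcal{C}},b^1_{\mathcal{C}}b^2_{\mathcal{C}}}B^{b^1_{\mathcal{C}}b^2_{\mathcal{C}}}=I(T)$. Substituting back and using $(2n-1)!!=(2n)!/(2^{n}n!)$,
\[
\Big[e^{\frac12(\frac{\del}{\del\sigma},P_{\mathcal{C}}K\frac{\del}{\del\sigma})}\,e^{-\imath\sqrt{\frac{\lambda}{2}}(\sigma,B)}\Big]_{\sigma=0}
=\sum_{n\ge0}\frac{(-1)^{n}(\lambda/2)^{n}}{2^{n}n!}\,I(T)^{n}
=e^{-\frac{\lambda}{4}I(T)}\;,
\]
which is the assertion.

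I expect the only delicate point to be the sign count in the second step establishing $g^{\otimes\mathcal{D}\backslash\mathcal{C}}(T,T)\in\Sigma^{\mathcal{C}}$ — the interplay between the Gra\ss mann parity $\sum_{c\in\mathcal{D}}\abs c$ of $T$ and the transposition signs $(-1)^{\abs c}$ of the forms $g^c$ — everything else being the standard Hubbard--Stratonovich bookkeeping already carried out for $D=2$ in Sec.~\ref{sec: matrix}. One should also record that $\sigma$ is always bosonic, $\bigotimes_{c\in\mathcal{C}}\mathscr{H}_c$ being homogeneous (purely even or purely odd), and that $(\sigma,g^{\otimes\mathcal{D}\backslash\mathcal{C}}(T,T))$ is an even element, so that the formal Gau\ss ian integral above makes sense and the expansion is legitimate even when $T$ is Gra\ss mann valued.
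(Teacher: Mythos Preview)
Your proof is correct and follows essentially the same route as the paper: expand both exponentials, recognise the Gau\ss ian moment as $(2n-1)!!$ times the elementary contraction, and use the graded symmetry of $g^{\otimes\mathcal{D}\backslash\mathcal{C}}(T,T)$ to remove the projectors $P_{\mathcal{C}}$ coming from $\partial\sigma/\partial\sigma$. The only minor difference is that the paper establishes $P_{\mathcal{C}}K\,g^{\otimes\mathcal{D}\backslash\mathcal{C}}(T,T)=K\,g^{\otimes\mathcal{D}\backslash\mathcal{C}}(T,T)$ (together with $K^{2}=1$ and $[P_{\mathcal{C}},K]=0$), whereas you show directly that $B=g^{\otimes\mathcal{D}\backslash\mathcal{C}}(T,T)\in\Sigma^{\mathcal{C}}$; your version is slightly more economical since $P_{\mathcal{C}}B=B$ already suffices to collapse $B\cdot P_{\mathcal{C}}K\cdot B$ to $B\cdot K\cdot B=I(T)$ without invoking the commutation of $K$ with $P_{\mathcal{C}}$.
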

\begin{proof}
The indices of color $c$ of the kernel $K$ are connected as:
\[
 K_{a_{\mathcal{C}}^1 a_{\mathcal{C}}^2, b_{\mathcal{C}}^1 b_{\mathcal{C}}^2}
  \sim \begin{cases}
        (-1)^{|c|} g^c_{a_c^1 b_c^1} g^c_{a_c^2 b_c^2}  \;,\qquad &\pi^c = (1)(2) \\
        g^c_{a_c^1 b_c^2} g^c_{a_c^2 b_c^1}  \;,\qquad & \pi^c = (12) 
       \end{cases} \; ,
\]
hence, as operator, $K^2 = 1$ and $P_{\mathcal{C}}K = KP_{\mathcal{C}} $. Taking into account that 
$g^c_{b_c^1b_c^2} = (-1)^{|c|}g^c_{b_c^2b_c^1}$, and $ \tensor{K}{^{ a^1_{\mathcal{C}} a^2_{\mathcal{C}}}_{, b^1_{\mathcal{C}} b^2_{\mathcal{C}} }  } = \tensor{K}{^{ a^2_{\mathcal{C}} a^1_{\mathcal{C}}}_{, b^2_{\mathcal{C}} b^1_{\mathcal{C}} }  }$, 
and $T^{b_{\mathcal{D}}^1}T^{b_{\mathcal{D}}^2} = (-1)^{\sum_{c\in \mathcal{D}} |c| }T^{b_{\mathcal{D}}^2} T^{b_{\mathcal{D}}^1} $
we have:
\[
 \tensor{K}{^{ a^1_{\mathcal{C}} a^2_{\mathcal{C}}}_{, b^1_{\mathcal{C}} b^2_{\mathcal{C}} }  } 
 \Big(T^{b_{\mathcal{D}}^1}T^{b_{\mathcal{D}}^2} \prod_{c\in\mathcal{D}\backslash\mathcal{C}} g^c_{b_c^1 b_c^2} \Big) =  (-1)^{\sum_{c\in \mathcal{C}} |c| }\tensor{K}{^{ a^2_{\mathcal{C}} a^1_{\mathcal{C}}}_{, b^2_{\mathcal{C}} b^1_{\mathcal{C}} }  } 
 \Big(T^{b_{\mathcal{D}}^2}T^{b_{\mathcal{D}}^1} \prod_{c\in\mathcal{D}\backslash\mathcal{C}} g^c_{b_c^2 b_c^1} \Big) \; ,
\]
that is
$P_{\mathcal{C}}K g^{\otimes\mathcal{D}\backslash\mathcal{C}}(T,T)   = Kg^{\otimes\mathcal{D}\backslash\mathcal{C}}(T,T)$ hence
$ K g^{\otimes\mathcal{D}\backslash\mathcal{C}}(T,T)$ is a matrix with the same symmetry type as $\sigma$. It follows that:
\[
 \big( g^{\otimes\mathcal{D}\setminus\mathcal{C}}(T,T) ,
( P_{\mathcal{C} }K P_{\mathcal{C} }) g^{\otimes\mathcal{D}\setminus\mathcal{C}}(T,T) \big)  = \big( g^{\otimes\mathcal{D}\setminus\mathcal{C}}(T,T), K g^{\otimes\mathcal{D}\setminus\mathcal{C}}(T,T) \big) \;,
\]
hence expanding the exponentials and commuting the sum and the derivative operator we get:
\begin{multline*}
\Big[ \sum_{n=0}^{\infty} \frac{(-\frac{ \lambda}{4})^n} {n!(2n)!} \;
\Big(\frac{\del}{\del\sigma },P_{\mathcal{C} }K \frac{\del}{\del\sigma}\Big)^n
\big(\sigma , g^{\otimes\mathcal{D}\backslash\mathcal{C} }(T,T) \big)^{2n}
\Big]_{\sigma=0}
= \Big[ \sum_{n=0}^{\infty} \frac{
(- \frac{\lambda}{4})^n  }{n!}  \crcr
 \Big(g^{\otimes\mathcal{D}\setminus\mathcal{C}}(T,T)P_{\mathcal{C} })_{a^1_{\mathcal{C}} a^2_{\mathcal{C}}} 
\left(P_{\mathcal{C} }K\right)^{a^1_{\mathcal{C}} a^2_{\mathcal{C}}, b^1_{\mathcal{C}} b^2_{\mathcal{C}}} (P_{\mathcal{C}}g^{\otimes\mathcal{D}\setminus\mathcal{C}}(T,T))_{b^1_{\mathcal{C}} b^2_{\mathcal{C}}}\Big)^{n} \Big]_{\sigma=0} 
= \sum_{n=0}^{\infty} \frac{1}{n!} \Big(- \frac{\lambda}{4} I(T)\Big)^n  \;.
	\end{multline*}
\end{proof}

When dealing with several quartc invariants we will label them $q$ and the corresponding subset of colors \(\mathcal{C}_q\). In order to simplify the notation we sometimes drop this subscript. Using the intermediate fields the partition function of the graded quadratic tensor model of Def.~\ref{def: tensormodel} becomes:
\[
Z(\lambda) = \int \mu [T] = \int [dT] \; e^{
-\frac{1}{2} g^{\otimes\mathcal{D}}(T,T)} \Big[ e^{\sum_{q\in\mathscr{Q}} \frac{1}{2}  (\frac{\del}{\del\sigma_q},P_{\mathcal{C}_q}K_q\frac{\del}{\del\sigma_q})}
	\cdot e^{ - \sum_{q} \imath \sqrt{\frac{\lambda_q}{2}} (\sigma_q,  g^{\otimes\mathcal{D}\setminus\mathcal{C}_q}(T,T) )}
	\Big]_{\sigma_q=0} \; ,
\]
where we denoted the coupling constants generically by $\lambda$.
We denote \(1^{\otimes\mathcal{C}}\) the identity operator acting on  \(\bigotimes_{c\in\mathcal{C}}\mathscr{H}_c \). We define the operator acting on 
$\bigotimes\nolimits_{c=1}^D \mathscr{H}_c$:
\[
A(\lambda) = \sum_{q\in\mathscr{Q}} \imath \sqrt{2\lambda_q}\ 1^{\otimes\mathcal{D}\setminus\mathcal{C}_q}\otimes \sigma_q 
 \; ,\quad
\tensor{A}{^{a^1_{\mathcal{D}}}_{a^2_{\mathcal{D}}}}
= \sum_{q\in\mathscr{Q}} \imath \sqrt{2\lambda_q}\ \Big(\prod_{c\in\mathcal{D}\setminus\mathcal{C}_q} \tensor{\delta}{^{a^1_c}_{a^2_c} } \Big) \;  
\tensor{(\sigma_q)}{^{ a^2_{\mathcal{C}_q } }_{a^1_{\mathcal{C}_q } } }   \;,
\]
and perform the gau\ss ian integral over \(T\) to obtain the partition function in the \emph{intermediate field representation}:
\begin{equation}\label{eq: tensorintermedrep}
	Z(\lambda) =\Big[ e^{ \sum_{q\in\mathscr{Q}} \frac{1}{2}(\frac{\del}{\del\sigma_q},P_{\mathcal{C}_q}K_q\frac{\del}{\del\sigma_q})}\
	e^{	- \frac{(-1)^{\sum_{c\in \mathcal{D}}\abs c}}{2} \tr\ln\left(1^{\otimes\mathcal{D}}+ A(\lambda) \right)}
    \Big]_{\sigma_q=0}
	.
\end{equation}

This is the generalization of Eq.~\eqref{eq: intermedrep2} to $D>3$. The \emph{resolvent operator} for tensors is 
$R = (1^{\otimes\mathcal{D}}+ A(\lambda) )^{-1}$.
The field \(\varrho\) we encountered in $D=2$ corresponds to the unique disconnected quartic invariant \(\mathcal{C}_q=\emptyset\). For now we keep all factors \(N_c\) in the trace: the trace over the color-1 space can be performed explicitly because \(1\notin\mathcal{C}_q \) for all  \(q\in\mathscr{Q}\). In strict generalization of the matrix cases, the sign 
\((-1)^{\sum_{c\in \mathcal{D}}\abs c}\) accounts for 
fermionic/bosonic nature of the tensor field $T$.  Each intermediate field $\sigma_q$ has its own symmetry captured by the sign \((-1)^{\sum_{c\in\mathcal{C}_q}\abs c}\).
The effect of the Hubbard Stratonovich transformation on the Feynman diagrams is depicted Schematically in Fig.~\ref{fig: hubbardstratonovich}.

\begin{figure}[ht]
	\centering
	\begin{tikzpicture}[scale=0.875,font=\small]
		\node {\includegraphics[scale=3.5]{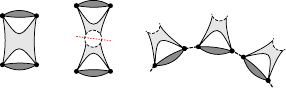}};
		\node at (-3.4,0.15) {\(\rightarrow\)};
		\node at (-.5,0.15) {\(\rightarrow\)};
		\node at (0.4,.73) {\(\sigma_{q_1}\)};
		\node at (3.14,1.22) {\(\sigma_{q_2}\)};
		\node at (5.63,-.02) {\(\sigma_{q_3}\)};
	\end{tikzpicture}
	\caption{The Hubbard-Stratonovich transformation.}
	\label{fig: hubbardstratonovich}
\end{figure}

\subsection{Perturbative Expansion}\label{sec: tensor-if}

\begin{figure}[ht]
	\centering
	\begin{tikzpicture}[scale=0.9,font=\scriptsize]
		\node {\includegraphics[scale=4.5]{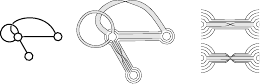}};
		\node at (-5,1.5) {\(\{2\}\)};
		\node at (-4.7,-.05) {\(\{2,3\}\)};
		\node at (-5.95,-.5) {\(\{1,2,3\}\)};
		\node at (-6.9,.5) {\(\{1\}\)};
	\end{tikzpicture}
	\caption{\emph{Left:} Edge multicolored combinatorial map with \(D=3\). Each edge carries a subset of colors. \emph{Center:} A multi-ribbon graph obtained from this multicolored combinatorial map. \emph{Right:} Multi-ribbon edge corresponding to the quartic invariant of Fig.~\ref{fig: quartic inv} in its untwisted (\emph{top}) and twisted (\emph{down}) state.}
	\label{fig: multiribbon}
\end{figure}

Because of the tensor products, the Feynman graphs of the perturbative expansion of \eqref{eq: tensorintermedrep} are \emph{\(D\)-colored multi-ribbon graphs}. Intuitively they can be understood as \(D\) stacked ribbon graphs. 
Ribbon graphs are obtained from combinatorial maps by replacing their edges by ribbon edges
which can then be twisted or not. Similarly,
\(D\)-colored multi-ribbon graph are obtained from
\emph{edge multicolored combinatorial maps}. These, in turn, are combinatorial maps with edges labeled by subsets of colors \(\mathcal{C}\subset\mathcal{D}\).

\begin{definition}[Edge multicolored Combinatorial Map \cite{gurau}]\label{def: c-map-color}
An \emph{edge multicolored combinatorial map} \(\mathcal{M}\), depicted in Fig.~\ref{fig: multiribbon} on the left,
is composed of:
\begin{itemize}
\item a finite set \(\mathcal{S}\) that is the disjoint union of sets \(\mathcal{S}^{\mathcal{C}}\) of halfedges of the colors \(\mathcal{C}\in\mathcal{D}\), all of even cardinality
$ \mathcal{S} = \bigsqcup_{\mathcal{C}\subset\mathcal{D}} \mathcal{S}^{\mathcal{C}} $.

\item a permutation \(\pi\) on \(\mathcal{S}\).

\item for every \(\mathcal{C}\in\mathcal{D}\) an involution \(\alpha^{\mathcal{C}}\) on \(\mathcal{S}^{\mathcal{C}}\) with no fixed points. The involution \(\alpha^{\mathcal{C}}\) can be extended to the whole of \(\mathcal{S}\) by setting \(\alpha^{\mathcal{C}}(h)=h\ \forall h\in\mathcal{S}\backslash\mathcal{S}^{\mathcal{C}}\).
\end{itemize}

The set of cycles of \(\pi\) is the set of vertices of the map \(\mathcal{V}(\mathcal{M})\). The set of cycles of \(\alpha^{\mathcal{C}}\) is the set of edges of colors \(\mathcal{C}\), \(\mathcal{E}_{\mathcal{C}}(\mathcal{M})\), and \(\mathcal{E}(\mathcal{M})=\bigcup_{\mathcal{C}\subset\mathcal{D}} \mathcal{E}_{\mathcal{C}}(\mathcal{M})\) is the set of all the edges of the map. The cardinalities of these sets are denoted by \(V(\mathcal{M}), E_{\mathcal{C}}(\mathcal{M}),E(\mathcal{M})\) respectively.

An edge multicolored combinatorial map is connected iff the group freely generated by \(\pi\) and the \(\alpha^{\mathcal{C}}\) acts transitively on \(\mathcal{S}\).
\end{definition}

The following definition of multi-ribbon graphs is a generalization of signed rotation systems \cite{embeddedgraphs} which are equivalent to ribbon graphs.

\begin{definition}[\({D}\)-colored Multi-Ribbon Graph]\label{def: multiribbon}
A \emph{\(D\)-colored multi-ribbon graph} 
\(\mathds{G}\), depicted in Fig.~\ref{fig: multiribbon} in the center,
is an edge multicolored combinatorial map \(\mathcal{M}\) equipped with $|\mathcal{C}|$  binary variables taking values $0$ or $1$ on each edge with colors $\mathcal{C}$ (for each edge we have either a $0$ or a $1$ for each of its colors):
\[
\text{for}\ e\in\mathcal{E}_{\mathcal{C}}(\mathcal{M}) \;, \qquad 
e\mapsto\tau(e) = \Big\{ \tau^c(e) \in\{0,1\}|  c\in \mathcal{C} \Big\} \;.
\]

These edges are called (twisted) multi-ribbon edges. Twisting a multi-ribbon edge \(e\) amounts to flipping all the variables \(\tau(e)\), that is $\tau^c(e) \to  \tau^c(e) + 1 \mod 2$.

Two \(D\)-colored multi-ribbon graphs are equivalent if they differ by reversing the order of halfedges around a vertex and simultaneously twisting every incident multi-ribbon edge (self-loops are twisted twice) at a finite number of vertices. 
\end{definition}

The following graphical representation is depicted in  Fig.~\ref{fig: multiribbon}. The vertices of a multi-ribbon graph are represented by \(D\) concentric discs with colors ordered from the innermost to the outermost circle. A multi-ribbon edge \(e\in\mathcal{E}_{\mathcal{C}}(\mathcal{M})\) connects the discs with colors in \(\mathcal{C}\) of its end vertices by ribbon edges. Only discs of the same color can be connected and the ribbons carry the color of the discs they are connecting. 
A \(0\)/\(1\) value of \(\tau^c(e)\) indicates that the ribbon with the color \(c\) of the edge $e$ is un-/twisted. The whole multi-ribbon edge is called untwisted if the ribbon of biggest color in \(\mathcal{C}\) is untwisted. 
The \(\varrho\)-edges encountered in Section~\ref{sec: matrix} are the edges with colors \(\mathcal{C}=\emptyset\). They can be represented as dashed.

The \emph{faces} of color \(c\) of $\mathds{G}$ are the closed circuits obtained by going along the sides of the ribbon edges and along the disks of the vertices of color $c$. The set of faces of color $c$ of \(\mathds{G}\) is denoted $\mathcal{F}_c(\mathds{G})$ and it cardinal is denoted \(F_c(\mathds{G})\). The restriction of \(\mathds{G}\) to a single color \(\mathds{G}_c\) is obtained by deleting all the disks and ribbon with other colors. 
\(\mathds{G}_c\) is an ordinary ribbon graph, possibly disconnected. Observe that \(F_c(\mathds{G})\) is also the number of faces of the ribbon graph \(\mathds{G}_c\).

The perturbative expansion of \eqref{eq: tensorintermedrep} is obtained by Taylor expanding and commuting the sum and the gau\ss ian integral:
\[
Z(\lambda )=
\sum_{V=1}^{\infty} \frac{(-1)^{V \sum_{c=1}^D\abs c}}{V! 2^V}
\Big[
e^{\sum_{q\in\mathscr{Q}}\frac{1}{2}  (\frac{\del}{\del\sigma_q},P_{\mathcal{C}_q}K_q\frac{\del}{\del\sigma_q})}
	\,\Big(-\tr\ln\big(1^{\otimes\mathcal{D}}+ A \big)\Big)^V
	\Big]_{\sigma_q=0}  \; ,
\]
where we suppressed the argument of $A$.
Each \(\tr\ln(1^{\otimes\mathcal{D}}+A)^{-1}\) represents a multi-ribbon vertex. 
The derivatives:
\begin{align*}
 \frac{\del}{\del\sigma_q} \big(1^{\otimes\mathcal{D}}+A\big)^{-1} & = \big(1^{\otimes\mathcal{D}}+A\big)^{-1}
\left( - \frac{\del A}{\del\sigma_q} \right) \big(1^{\otimes\mathcal{D}}+A\big)^{-1} \;, \crcr
\frac{\del}{\del\sigma_q} \tr\ln\big(1^{\otimes\mathcal{D}}+A\big)^{-1} & = \tr\big[\big(1^{\otimes\mathcal{D}}+A\big)^{-1} \left( - \frac{\del A}{\del\sigma_q} \right) \big] \; , \crcr
\frac{\del A}{\del\sigma_q} &=
\imath\sqrt{2\lambda_q} 1^{\otimes\mathcal{D}\backslash\mathcal{C}_q} \otimes P_{\mathcal{C}_q}  \;,
\end{align*}
create multi-ribbon halfedges which, because of the projector, are joined in a twisted or untwisted way. The possible types of multi-ribbon edges depend on the quartic invariants \(q\in\mathscr{Q}\): for brevity the multi-ribbon edges associated to the quartic invariant \(q\) are called \(q\)-edges. The trace induces a cyclic ordering around the vertex which by convention we take to be counter-clockwise. 
Following an index of color \(c\), it goes around the vertex until it encounters a multi-ribbon halfedge with \(c\in\mathcal{C}_q\). As in the matrix case, the order of indices is important if $|c|=1$. This is accounted for by orienting the strands of a vertex in a counter-clockwise manner (Fig.~\ref{fig: multiribbon-arrow}). Denoting $R = \big(1^{\otimes\mathcal{D}}+A\big)^{-1}$, the contribution of an edge writes:
\[
 (-\imath \sqrt{2\lambda_q}) \;  R_{\cdot b^1}  R_{a^1\cdot}  (P_{\mathcal{C}_q } K_q)^{ b^1a^1  , b^2a^2} R_{\cdot b^2} R_{a^2 \cdot } \; ,
\]
and upon setting $\sigma_q=0$ all the resolvents reduce to the identity operator.

\begin{figure}[ht]
	\centering
	\begin{tikzpicture}[scale=0.9,font=\scriptsize]
		\node {\includegraphics[scale=4.5]{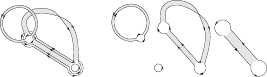}};
		\node at (-2,0) {\(\rightarrow\)};
	\end{tikzpicture}
	\caption{\emph{Left:} A 3-colored multi-ribbon graph. The arrows indicate the order of indies of the \(g^c_{a_c b_c}\). In the a priori orientation arrows point counter-clockwise around vertices and parallel along edges. \emph{Right:} Ribbon graphs obtained by restricting to a single color. The black arrows had to be flipped to arrive at a coherent orientation along each face. Compare to Fig.~\ref{fig: matrix-arrows}.}
	\label{fig: multiribbon-arrow}
\end{figure}

We denote ${\cal M}$ the edge multicolored maps and $ \deg^{q}v $ the number of edges of type $q$ incident at the vertex $v$. As in the matrix case, the edges with  $\mathcal{C}=\emptyset$  are special. We call them $\varrho$-edges and we denote sometimes the number of such edges $E\suprho(\mathcal{M})$. However, note that the $\varrho$-edges are also counted as a particular case $q$-edges for $q\in Q$.

The halfedges incident at a vertex have colors $\mathcal{C}_q$ and we denote them $h^{\mathcal{C}_q}_v$, $f^{\mathcal{C}_{q'}}_v$ and so on. Each half edge is composed of $|\mathcal{C}_q|$ ribbon half edge, one for each color in $\mathcal{C}_q$. The corners\footnote{We exclude the $\varrho$ halfedges when identifying the corners.} of the map $\cal M$ are the pieces of vertices comprised between two consecutive halfedges and we denote them $(h^{\mathcal{C}_q}_v,f^{\mathcal{C}_{q'}}_v)$, with $f^{\mathcal{C}_{q'}}_v $ the successor of $h^{\mathcal{C}_q}_v $ when turning around $v$. 
The partition function becomes:
\begin{align*}
Z(\lambda) =& \sum_{\mathcal{M} }
\frac{1}{V(\mathcal{M})!\, 2^{V(\mathcal{M})}}  \Big(\prod_{v\in\mathcal{M}}  \frac{1}{
\prod_{q} \deg^q v!}\Big)  \big((-1)^{\sum_{c\in \mathcal{D} }  \abs c }\big)^{V(\mathcal{M} )} \;
2^{E\suprho(\mathcal{M})}
\prod_{q \in Q } \left(-\lambda_q \right)^{E_q( \mathcal{M} ) }
\\ 
&   \smashoperator[l]{\prod_{v\in \mathcal{V}(\mathcal{M} ) }} \Big( \prod_{\substack{ (h^{\mathcal{C}_q}_v,f^{\mathcal{C}_{q'}}_v) \\ \text{\tiny corner of}\ v}} \prod_{c\in \mathcal{D} }  g^c_{(a_c)_{h^{\mathcal{C}_q}_v}(b_c)_{f^{\mathcal{C}_{q'}}_v}}
\vphantom{\prod_{e=\{h_v,h_w\}}} \Big)
\\
&    \smashoperator[l]{\prod_{ e=\{h^{\mathcal{C}_q}_v,h^{\mathcal{C}_q}_w\} \in \mathcal{E}(\mathcal{M})}}
 \Big(  ( 2 P_{{\mathcal{C}_q}} K_q)^{ (b_{\mathcal{C}_q } )_{h^{\mathcal{C}_q}_v}(a_{\mathcal{C}_q})_{h^{\mathcal{C}_q}_v},(b_{\mathcal{C}_q})_{h^{\mathcal{C}_q}_w} (a_{\mathcal{C}_q})_{h^{\mathcal{C}_q}_w} }
\vphantom{\prod_{e=\{h_v,h_w\}}} 
  \prod_{c\notin C_q} (g^c)^{(b_{c})_{h^{\mathcal{C}_q}_v} (a_c)_{h^{\mathcal{C}_q}_v}}(g^c)^{(b_{c})_{h^{\mathcal{C}_q}_w} (a_c)_{h^{\mathcal{C}_q}_w}}
\Big) \; ,
\end{align*}
with the convention that if $\mathcal{C}_q = \emptyset$, then there is no corner and $ ( 2 P_{{\mathcal{C}_q}}K_q)=1$.

An index of color $c$ is insensitive to the halfedges with colors different from $c$: an index follows a face and closes in a trace when the face closes. As in the matrix case, we obtain either straight edges or twisted ones coming from the two terms in $P_{{\mathcal{C}_q}}$. In turn, the edges contract on $K_p$ kernels that send the color $c$ either in a parallel channel or in a cross one. Overall, the ribbon  of color $c$ of the edge $( 2 P_{{\mathcal{C}_q}}K_q) $ can either be straight, which we denote $\tau^c(e) = 0$ or twisted, denoted $\tau^c(e) = 1$. Let us track the indices of color $c$ coming from a term in $P_{{\mathcal{C}_q}}$ and one possible $K_q$, for instance:
\[
 ( 2 P_{{\mathcal{C}_q}}K_q) \sim 
  \tensor{\delta}{^{b_c^1}_i} \tensor{\delta}{^{a_c^1}_j} (-1)^{|c|} g^{i\kern.05em b_c^2}g^{j\kern.05em a_c^2} 
   = (-1)^{|c|} g^{b_c^1b_c^2} g^{a_c^1a_c^2} \;.
\]
As this term contracts the indices $b$ together and the $a$ together, it corresponds to a ribbon of color $c$ which is twisted. Proceeding similarly for all the edges and recalling that some $g$'s need to be transposed in order to orient coherently the faces we conclude that:
\begin{equation}\label{eq:partit}
\begin{split}
 Z(\lambda) = &\sum_{\mathcal{M} }
\frac{
\abs{Stab_{\mathfrak{T}}(\smash[t]{\mathcal{M})}} 
}{V(\mathcal{M})!\, 2^{V(\mathcal{M})}}  \Big(\prod_{v\in\mathcal{M}}  \frac{1}{
\prod_{q} \deg^q v!}\Big)  
\sum_{[\mathds{G}]\in Orb_{\mathfrak{T}}(\mathcal{M} )} \mathscr{A}(\mathds{G} ) \; , \\
 \mathscr{A}(\mathds{G} ) = &
 \big((-1)^{\sum_{c\in \mathcal{D} }  \abs c }\big)^{V(\mathds{G} )} \;
2^{E\suprho(\mathcal{\mathds{G}})}
\prod_{q \in Q } \left(-\lambda_q \right)^{E_q( \mathds{G} ) } 
 \left( \prod_{C_q\neq \emptyset} \prod_{e \in \mathcal{E}_{\mathcal{C}_q}(\mathds{G})} 
  (-1)^{ \sum_{c\in \mathcal{C}_q} \tau^c(e) \; \abs c}
 \right)  \\
& \hspace{7.8em}  \prod_{c} 
  \prod_{f\in \mathcal{F}_c(\mathds{G})}
  (-1)^{ t(f) \; \abs c } N_c\;,
\end{split}
\end{equation}
where $t(f)$ denotes the number of transpositions needed to orient the face $f$ coherently.

\paragraph{Amplitudes.} Up to the overall coupling constants, the amplitude of a graph factors over the ribbon graphs 
$\mathds{G}_c$:
\[
  \mathscr{A}(\mathds{G} ) = 
  2^{E\suprho(\mathcal{\mathds{G}})}
\prod_{q \in Q } \left(-\lambda_q \right)^{E_q( \mathds{G} ) } \prod_{c \in \mathcal{D} } \left[ 
  (-1)^{ V(\mathds{G}_c) \; |c| } \left( \prod_{e\in \mathcal{E}(\mathds{G}_c)}  (-1)^{\tau^c(e) \; \abs c}\right) 
\left( \prod_{f\in \mathcal{F}_c(\mathds{G})}
  (-1)^{ t(f) \; \abs c } N_c \right) 
\right] \;, 
\] 
and using Theorem~\ref{thm: ribbonamplitude} in Appendix~\ref{ap: ribbon} this is:
\begin{equation}\label{eq: tensor-amplitude}
	\mathscr{A}(\mathds{G}) = 2^{E\suprho(\mathds{G})}\prod_{q\in\mathscr{Q}} \left(-\lambda_q\right)^{E_q(\mathds{G})} \prod_{c=1}^D \big((-1)^{\abs c} N_c\big)^{F_c(\mathds{G})} \; .
\end{equation}
and thus obeys the \(N_c\to -N_c\) duality. The  \(\varrho\)-edges edges associated to the unique disconnected invariant \(\mathcal{C}_q=\emptyset\) do not have a twisted or untwisted state and bring a relative factor of two compared to the other multi-ribbon edges.

\paragraph{The two-point function.}
The connected two-point function of the tensor model:
\begin{equation*}
	G_2(\lambda ):= Z^{-1}(\lambda) \int d\mu[T] \;  T^{a_{\mathcal{D}}}T^{b_{\mathcal{D}}} g^1_{a_1b_1}\dots g^D_{a_Db_D} 
\end{equation*}
can be expressed as a perturbative series over rooted multi-ribbon graphs. As in the matrix case, rooting drastically simplifies the combinatorial factors. The DSE for \(G_2\) follows from:
\begin{equation}\label{eq: DSE-tensor}
0=\int dT \frac{\del}{\del T^{a_{\mathcal{D}}}} \big( T^{a_{\mathcal{D}}} e^{-S[T]} \big) 
\quad\Rightarrow \qquad 
G_2(\lambda) = \prod_{c\in \mathcal{D}} N_c + \sum_{q\in\mathscr{Q}} 4\lambda_q \del_{\lambda_q} \ln Z .
\end{equation}
Graphically, the derivatives select an edge of a multi-ribbon graph and because every edge has two halfedges, \(\sum_{q\in\mathscr{Q}} 2\lambda_q \del_{\lambda_q}\) generates a sum over all possible rootings. Rooted unlabeled multi-ribbon graphs are equivalence classes of labeled multi-ribbon graphs that differ only by relabeling of their halfedges, but keeping the root halfedge fixed.
The calculation of \  \(\abs{Stab_{\mathfrak{T}}(\mathds{G})}\) is a straightforward generalization of the ordinary ribbon graph case and in Proposition~\ref{prom:combitwopoint}
Appendix~\ref{ap: symmfac} we show:
\begin{equation}\label{eq: G2-tensor}
G_2( \lambda) = \sum_{\lbrack\mathds{G}\rbrack\ \text{connected, rooted}}
\frac{1}{2^{C(\mathds{G}-\mathcal{E}\suprho )-1}} \mathscr{A}(\mathds{G}) \; ,
\end{equation}
where \(C(\mathds{G}-\mathcal{E}\suprho)\) counts the number of connected components of the multi-ribbon graph obtained after deletion of the \(\varrho\)-edges. The free energy 
\(\ln Z( \lambda )\) can be obtained by integrating the DSE:
\begin{equation}\label{eq: lnZ-tensor}
	\ln Z(\lambda) = \sum_{\substack{\lbrack\mathds{G}\rbrack\ \text{connected, rooted,} \\ \text{at least one } E_q>0}}
	\frac{1}{2^{C(\mathds{G}-\mathcal{E}\suprho)+1} \sum_{q\in\mathscr{Q}} E_q(\mathds{G})}
	\mathscr{A}(\mathds{G}) \; .
\end{equation}

\paragraph{Rescaled theories.} Models which admit a good $1/N$ expansion involve couplings rescaled by various powers of $N$. In order to maintain the \(N\) to \(-N\) duality of the amplitudes one needs sometimes to flip the sign of the couplings. For instance for $D=2$, in order to get a sensible large $N$ limit one needs to rescale the coupling by a factor $N$.
If one rescales \(\lambda\to \lambda/N\) in the \(O(N)\otimes O(N)\) model and \mbox{\(\lambda\to -\lambda/N\)} in the \(Sp(N)\otimes Sp(N)\) model the amplitudes graphs differs by \((-1)^{\chi(\mathcal{G})}\).\footnote{This was also found in \cite{goe-gse}.} The equality is reestablished if one sends at the same time \(\lambda\to -\lambda\).

%%%%%%%%%%%%%%%%%%%%%%%%%%%%%%%%%%%%%%%%%%%%%%%%%%
%%%%%%%%%%%%%%%%%%%%%%%%%%%%% Appendix %%%%%%%%%%%%%%%%%%%%%%%%%%%%%%%%%%%%

\newpage

{\appendix

\section{Classification of  Ribbon Graphs}\label{ap: ribbon}
\subsection{Canonical Form}
	
We prove in this subsection that a ribbon graph can be brought into a canonical form obtained by first separating
the oriented and unoriented parts of the graph  (Proposition~\ref{thm: standardribbon1}) and then simplifying the oriented part (Proposition~\ref{thm: standardribbon2}).

\begin{prop}\label{thm: standardribbon1}
Every connected ribbon graph \(\mathcal{G}\) is homeomorphic as a topological surface (2 dim.\ CW complex) to a ribbon graph \(\mathcal{G}'\) such that:
\begin{itemize}
 \item \(\mathcal{G}'\)  has only one vertex,
 \item \(\mathcal{G}'\)  has either none, or one or two twisted simple self-loops,
 \item all the remaining edges of  \(\mathcal{G}'\) are untwisted.
\end{itemize}
Equivalently:
\[
\mathcal{G} \isom \mathcal{G}' \isom
\begin{cases}
\mathcal{G}_{\Sigma_g}                                  & \text{orientable with}\ k=2g \\
\mathcal{G}_{\Sigma_g} \vee \mathcal{G}_{\mathds{RP}^2} & \text{unorientable with}\ k=2g+1 \\
\mathcal{G}_{\Sigma_g} \vee \mathcal{G}_{\mathds{RP}^2} \vee \mathcal{G}_{\mathds{RP}^2} & \text{unorientable with}\ k=2g+2 
\end{cases} \;,
\]
where \(\mathcal{G}_{\Sigma_g}\) a ribbon subgraph of \(\mathcal{G}'\) containing only untwisted edges and is cellularly embedded into a closed orientable surface  \(\Sigma_g\) with orientable genus 
$g$ (we reserve the notation \(g\) for the orientable genus) and $k$ is the non orientable genus of $\mathcal{G}'$.
\end{prop}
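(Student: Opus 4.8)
The plan is to establish this as the ribbon-graph avatar of the classification theorem for closed surfaces: viewing \(\mathcal{G}\) through Definition~\ref{def: ribbon} as the closed \(2\)-complex obtained by sewing patches along its faces, I reduce it by a sequence of elementary moves that do not change this complex until it reaches one of the three forms on the right-hand side. The two elementary moves are the re-embedding of a vertex (reversing the cyclic order of its incident halfedges and twisting every incident edge, already recorded as an equivalence of ribbon graphs in Fig.~\ref{fig: reembedding}) and the sliding of the end of an edge past a neighbouring corner; together with edge contraction, each of these is realised by a homeomorphism of the capped \(2\)-complex. The reduction itself is nothing but the classical cut-and-paste normalisation of a polygonal surface word, transcribed to ribbon graphs.

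First I would reduce \(\mathcal{G}\) to a rosette. Choose a spanning tree and orient it away from a root; processing its edges in this order, whenever the edge currently being processed is twisted, re-embed its child endpoint. Such a re-embedding untwists that edge and may alter the twist of other edges at the same vertex, but those are either tree edges still to be processed (which get untwisted later) or non-tree edges (which are allowed to carry twists), and no already-processed tree edge is touched again; hence after finitely many steps every spanning-tree edge is untwisted. Contracting an untwisted non-loop edge is a homeomorphism of the capped \(2\)-complex, so contracting the whole spanning tree yields a one-vertex ribbon graph (rosette) \(\mathcal{R}\) with \(E-V+1\) self-loops, each twisted or untwisted, and \(\mathcal{R}\isom\mathcal{G}\). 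Reading the cyclic sequence of halfedges around the unique vertex gives a word \(w\) in which every edge occurs exactly twice, with the two occurrences of opposite orientation if the edge is untwisted and of the same orientation if it is twisted.

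Next I would normalise \(\mathcal{R}\). By sliding halfedges past corners I bring each twisted self-loop to \emph{simple} form, so that its two letters become adjacent in \(w\) (a crosscap \(\dots cc\dots\)). I then invoke Dyck's relation in its ribbon-graph form: a handle together with a crosscap is homeomorphic to three crosscaps, equivalently three simple twisted self-loops can be traded for a handle plus a single simple twisted self-loop. Applying this repeatedly reduces the number of simple twisted self-loops, in steps of two, to at most two --- so to none, one, or two --- possibly creating additional untwisted (non-simple) self-loops. The self-loops that remain untwisted form a one-vertex subgraph all of whose edges are untwisted; since a twist-free ribbon graph caps off to an orientable surface, this subgraph is a cellular embedding of the one-vertex graph into some closed orientable \(\Sigma_g\), and is the \(\mathcal{G}_{\Sigma_g}\) of the statement (Proposition~\ref{thm: standardribbon2} will further simplify this oriented part to its minimal handle form). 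Recording \(g\) as the orientable genus and \(k\in\{2g,2g+1,2g+2\}\) according to the number of surviving crosscaps gives exactly the three cases \(\mathcal{G}_{\Sigma_g}\), \(\mathcal{G}_{\Sigma_g}\vee\mathcal{G}_{\mathds{RP}^2}\), \(\mathcal{G}_{\Sigma_g}\vee\mathcal{G}_{\mathds{RP}^2}\vee\mathcal{G}_{\mathds{RP}^2}\).

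I expect the work to be bookkeeping rather than conceptual: the delicate points are (i) checking that each elementary operation --- vertex re-embedding, halfedge slide, edge contraction, and the Dyck exchange --- genuinely induces a homeomorphism of the capped \(2\)-complex and respects the ribbon-graph equivalence generated by re-embeddings (Fig.~\ref{fig: reembedding}), and (ii) ensuring termination, which follows from a complexity measure such as the number of twisted non-simple self-loops, then the number of twisted self-loops, each non-increasing and strictly decreasing at the corresponding step, with the total edge count bounded throughout. A quicker alternative would be to quote the classification of closed surfaces outright and merely verify that the three normal-form rosettes realise \(\Sigma_g\), \(\Sigma_g\#\mathds{RP}^2\) and \(\Sigma_g\#\mathds{RP}^2\#\mathds{RP}^2\); I would nonetheless carry out the constructive reduction, since the very same moves are what allow one to follow the sign of the ribbon-graph amplitude in Theorem~\ref{thm: ribbonamplitude}.
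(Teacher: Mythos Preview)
Your proposal is correct and follows essentially the same route as the paper: contract a spanning tree to a rosette, use sliding moves to isolate each twisted self-loop as a simple one, then apply the Dyck relation \(\mathcal{G}_{\mathds{RP}^2}\vee\mathcal{G}_{\mathds{RP}^2}\vee\mathcal{G}_{\mathds{RP}^2}\isom\mathcal{G}_{\mathds{RP}^2}\vee\mathcal{G}_{T^2}\) to reduce to at most two crosscaps. The paper codifies the slides as the explicit moves Ia/Ib (and IIa/IIb for the Dyck step), and absorbs your preliminary untwisting of tree edges into the definition of edge contraction itself; otherwise the arguments coincide, including your anticipation that these same moves are what let one track the sign in Theorem~\ref{thm: ribbonamplitude}.
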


Proposition~\ref{thm: standardribbon1} is illustrated in Fig.~\ref{fig: ap-stdribbon}.

\begin{figure}[ht]
		\centering
		\begin{tikzpicture}[scale=.8333,font=\small]
			\node at (0,0) {\includegraphics[scale=5]{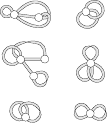}};
			\node at (.6,3) {\(\simeq\)};
			\node at (.6,0) {\(\simeq\)};
			\node at (.6,-3) {\(\simeq\)};
			\node[anchor=west] at (4,3) {\(\isom\mathcal{G}_{T^2}\)};
			\node[anchor=west] at (4,0) {\(\isom\mathcal{G}_{T^2}\vee\mathcal{G}_{\mathds{RP}^2}\)};
			\node[anchor=west] at (4,-3) {\(\isom\mathcal{G}_{\mathds{RP}^2}\vee\mathcal{G}_{\mathds{RP}^2}\)};
		\end{tikzpicture}
		\caption{Illustration of Proposition~\ref{thm: standardribbon1}. In the first line the orientable part was further simplified using Proposition~\ref{thm: standardribbon2}. We call the right hand side the canonical form.}
		\label{fig: ap-stdribbon}
\end{figure}

In order to state our second proposition, we need the notion of \emph{clean nice crossing}.

\begin{definition}[Nice Crossing and Clean Nice Crossing] \label{def: nice crossing}
Let \(e=\{e_1,e_2\}\) and \(f=\{f_1,f_2\}\) be two untwisted self-loop edges connected to the same vertex \(v\) of a ribbon graph. Assume \(f_1<f_2\) and \(e_1<e_2\) in the cyclic order around \(v\).
\begin{itemize}
\item The pair \((e,f)\) is a \emph{nice crossing} 	\cite{Gurau2006}, iff \(e_2\) is the successor of \(f_1\).
\item A nice crossing \((e,f)\) is called \emph{clean nice crossing}, if there is no other  halfedge \(h\) of \(v\) distinct from $e_2,f_1$ satisfying \mbox{\(e_1<h<f_2\)}, i.\,e.\ along $v$ the halfedges are encountered in the order $\dots e_1f_1e_2f_2 \dots$.
\end{itemize}
\end{definition}

\begin{prop}\label{thm: standardribbon2}
Every ribbon graph \(\mathcal{G}\) composed of only untwisted edges is homeomorphic as a topological surface (2 dim.\ CW complex) to a ribbon graph \(\mathcal{G}'\) with one vertex, one face and $2g$ edges forming \(g\) clean nice crossings, where \(g\) is the orientable genus of \(\mathcal{G}\). Equivalently:
\[
\mathcal{G}\simeq \mathcal{G}' \isom
\mathcal{G}_{\kern-.1em\mathlarger{\circ}} \,
\bigvee_g \mathcal{G}_{T^2} \quad \text{with}\ \chi=2-2g \; .
\]
\end{prop}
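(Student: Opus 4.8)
The plan is to prove Proposition~\ref{thm: standardribbon2} by the standard surface-classification algorithm adapted to the ribbon-graph (one-vertex) setting, realizing each topological move as an explicit operation on the cyclic order of halfedges around the single vertex. Since the input already has only untwisted edges, the underlying surface is orientable, so no cross-caps will ever appear; the target is a rosette with $g$ clean nice crossings, i.e.\ the standard $a_1b_1a_1^{-1}b_1^{-1}\cdots a_gb_ga_g^{-1}b_g^{-1}$ polygon. I would work throughout modulo the homeomorphism (equivalently, the handle-slide / edge-slide moves that preserve the 2-complex up to homeomorphism) together with the re-embedding equivalence from Fig.~\ref{fig: reembedding}.

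\textbf{Step 1 (reduce to one vertex).} First I would invoke Proposition~\ref{thm: standardribbon1}, or rather its orientable specialization: contracting a spanning tree of untwisted edges merges all vertices into one without creating twists (contracting an untwisted edge is a homeomorphism of the CW complex), so we may assume $\mathcal{G}$ is a rosette with only untwisted self-loops. This also fixes $V=1$ and, by Euler's formula, couples $E$ and $F$ to $g$ via $\chi = 1 - E + F = 2-2g$.

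\textbf{Step 2 (cancel trivial self-loops / reduce the number of faces to one).} A \emph{simple} untwisted self-loop (two halfedges adjacent in the cyclic order, corner of the form $(e_1,e_2)$) bounds a face by itself and can be deleted, decreasing $E$ by one and $F$ by one. Iterating, either we reach the edgeless graph $\mathcal{G}_{\kern-.1em\mathlarger{\circ}}$ (the case $g=0$), or every remaining self-loop is non-simple, meaning its two halfedges are separated on both sides by halfedges of other edges. The key combinatorial lemma here is the classical one: if all self-loops are non-simple and there is more than one face, some pair of edges must \emph{interleave} (appear in the cyclic order as $\dots e_1 \dots f_1 \dots e_2 \dots f_2 \dots$); two non-interleaving loops could be separated and one of them would have to be reducible.

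\textbf{Step 3 (produce and isolate a handle, i.e.\ a clean nice crossing).} Given an interleaving pair $(e,f)$, I would use edge-slide moves (sliding an endpoint of one self-loop along the boundary past intervening halfedges — a homeomorphism) to bring them into the pattern $e_1 f_1 e_2 f_2$ with nothing in between, i.e.\ a \emph{clean nice crossing} in the sense of Def.~\ref{def: nice crossing}. The standard argument is: first make $e_2$ the successor of $f_1$ (a nice crossing), then slide any halfedge $h$ with $e_1<h<f_2$ other than $e_2,f_1$ outside the block; one checks this can always be done without disturbing the nice-crossing property and without introducing twists, because the surface is orientable. Once $(e,f)$ is a clean nice crossing, the word $e_1 f_1 e_2 f_2$ splits off as a connected-summand torus: cutting along a curve separating this block from the rest yields $\mathcal{G}'_0 \vee \mathcal{G}_{T^2}$ where $\mathcal{G}'_0$ has two fewer edges and, by the same Euler count, orientable genus $g-1$.

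\textbf{Step 4 (induction).} Apply the inductive hypothesis to $\mathcal{G}'_0$: it is homeomorphic to $\mathcal{G}_{\kern-.1em\mathlarger{\circ}} \bigvee_{g-1}\mathcal{G}_{T^2}$, hence $\mathcal{G} \simeq \mathcal{G}_{\kern-.1em\mathlarger{\circ}}\bigvee_g \mathcal{G}_{T^2}$, which has one vertex, one face, $2g$ edges forming $g$ clean nice crossings, and Euler characteristic $\chi=1-2g+1=2-2g$. The base case $g=0$ is $\mathcal{G}_{\kern-.1em\mathlarger{\circ}}$ itself. This closes the induction.

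\textbf{The main obstacle} I expect is Step 3 — proving rigorously that the slide moves can always convert an arbitrary interleaving pair into a \emph{clean} nice crossing while (a) staying within the homeomorphism/re-embedding equivalence class, (b) never creating a twisted edge, and (c) not destroying the orientability bookkeeping for the edges one moves across. The subtlety is purely combinatorial: tracking how the cyclic word and the face structure change under each elementary slide, and arguing that the "dirty" halfedges trapped inside the crossing can be evacuated one at a time. Everything else (Euler characteristic bookkeeping, the base case, deletion of simple self-loops) is routine once the moves are set up, and the orientability hypothesis is what guarantees we never need a cross-cap, so the $\mathcal{G}_{\mathds{RP}^2}$ summands of Proposition~\ref{thm: standardribbon1} genuinely do not occur here.
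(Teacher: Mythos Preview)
Your proposal is correct and follows the classical surface-classification algorithm, but it differs from the paper's proof in one substantive way: the face-reduction step. The paper, after contracting a spanning tree to reach a rosette, contracts a spanning tree in the \emph{dual} graph (equivalently, deletes a maximal set of edges whose removal does not disconnect the face complement). This single operation reduces the number of faces to one, yielding a superrosette, where a nice crossing is guaranteed to exist; then the paper simply iterates the slide move~IIa. Your Step~2, by contrast, only deletes \emph{simple} self-loops, which corresponds to pruning leaves of the dual tree and need not reach one face (e.g.\ the rosette with cyclic word $a_1 b_1 a_2 c_1 b_2 c_2$ has no simple self-loops but two faces). You compensate by folding the face-reduction into the induction, revisiting Step~2 after each torus is split off. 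Both routes are valid; the paper's is more direct because it separates face-reduction from genus-reduction cleanly.

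Your stated ``main obstacle'' in Step~3 is also milder than you fear. After Step~2 (no simple self-loops remain), a nice crossing in the sense of Def.~\ref{def: nice crossing} always exists outright, not merely an interleaving pair: take an edge $e$ of minimal span, and let $h$ be the predecessor of $e_2$ inside that span; by minimality the other halfedge of $h$'s edge lies outside, and one checks that $(e,\text{edge of }h)$ is a nice crossing. So you do not need the extra, undefined ``make $e_2$ the successor of $f_1$'' slide --- the paper's move~IIa applies immediately. This closes what you flagged as the delicate point.
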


Note that Proposition~\ref{thm: standardribbon2} can be applied to \(\mathcal{G}_{\Sigma_g}\) in Proposition~\ref{thm: standardribbon1}, yielding: 
\begin{equation} \label{eq: canonform}
\mathcal{G} \simeq
\mathcal{G}_{\kern-.1em\mathlarger{\circ}} \,
\bigvee_g \mathcal{G}_{T^2} \bigvee_{0,\, 1\, \text{or}\ 2} \mathcal{G}_{\mathds{RP}^2} 
\; .
\end{equation}
We call the right hand side of this equation the \emph{canonical form} of \(\mathcal{G}\), see Fig.~\ref{fig: ap-stdribbon}.
This is the ribbon graph version of classification theorem of closed surfaces, stating that every such surface is homeomorphic to the connected sum of a sphere, some number of tori, and either no, one or two real projective planes.
	
\paragraph{Contraction and sliding of edges.}
We introduce two homeomorphisms of ribbon graphs, viewed as a topological surface with boundary. Similar moves are known in the literature \cite{grosstucker, handleslides}.

\begin{definition}[Contraction of an Edge, see Fig.~\ref{fig: ap-contraction}]\label{def: contraction}Let \(\mathcal{G}\) be a ribbon graph and \(e\in\mathcal{E}(\mathcal{G})\) an edge connecting two distinct vertices \(v,w\in\mathcal{V}(\mathcal{G})\) of coordination \(p\) and \(q\). Remember that \(v,w\) and \(e\) are all topological disks.

If \(e\) is untwisted we define \(\mathcal{G}/e\) to be the ribbon graph obtained from 
\(\mathcal{G}\) by replacing \(v,w\) and \(e\) with the single vertex \(u=v\cup e\cup w\) (which is again a topological disk) of coordination
\(p+q-2\) such that in the cyclic ordering around this vertex the halfedges of \(v\) proceed the halfedges of \(w\).
The ribbon graph \(\mathcal{G}/e\) has one vertex and one edge fewer than \(\mathcal{G}\), but the same number of faces. 

If \(e\) is twisted we first push the twist along the graph by reembedding the vertex $w$ such that \(e\) is untwisted and proceed as before.

The contraction preserves the Euler characteristic and the orientability and is thus a homeomorphism of surfaces.
\end{definition}

	\begin{figure}[ht]
		\centering
		\begin{tikzpicture}[scale=.8333,font=\scriptsize]
			\node at (0,0) {\includegraphics[scale=5]{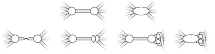}};
			\node at (0.55,.95) {\(\rightarrow\)};
			\node at (-3.17,-.75) {\(\rightarrow\)};
			\node at (0.35,-.75) {\(\rightarrow\)};
			\node at (3.9,-.75) {\(\rightarrow\)};
		\end{tikzpicture}
		\caption{Contraction of an untwisted (\emph{first line}) and twisted (\emph{second line}) edge in a ribbon graph.}
		\label{fig: ap-contraction}
	\end{figure}
	
A spanning tree of $\mathcal{G}$, that is 
a connected acyclic subgraph \(\mathcal{T}\subset\mathcal{G}\), has 
$E(\mathcal{T}) = V(\mathcal{G})-1 $ edges. 
One can contract all the edges in a spanning tree and decrease the numbers of vertices and edges of $\mathcal{G}$ to $V(\mathcal{G} ) \to  V(\mathcal{G}) - (V(\mathcal{G})-1)  =  1 $ and  \( E(\mathcal{G}) \to
E(\mathcal{G}) - (V(\mathcal{G})-1) \). The resulting graph is a rosette graph homeomorphic to $\mathcal{G}$.

\begin{definition}[Sliding of Edges Ia, see Fig.~\ref{fig: ap-sliding_I}] \label{def: sliding Ia}
Let \(e=\{e_1,e_2\}\) be a twisted self-loop edge on the vertex \(v\) of a ribbon graph. In the cyclic ordering of halfedges around \(v\), let \(e_1<e_2\) and denote by \(e_1< h_1<h_2<\dots <h_n < e_2\) all the halfedges of \(v\) that are between \(e_1\) and \(e_2\).

\emph{Sliding of the halfedges \(h_1,\ldots,h_n\) out of the twisted edge \(e\)} is defined as:
\begin{enumerate}
\item Reordering the halfedges to \(h_n< \dots <h_2<h_1<e_1<e_2\). 
\item Adding a twist (recall that two twists on the same edge cancel) to all the edges to which \(h_1,\ldots,h_n\) belong.
\end{enumerate}

Note that the order of the \(h_i\)'s has been reversed. Also, note that and after the sliding, \(e\) is a simple twisted self-loop.
\end{definition}

\begin{definition}[Sliding of Edges Ib, see Fig.~\ref{fig: ap-sliding_I}] \label{def: sliding Ib}
Let \(e=\{e_1,e_2\}\) be a simple twisted self-loop on the vertex \(v\) of a ribbon graph. In the cyclic ordering of halfedges around \(v\), let \(e<1<e_2< h_1<h_2<\dots <h_n\) a with $h_i$ a collection of consecutive halfedges preceding 
$e_1$ on \(v\). As \(e\) is a simple self-loop, there is no halfedge between \(e_1\) and \(e_2\).

\emph{Sliding of the halfedges \(h_1,\dots,h_n\) past the twisted edge \(e\)} is defined as:
\begin{enumerate}
\item  Reordering the halfedges to \(e_1<e_2<h_1<h_2<\dots <h_n\).
\end{enumerate}

Note that the relative order of the \(h_i\)'s has not changed, no additional twists where introduced and \(e\) remains a simple twisted self-loop.
\end{definition}
	
\begin{figure}[ht]
		\centering
		\begin{tikzpicture}[font=\tiny, scale=8/7]
			\node at (0,0) {\includegraphics[scale=8]{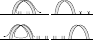}};
			\begin{scope}[shift={(4.27,0)}]
				\node at (-7.8,1.5) {\small{Ia}};
				\node at (-5.06,0.2) {\(e_1\)};
				\node at (-6.60,0.2) {\(e_2\)};
				\node at (-5.82,0.2) {\(h_n\dots h_1\)};
				\node at (-2.27,0.2) {\(e_1\)};
				\node at (-3.54,0.2) {\(e_2\)};
				\node at (-1.52,0.2) {\(h_1\dots h_n\)};
				\node at (-7.8,-0.2) {\small{Ib}};
				\node at (-5.28,-1.55) {\(e_1\)};
				\node at (-6.78,-1.55) {\(e_2\)};
				\node at (-4.52,-1.55) {\(h_n\dots h_1\)};
				\node at (-1.32,-1.55) {\(e_1\)};
				\node at (-2.50,-1.55) {\(e_2\)};
				\node at (-3.36,-1.55) {\(h_n\dots h_1\)};
			\end{scope}
		\end{tikzpicture}
		\caption{Sliding of edges Ia and b. The horizontal line is the vertex with ordering from right to left.}
		\label{fig: ap-sliding_I}
\end{figure}
	
Both sliding operation (Ia) and (Ib) preserve the number of faces, do not change the numbers of vertices and edges and do not alter the orientability. Thus these operations are homeomorphisms of two dimensional surfaces.
	
\begin{definition}[Sliding of Edges IIa, see Fig.~\ref{fig: ap-sliding_II}] \label{def: sliding IIa}
Let \((e=\{e_1,e_2\}, f=\{f_1,f_2\})\) be a nice crossing at the vertex \(v\) of a ribbon graph. In the cyclic ordering of halfedges around \(v\), let us denote
\[
e_1<h_1 \dots < h_n<f_1<e_2<k_1<\dots k_m <f_2 \;,
\]
the halfedges located between $e_1$ and $f_2$. 

\emph{Sliding of the halfedges \(h_1,\ldots,h_n,k_1,\ldots,k_m\) out of the nice crossing \((e,f)\)} is defined as:
\begin{enumerate}
\item Reordering the halfedges to \(k_1< \dots <k_m<h_1<\dots <h_n<e_1<f_1<e_2<f_2\). 
\end{enumerate}
Note that the order of the set of \(h_i\)'s and \(k_j\)'s was interchanged, but the relative order in each set remained unchanged. After sliding, \((e,f)\) is a clean nice crossing. 
\end{definition}

\begin{definition}[Sliding of Edges IIb, see Fig.~\ref{fig: ap-sliding_II}] \label{def: sliding IIb}
Let \((e=\{e_1,e_2\}, f=\{f_1,f_2\})\) be a clean nice crossing at the vertex \(v\) of a ribbon graph. In the cyclic ordering of halfedges around \(v\) let us denote:
\[
h_1<\dots <h_n <e_1<f_1<e_2<f_2 \;,
\]
a collection of consecutive halfedges preceding $e_1$ on \(v\).

\emph{Sliding of the halfedges \(h_1,\ldots,h_n\) past the clean nice crossing \((e,f)\)} is defined as:
\begin{enumerate}
\item Reordering the halfedges to \(e_1<f_1<e_2<f_2<h_1< \dots <h_n\). 
\end{enumerate}
Note that the relative order of the \(h_i\)'s is unchanged; \((e,f)\) remains a clean nice crossing.
\end{definition}

    \begin{figure}[ht]
    	\centering
		\begin{tikzpicture}[font=\tiny, scale=8/7]
			\node at (0,0) {\includegraphics[scale=8]{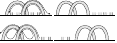}};
			\begin{scope}[shift={(-3.44,0)}]
				\node at (-1,1.5) {\small{IIa}};
				\node at (2.50,0.2) {\(e_1\)};
				\node at (0.94,0.2) {\(e_2\)};
				\node at (1.24,0.2) {\(f_1\)};
				\node at (-0.34,0.2) {\(f_2\)};
				\node at (1.86,0.2) {\(h_n\dots h_1\)};
				\node at (0.32,0.2) {\(k_m\dots k_1\)};
				\node at (5.58,0.2) {\(e_1\)};
				\node at (4.34,0.2) {\(e_2\)};
				\node at (4.68,0.2) {\(f_1\)};
				\node at (3.46,0.2) {\(f_2\)};
				\node at (6.22,0.2) {\(h_n\dots h_1\)};
				\node at (7.2,0.2) {\(k_m\dots k_1\)};
				\node at (-1,-.2) {\small{IIb}};
				\node at (6.96,-1.55) {\(e_1\)};
				\node at (5.70,-1.55) {\(e_2\)};
				\node at (6.04,-1.55) {\(f_1\)};
				\node at (4.82,-1.55) {\(f_2\)};
				\node at (3.98,-1.55) {\(h_n\dots h_1\)};
				\node at (1.96,-1.55) {\(e_1\)};
				\node at (0.56,-1.55) {\(e_2\)};
				\node at (0.96,-1.55) {\(f_1\)};
				\node at (-0.4,-1.55) {\(f_2\)};
				\node at (2.66,-1.55) {\(h_n\dots h_1\)};
			\end{scope}
		\end{tikzpicture}
		\caption{Sliding of edges IIa and b.}
		\label{fig: ap-sliding_II}
	\end{figure}

Like the sliding along twisted edges, the sliding along a nice crossing (IIa, IIb) is a homeomorphisms of two dimensional surfaces.

\begin{proof}[Proof of Proposition~\ref{thm: standardribbon1}] Let \(\mathcal{G}\) be a connected ribbon graph. 

\begin{description}
 \item[First--] contract a spanning tree \(\mathcal{T}\subset\mathcal{G}\). This decreases the number of edges and vertices by \(V(\mathcal{G})-1\) and the resulting ribbon graph \(\mathcal{G}/\mathcal{T}\) is a rosette graph, that is a graph with only one vertex.

\item[Second--] if \(\mathcal{G}/\mathcal{T}\) does not contain any twisted edges then it can be embedded into an orientable surface \(\Sigma_g\) of 
genus \(g\). 

Otherwise, use sliding out of twisted self-loop edges (Ia) to create simple twisted self-loops. This operation may create new twists in the halfedges. Once a twisted self-loop is created, use the slide (Ib) to move it ``to the right'' on the vertex.

Proceed until all the twisted edges of the rosette graph belong to simple twisted self-loops. The resulting graph is a connected sum of an orientable graph \(\mathcal{O}\) containing only untwisted edges and \(p\) copies of \(\mathcal{G}_{\mathds{RP}^2}\), i.\,e.\ ribbon graphs with only one simple twisted loop:
\[
\mathcal{O}\vee \underbrace{\mathcal{G}_{\mathds{RP}^2} \vee \dots \vee \mathcal{G}_{\mathds{RP}^2}}_{p \text{-times}}  
\; .\]

\item[Third--] by sliding as depicted in Fig.~\ref{fig: ap-threetwists}, three neighboring simple twisted self-loops can be reduced to one simple twisted self-loop and a clean nice crossing:
\[
\mathcal{G}_{\mathds{RP}^2}\vee \mathcal{G}_{\mathds{RP}^2} \vee \mathcal{G}_{\mathds{RP}^2} \isom \mathcal{G}_{\mathds{RP}^2} \vee \mathcal{G}_{T^2} 
\;,
\]
hence it is possible to reduce the number of simple twisted self-loops (and twisted edges in total) to zero, one or two. Slide (Ib) the clean nice crossings to the left of the twisted self-loops.

	\begin{figure} [ht]
		\centering
		\begin{tikzpicture}[scale=7/8,font=\small]
			\node at (0,0) {\includegraphics[scale=7]{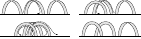}};
			\node at (-5.9,1.55) {a)};
			\node at (0.5,1.55) {b)};
			\node at (-5.9,-0.2) {c)};
			\node at (0.5,-0.2) {d)};
		\end{tikzpicture}
		\caption{Deforming three neighboring simple twisted self-loops into a graph with only one twisted edge. a) and b): By inverting (Ia), slide a halfedge of the left and right twisted simple loop into the central one. This creates a nice crossing. c): Use (IIa) to slide the central twisted loop out of the nice crossing. d): The result has only a single simple twisted loop.}
		\label{fig: ap-threetwists}
	\end{figure}

\item[Finally--] one arrives at a graph \(\mathcal{G}' \) of the form:
\[
\mathcal{G}'\isom 
\begin{cases}
\mathcal{G}_{\Sigma_g}                                  & \text{orientable with}\ k=2g, \\
\mathcal{G}_{\Sigma_g} \vee \mathcal{G}_{\mathds{RP}^2} & \text{unorientable with}\ k=2g+1, \\
\mathcal{G}_{\Sigma_g} \vee \mathcal{G}_{\mathds{RP}^2} \vee \mathcal{G}_{\mathds{RP}^2} & \text{unorientable with}\ k=2g+2,
\end{cases} \;,
\]
with \(\chi(\mathcal{G})=\chi(\mathcal{G}') = 2-k\).
\\ \qedhere
\end{description}
\end{proof}

\begin{proof}[Proof of Proposition~\ref{thm: standardribbon2}] Let \(\mathcal{G}\) be a connected ribbon graph with only untwisted edges. Such a graph can be embedded into an orientable surface.

\begin{description}
 \item[First--] contract a spanning tree \(\mathcal{T}_1\subset\mathcal{G}\) to arrive at a rosette graph \(\mathcal{G}/\mathcal{T}_1\).

 \item[Second--] contract a spanning tree in the dual graph \(\mathcal{T}_2\subset(\mathcal{G}/\mathcal{T}_1)^\ast\). 
This corresponds to deleting edges in 
$ \mathcal{G}$  in a way that preserves the Euler characteristic, the orientability and the connectivity.

This reduces the number of faces and edges by \(F(\mathcal{G})-1\) and gives a superrosette graph \(\mathcal{R}\), that is a graph with one vertex, one face and only untwisted edges.
A superrosette always contains at least one nice crossing. 

\item[Third--] choose a nice crossing \((e,f)\) in \(\mathcal{R} \) and slide (IIa) all the halfedges encompassed by the nice crossing out of \((e,f)\). The result has the structure:
\[
\mathcal{R}\isom \mathcal{R}/(e,f)\vee\mathcal{G}_{T^2} \; ,
\]
where \(\mathcal{R}/(e,f)\) is again a superrosette with genus decreased by one. Iterating one arrives at:
\[
\mathcal{R}\isom \mathcal{G}_{\kern-.1em\mathlarger{\circ}}\vee \underbrace{\mathcal{G}_{T^2}\vee\dots\vee\mathcal{G}_{T^2}}_{g \text{-times}} \;.
\]
\end{description}
\end{proof}

\subsection{Sign of a Ribbon Graph}\label{sec: ribbon-sign}

Let \(\mathcal{G}\) be a connected ribbon graph. An \emph{a priori arrow orientation}\footnote{This is the arrow orientation encountered in Section \ref{sec: tensor}.} of a \(\mathcal{G}\) (which has nothing to do with the orientability of the embedding surface) is defined by:
\begin{enumerate}
\item counter-clockwise pointing arrows at the corners of each vertex. 
\item parallel pointing arrows on the strands of each edges.
\end{enumerate}

We denote $\tau(e)=0$ if the edge $e$ is untwisted (straight) and $\tau(e)= 1 $ if the edge $e$ is twisted. Furthermore, we denote $t(f) $ the number of reorientations of arrows required to coherently orient the face $f$ with all the arrows pointing in the same direction along its boundary. The \emph{sign} of \(\mathcal{G}\) is defined as: 
\[
 \sgn(\mathcal{G}) = (-1)^{ V(\mathcal{G}) } \left( \prod_{e\in \mathcal{E}(\mathcal{G})}  (-1)^{\tau(e)}\right) 
\left( \prod_{f\in \mathcal{F}
(\mathcal{G} ) }
  (-1)^{ t(f) } \right)  \; .
\]

This is well defined. In order to determine the sign of $\mathcal{G}$ one needs to determine the number of arrow flips that are necessary to go from an a priori orientation of \(\mathcal{G}\) to an orientation where all arrows point coherently along the faces of \(\mathcal{G}\) (such an orientation will be called \emph{coherent}). 
As every face consists of as many corners as edge strands, the total number of arrows along a face is even and switching between two coherent orientations requires an even number of arrow flips. Also, as any two a priori orientations differ by an even number of arrow flips (pairs of arrows along the edge strands), switching between a priori orientations at fixed coherent orientation does not change the sign of the graph.

\begin{lem}\label{lem: contract}
 The sign of a graph is: 
 \begin{itemize}
  \item  invariant under reembedding of the vertices.
  \item invariant under contraction of a tree edge.
 \end{itemize}
\end{lem}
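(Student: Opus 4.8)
The plan is to reduce both assertions to a purely local bookkeeping of arrow reorientations, based on the following reformulation of the face factor. For a fixed face $f$ of $\mathcal{G}$, choose an orientation of the circle $\partial f$ and let $\epsilon(\alpha)=\pm1$ record, for each boundary arc $\alpha$ of $f$ (equivalently, for each corner and each edge--strand that $\partial f$ runs through), whether the a priori arrow on $\alpha$ agrees with that orientation. Along $\partial f$ corners and strands alternate, so $\partial f$ has an even number of arcs whenever it meets at least one edge; hence the number of agreements and the number of disagreements have the same parity, and $(-1)^{t(f)}=\prod_{\alpha\in\partial f}\epsilon(\alpha)$ independently of the chosen orientation of $\partial f$. (The only exception is a face bounded by an isolated vertex, for which $t(f)=0$ trivially; such a face arises only after contracting a connected graph down to a single vertex, and is dealt with by hand.) Consequently $\prod_f(-1)^{t(f)}$ is a product with exactly one factor per corner and one per edge--strand of $\mathcal{G}$, each factor depending only on the a priori orientation in an arbitrarily small neighbourhood of that arc.

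For the first item I would then argue as follows. Re-embedding the vertex $v$ leaves $V$ and every arc not incident to $v$ untouched, flips the twist on each of the $n_v$ non--self-loop edges incident to $v$ (self-loops receive two twists and are unchanged), reverses the cyclic order of the $\deg v$ corners of $v$ together with the choice of their counter-clockwise arrows, and --- because the incident edges acquire a twist --- permutes the strand--ends sitting at $v$. Combining this with the reformulation above, the variation of $\sgn(\mathcal{G})$ equals $(-1)^{n_v}$ (from the twist factor) times the variation of the finitely many local signs $\epsilon$ attached to the corners of $v$ and to the strands of the edges touching $v$. I would carry out this finite count --- it suffices to understand one non--self-loop edge-strand and one corner, a self-loop contribution being the sum of two edge contributions --- and check that the net variation is $+1$. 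This is the step I expect to be the main obstacle: it is elementary but the direction conventions are delicate, and the cleanest way to organise it is to phrase each corner/strand contribution as the transition sign it induces on a face-walk and to observe that re-embedding conjugates all of these simultaneously, so that the product over any full face is unchanged.

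For the second item, if the tree edge $e=\{h_v,h_w\}$ is twisted I first push the twist off it by re-embedding one endpoint, which by the first item does not change $\sgn(\mathcal{G})$; so I may assume $e$ untwisted. Contraction then lowers $V$ by one --- a factor $-1$ in $(-1)^{V(\mathcal{G})}$ --- removes the edge $e$ with $\tau(e)=0$, leaving the twist product unchanged, and preserves the number of faces. The only arcs that change are the two strands of $e$ together with the corners of $v$ and of $w$ incident to $h_v$ and $h_w$; these are absorbed into the corresponding corners of the merged vertex, all other arcs and a priori arrows being unaffected. By the reformulation it then remains to verify that the product of the $\epsilon$'s over these vanishing arcs, divided by the product over the new corners, equals $-1$, thereby cancelling the change in $(-1)^{V}$. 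Since $e$ is untwisted this is exactly the short direct computation of the sign of the one-edge ``dumbbell'' graph (a disk) together with that of the single-vertex graph with no edges, which one performs once explicitly; it yields $\sgn(\mathcal{G}/e)=\sgn(\mathcal{G})$, completing the plan.
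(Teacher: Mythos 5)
Your reformulation of $\prod_f(-1)^{t(f)}$ as a product of local arc-signs $\epsilon(\alpha)$ (one per corner, one per edge-strand, with the parity observation that makes it well-defined) is correct and is a clean way to formalize the same object the paper manipulates. Conceptually you are following the paper's route: track how the a priori arrow orientation compares to a fixed coherent face orientation, and see how this changes under the two moves. But the proposal stops exactly where the work is: for re-embedding you say you ``would carry out this finite count\ldots\ and check that the net variation is $+1$,'' and you flag this as ``the main obstacle''; for contraction you assert that the local ratio ``equals $-1$'' and reduce it to ``the dumbbell computation.'' Those are the two verifications that constitute the proof, and they are not done.

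Concretely, what is missing is the following. For re-embedding a vertex $v$ of degree $d$ with $n_v$ incident non-self-loop edges: only the $d$ corner $\epsilon$'s at $v$ change (the strand $\epsilon$'s are unaffected, because ``parallel'' refers to the edge rectangle, not the drawing, and the face curves are topologically unchanged), giving a factor $(-1)^d$, while the twist product changes by $(-1)^{n_v}$; since $d-n_v$ is twice the number of self-loops at $v$, these cancel. You anticipated a contribution from the strands adjacent to $v$ that does not actually occur, which suggests you had not yet settled the conventions you flag as delicate. For the contraction of an untwisted tree edge $e$, the local computation is that the two strands of $e$, once absorbed into the merged vertex $u=v\cup e\cup w$, are antiparallel with respect to the counter-clockwise orientation of $\partial u$, while the old a priori arrows on them were parallel; hence exactly one arrow reorientation is introduced, cancelling the factor $-1$ from $V\to V-1$. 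Your proposed reduction to the dumbbell-plus-isolated-vertex example is heuristically suggestive but not a substitute: the dumbbell has degree-one endpoints (so only two distinct corners and no ``remainder'' corners), whereas the generic contraction merges four corners and two strands into two new corners, and the claim that the local ratio is independent of the ambient graph still needs an argument. The paper's proof supplies exactly these two short cancellations, so as it stands your write-up has a genuine gap rather than merely a stylistic difference.
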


\begin{proof}
Consider an a priori arrow orientation of $\mathcal{G}$. Re embedding a vertex of degree $d$ brings 
$d$ new twists, but one needs to reverse $d$ vertex corners in order to orient the re-embedded vertex counterclockwise.

Consider now a tree edge $e$ connecting two vertices $v$ and $w$ in a graph with a priori orientation (which by the first item we can assume to be untwisted). A flip of an arrow coherently orients the disk \(u=v\cup e\cup w\), but this is canceled by the fact that under contraction the number of vertices of the graph goes down by $1$. 

\end{proof}

\begin{lem}\label{lem: slide}
 The sign of a graph is invariant under the sliding moves.
\end{lem}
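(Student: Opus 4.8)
The plan is to check the invariance one move at a time, keeping the whole analysis local to the arc of the vertex that the move touches. Since all four sliding moves of Definitions~\ref{def: sliding Ia}--\ref{def: sliding IIb} are homeomorphisms of the underlying surface, they leave $V(\mathcal{G})$, $E(\mathcal{G})$ and $F(\mathcal{G})$ unchanged; because $\sgn(\mathcal{G}) = (-1)^{V(\mathcal{G})}\prod_{e}(-1)^{\tau(e)}\prod_{f}(-1)^{t(f)}$, it therefore suffices to show that $\prod_{e}(-1)^{\tau(e)}\prod_{f}(-1)^{t(f)}$ is preserved.

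To make this concrete I would fix an a priori arrow orientation $\mathcal{A}$ of $\mathcal{G}$ before the move and transport it through: arrows on corners and strands untouched by the move are kept, while on the rearranged arc the move prescribes a configuration $\mathcal{A}'$ on $\mathcal{G}'$ in which a corner whose cyclic position has been reversed has its counter-clockwise arrow reversed. As noted in the well-definedness discussion before Lemma~\ref{lem: contract}, the parity of the number of arrow flips needed to pass from an arbitrary configuration to a coherent one is well defined and changes by exactly one under a single flip; hence, writing $\Delta t$ for the change in the number of twisted edges and $\Delta a$ for the number of corner arrows reversed in forming $\mathcal{A}'$, the quantity $\prod_{e}(-1)^{\tau(e)}\prod_{f}(-1)^{t(f)}$ gets multiplied by $(-1)^{\Delta t+\Delta a}$. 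Everything thus reduces to checking $\Delta t+\Delta a\equiv 0\pmod 2$ for each move, which can be read off Figures~\ref{fig: ap-sliding_I} and~\ref{fig: ap-sliding_II}.

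For Ib and IIb no edge changes its twist, so $\Delta t=0$, and the move only rotates a block of halfedges past a simple twisted self-loop (respectively a clean nice crossing) without reversing any relative order; inspecting the figures one sees that $\mathcal{A}'$ differs from $\mathcal{A}$ by an even number of corner reversals (the corners flanking the moved self-loop or crossing), so $\Delta a$ is even as well. The move IIa likewise creates no twists, and since the two blocks $h_1,\dots,h_n$ and $k_1,\dots,k_m$ are merely interchanged with their internal orders intact, the same bookkeeping gives an even $\Delta a$. In each of these three cases $\Delta t+\Delta a$ is even, as required.

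The move Ia is the genuine obstacle, being the only one that creates twists: a twist is added to every edge carrying one of $h_1,\dots,h_n$, and since an edge with both halfedges in this block receives two twists that cancel, one gets $\Delta t\equiv n\pmod 2$. On the other side, pulling $h_1,\dots,h_n$ out of the twisted self-loop $e$ reverses their cyclic order, so roughly $n$ corner arrows on that arc must be reversed; indeed this portion of the move is essentially a reembedding of the sub-disk of the vertex bounded by the twisted self-loop $e$, so exactly as in the proof of Lemma~\ref{lem: contract} — where reembedding a degree-$d$ vertex trades $d$ new twists against $d$ corner reversals — the $n$ new twists are matched by $n$ corner reversals and $\Delta t+\Delta a\equiv n+n\equiv 0\pmod 2$. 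I expect the delicate point to be verifying that the bounded discrepancies in these two counts, coming from the corners adjacent to $e_1,e_2$ and from the fact that $e$ itself stays twisted, are even; this is cleanest to do by drawing the a priori arrows on the configuration of Figure~\ref{fig: ap-sliding_I} and tracking the one or two faces that run through $e$. Once Ia is settled, all four moves are covered and the lemma follows.
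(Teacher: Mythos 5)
Your framework --- localize to the arc touched by the move and track the parity of (new twists) + (arrow reorientations), noting that the sliding moves are homeomorphisms preserving $V$, $E$, $F$ --- is precisely the paper's approach, and your conclusions for Ib, IIa and IIb match what the paper reads off Figure~\ref{fig: ap-sign2} (no arrows reversed, no halfedges twisted for the type-II slides; for Ib no extra twist and the same local reorientation count on either side).

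The genuine gap is exactly where you flag it: for Ia you assert $\Delta a\equiv n\pmod 2$ but defer the verification to ``drawing the a priori arrows and tracking the one or two faces.'' Your re-embedding analogy is not a proof --- the move reverses only the block $h_1,\dots,h_n$ within one arc of the vertex, not the entire cyclic order, so Lemma~\ref{lem: contract} does not apply and the boundary corners around $e_1,e_2$ (together with the fact that $e$ itself stays twisted) are precisely where a stray $(-1)$ could leak in. The paper closes this by directly inspecting a coherent orientation on the local picture (Figure~\ref{fig: ap-sign1}): the configuration before the slide requires two corner-arrow reversals relative to a priori, the configuration after requires one, and exactly one extra twist appears on $\mathcal{G}'$ --- so the sign is unchanged per halfedge slid out, and iterating over the block gives the result. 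You should carry out that count rather than leaving it as an expectation; once you do, your proof coincides with the paper's.
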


\begin{proof}

We consider a graph $\mathcal{G}$ having a twisted self-loop as in panel a) Fig.~\ref{fig: ap-sign1}. 

		\begin{figure}[ht]
			\centering
			\begin{tikzpicture}[scale=7/8,font=\tiny]
				\node at (0,0) {\includegraphics[scale=7]{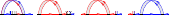}};
				\node at (-7,.9) {\small a)};
				\node at (-5.75,-.7) {\(h_n... h_1\)};
%				\node at (-5.6,-.16) {\color{red}\(h_r\)};
%				\node at (-6.06,-.16) {\color{blue}\(h_b\)};
				\node at (-4,.9) {\small b)};
				\node at (-1.2,-.7) {\(h_1... h_n\)};
				\node at (-.4,.9) {\small c)};
				\node at (2.62,-.7) {\(h\)};
				\node at (3.3,.9) {\small d)};
				\node at (4.02,-.7) {\(h\)};
			\end{tikzpicture}
			\caption{Sliding I at a twisted self-loop in a coherently oriented graph. The red and blue corners and strands belong to the red and blue face, respectively. The number of reversed arrows and additional twists is always even.}
			\label{fig: ap-sign1}
		\end{figure}

We denote $\mathcal{G}'$ the graph obtained from $\mathcal{G}$ by the sliding Ia. All else being equal, in order to pass from an a priori 
orientation of $\mathcal{G}$ to the coherent orientation depicted in Fig.~\ref{fig: ap-sign1} panel a) two corner arrows had to be reversed, while for $\mathcal{G}'$ only one.
However $\mathcal{G}'$ has one twist more than $\mathcal{G}$. As the graph are otherwise identical they have the same sign.
For Ib sliding there is no extra twist, but both graphs need only one local arrow reorientation.

		\begin{figure}[ht]
			\centering
			\begin{tikzpicture}[scale=7/8,font=\tiny]
				\node at (0,0) {\includegraphics[scale=7]{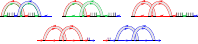}};
				\node at (-8.3,1.8) {\small a)};
				\node at (-7.18,0.2) {\(k_m ...k_1\)};
				\node at (-5.35,0.2) {\(h_n ...h_1\)};
				\node at (-3.2,1.8) {\small b)};
				\node at (1.22,0.2) {\(k_m ...k_1\)};
                \node at (-0.37,0.2) {\(h_n ...h_1\)};
				\node at (2.4,1.8) {\small c)};
				\node at (7.5,0.2) {\(k_m ...k_1\)};
				\node at (6.5,0.2) {\(h_n ...h_1\)};
				\node at (-5.25,-0.3) {\small d)};
				\node at (-0.82,-1.78) {\(h\)};
				\node at (0.17,-0.3) {\small e)};
				\node at (0.84,-1.78) {\(h\)};
			\end{tikzpicture}
			\caption{Sliding II at a nice crossing in a coherently oriented graph. No arrows are reversed, nor are halfedges twisted.}
			\label{fig: ap-sign2}
		\end{figure}

We now consider a graph $\mathcal{G}$ having a nice crossing as in panel a) Fig.~\ref{fig: ap-sign2} and we denote $\mathcal{G}'$ the graph obtained from $\mathcal{G}$ after sliding.
In all the cases, the same number of arrow flips is needed in order to pass locally from an a priori to the coherent orientations depicted. As $\mathcal{G}$ and $\mathcal{G}'$ are identical elsewhere, they have the same sign.

\end{proof}
		
\begin{theorem}[Sign of a Ribbon Graph]\label{thm: ribbonamplitude}
For any connected ribbon graph \(\mathcal{G}\) we have:
\[
 \sgn(\mathcal{G}) = (-1)^{ V(\mathcal{G}) } \left( \prod_{e\in \mathcal{E}(\mathcal{G})}  (-1)^{\tau(e)}\right) 
\left( \prod_{f\in \mathcal{F}
(\mathcal{G} ) }
  (-1)^{ t(f) } \right)  =   (-1)^{ F(\mathcal{G}) } \; ,
\]
with $F(\mathcal{G})$ the number of faces of 
$\mathcal{G}$.
\end{theorem}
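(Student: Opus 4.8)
The plan is to restate the identity as $\sgn(\mathcal{G})=(-1)^{F(\mathcal{G})}$ and prove it by induction on the number of edges $E(\mathcal{G})$, using the $\sgn$-invariance of the elementary moves of Lemmas~\ref{lem: contract} and~\ref{lem: slide} together with the structural steps in the proofs of Propositions~\ref{thm: standardribbon1} and~\ref{thm: standardribbon2}. The base case $E=0$ is immediate: a connected edgeless ribbon graph is $\mathcal{G}_{\kern-.1em\mathlarger{\circ}}$, with $V=1$ and $F=1$, so $\sgn(\mathcal{G}_{\kern-.1em\mathlarger{\circ}})=(-1)^{1}=-1=(-1)^{F}$.

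For the inductive step assume $E\ge 1$. If $\mathcal{G}$ has an edge $e$ joining two distinct vertices, then $e$ lies in some spanning tree; contracting it gives $\mathcal{G}/e$ with $E(\mathcal{G}/e)=E-1$ and, by Definition~\ref{def: contraction}, $F(\mathcal{G}/e)=F(\mathcal{G})$, while by Lemma~\ref{lem: contract} $\sgn(\mathcal{G}/e)=\sgn(\mathcal{G})$, so the inductive hypothesis applies. Otherwise $\mathcal{G}$ is a rosette ($V=1$, all edges self-loops). Using the sliding moves of Lemma~\ref{lem: slide} (which change neither $\sgn$, nor $E$, nor $F$) exactly as in the proofs of Propositions~\ref{thm: standardribbon1}--\ref{thm: standardribbon2}, we bring $\mathcal{G}$ to one of the forms $\mathcal{G}'\vee\mathcal{G}_{\mathds{RP}^2}$ (isolating a simple twisted self-loop, when $\mathcal{G}$ carries a twist), $\mathcal{G}'\vee(\text{simple untwisted self-loop})$ (when some self-loop is simple and untwisted), or $\mathcal{G}'\vee\mathcal{G}_{T^2}$ (otherwise: a rosette with no twisted edge and no simple self-loop necessarily contains a nice crossing, which one cleans by sliding IIa), where in each case $\mathcal{G}'$ is a rosette with strictly fewer edges. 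By induction $\sgn(\mathcal{G}')=(-1)^{F(\mathcal{G}')}$, so it remains to compare $\mathcal{G}$ with $\mathcal{G}'$.

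This comparison is a purely local computation at the single corner to which the building block $X$ is attached: it records $\Delta E\in\{1,2\}$, the number of newly twisted edges ($0$ or $1$), $\Delta F=F(X)-1\in\{0,1\}$, and the parity of the change of the total number of arrow reorientations $\sum_{f}t(f)$, read off from the one or two faces through the attaching corner. One finds: for $X=\mathcal{G}_{T^2}$, $\Delta E=2$, no new twist, $\Delta F=0$, and $\sum_f t(f)$ changes by an even number, so $\sgn$ and $(-1)^{F}$ are both unchanged; for $X=\mathcal{G}_{\mathds{RP}^2}$, $\Delta E=1$, one new twist, $\Delta F=0$, and $\sum_f t(f)$ changes by an odd number, so $\sgn$ is multiplied by $(-1)^{1}(-1)^{\mathrm{odd}}=+1$ while $(-1)^{F}$ is unchanged; and for the simple untwisted self-loop, $\Delta E=1$, no twist, $\Delta F=+1$, and $\sum_f t(f)$ changes by an odd number, so $\sgn$ and $(-1)^{F}$ both flip. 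In all three cases $\sgn(\mathcal{G})=(-1)^{F(\mathcal{G})}$ follows from $\sgn(\mathcal{G}')=(-1)^{F(\mathcal{G}')}$, closing the induction.

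The main obstacle is exactly this last box of local computations: one must carefully trace how attaching $X$ either splits off a new small face (the simple untwisted self-loop) or splices $X$'s face into an existing one (the crossing, the twisted loop) while lengthening it by new corners and strands, and then count how many of the a priori arrows near that corner must be reversed to restore a coherent orientation. Everything else---the reduction to rosettes and to the three building blocks, and the $\sgn$-invariance of contraction and of all sliding moves---is already supplied by Propositions~\ref{thm: standardribbon1}--\ref{thm: standardribbon2} and Lemmas~\ref{lem: contract}--\ref{lem: slide}. An alternative route, passing through the canonical form~\eqref{eq: canonform} and multiplicativity of $\sgn(\mathcal{G})(-1)^{F(\mathcal{G})}$ under wedge sums, would additionally require controlling how $\sgn$ changes under the co-tree edge deletions used in the proof of Proposition~\ref{thm: standardribbon2} (each such deletion merges two faces and flips $\sgn$), a computation of the same nature.
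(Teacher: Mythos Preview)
Your proof is correct and parallel in spirit to the paper's, but organized differently. The paper does not induct: after invoking Lemmas~\ref{lem: contract}--\ref{lem: slide} to reduce $\mathcal{G}$ to the semi-canonical form $\mathcal{G}'=\mathcal{G}_{\Sigma_g}\vee(\text{0, 1 or 2 copies of }\mathcal{G}_{\mathds{RP}^2})$ of Proposition~\ref{thm: standardribbon1}, it computes $\sgn(\mathcal{G}')$ in one stroke---each simple twisted loop contributes $(-1)$ for the twist and $(-1)$ for one arrow flip, hence $+1$, while each untwisted edge of the orientable rosette $\mathcal{G}_{\Sigma_g}$ requires exactly one flip---so $\sgn(\mathcal{G}')=(-1)^{1+E(\mathcal{G}_{\Sigma_g})}$, and then finishes with the Euler relation $1-E(\mathcal{G}_{\Sigma_g})+F(\mathcal{G}_{\Sigma_g})=2-2g$.

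Your induction trades the appeal to Euler for peeling off one building block at a time, at the cost of an extra case (the simple untwisted self-loop, where $\Delta F=+1$) and the need to argue that a twist-free rosette with no simple self-loop contains a nice crossing; this last fact follows from a minimal-span argument and is also used in Proposition~\ref{thm: standardribbon2}. Note that your attach/detach step is \emph{not} itself one of the sign-preserving moves of Lemmas~\ref{lem: contract}--\ref{lem: slide}; it is a genuinely new local computation, which you correctly flag as the crux. Your three parity counts are consistent with the paper's per-edge flip count, so they go through. Finally, the paper's route does \emph{not} use Proposition~\ref{thm: standardribbon2} or any co-tree deletions, so your worry about the ``alternative route'' is moot: stopping at the form of Proposition~\ref{thm: standardribbon1} and invoking Euler already suffices.
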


\begin{proof}

From Lemmata~\ref{lem: contract} and~\ref{lem: slide}, the sign of a graph is invariant under the reduction moves used in 
Proposition~\ref{thm: standardribbon1}. It follows that, not only:
\[
\mathcal{G} \isom \mathcal{G}' \isom
\begin{cases}
\mathcal{G}_{\Sigma_g}                                  & \text{orientable with}\ k=2g \\
\mathcal{G}_{\Sigma_g} \vee \mathcal{G}_{\mathds{RP}^2} & \text{unorientable with}\ k=2g+1 \\
\mathcal{G}_{\Sigma_g} \vee \mathcal{G}_{\mathds{RP}^2} \vee \mathcal{G}_{\mathds{RP}^2} & \text{unorientable with}\ k=2g+2 
\end{cases} \;,
\]
but also $\sgn(\mathcal{G}) = \sgn (\mathcal{G}')$. The sign of $\mathcal{G}' $ is easy to compute:
\begin{itemize}
 \item $\mathcal{G}'  $ has one vertex
 \item each simple twisted self loop brings a $(-1)$ for the twist and another $(-1)$ in order to change from an a priori arrow orientation  to a coherent one.
 \item the number of untwisted edges of $\mathcal{G}'$ is the number of edges of 
 $\mathcal{G}_{\Sigma_g}$, that is  $E(  \mathcal{G}_{\Sigma_g})$. Exactly one arrow for each such edge needs to be flipped 
in order to pass from an a priori to a  coherent arrow orientation of $\mathcal{G}'  $.
\end{itemize}
Therefore
\[
 \sgn (\mathcal{G}') = (-1)^{1 + E(  \mathcal{G}_{\Sigma_g}) } \; .
\]
The theorem follows by observing that the Euler relation for $ \mathcal{G}_{\Sigma_g}$ reads $ 1 -  E(  \mathcal{G}_{\Sigma_g}) + 
 F( \mathcal{G}_{\Sigma_g})  = 2-2g (\mathcal{G}_{\Sigma_g}) $ hence:
 \[ 
 F( \mathcal{G}_{\Sigma_g}) = 
 1 + E(  \mathcal{G}_{\Sigma_g}) \mod 2 \;,
 \]
and the number of faces is invariant under contraction and sliding $F(\mathcal{G}_{\Sigma_g} ) = F(\mathcal{G}') = F(\mathcal{G}) $.

\end{proof}

\section{Symmetry Factors}\label{ap: symmfac}

The aim of this section is to prove the following Proposition.
\begin{prop}\label{prom:combitwopoint}
The perturbative series of the two point function
\[
G_2((\lambda)):= Z^{-1}(\lambda) \int d\mu[T] \;  T^{a_{\mathcal{D}}}T^{b_{\mathcal{D}}} g^1_{a_1b_1}\dots g^D_{a_Db_D} \;,
\]
write as the sum:
\begin{equation*}
G_2(\kappa,\lambda) = \sum_{[\mathcal{G}\suprho]\ \text{connected, rooted}} \frac{1}{2^{C(\mathcal{G}\suprho-\mathcal{E}\suprho)-1}} \mathscr{A}(\mathcal{G}\suprho) \; .
\end{equation*}
\end{prop}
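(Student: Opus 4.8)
The plan is to derive the claimed expansion from the Dyson--Schwinger equation~\eqref{eq: DSE-tensor}, $G_2(\lambda) = \prod_{c\in\mathcal{D}} N_c + \sum_{q\in\mathscr{Q}} 4\lambda_q\,\del_{\lambda_q}\ln Z$, fed with the labeled expansion~\eqref{eq:partit} of $Z$ restricted to connected edge multicolored combinatorial maps, which produces $\ln Z = \sum_{[\mathds{G}]}\mathscr{W}(\mathds{G})\,\mathscr{A}(\mathds{G})$ over connected multi-ribbon graphs. By Theorem~\ref{thm: ribbonamplitude} the amplitude~\eqref{eq: tensor-amplitude} depends on the coupling constants only through the monomial $\prod_q(-\lambda_q)^{E_q(\mathds{G})}$, so the operator $\sum_q 4\lambda_q\,\del_{\lambda_q}$ acts on the $\mathds{G}$-term by multiplication by $4\sum_q E_q(\mathds{G}) = 4E(\mathds{G})$, with $E(\mathds{G})$ the total number of (multi-ribbon and $\varrho$-) edges. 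I will read this factor as $2\cdot 2E(\mathds{G})$: the $2E(\mathds{G})$ selects a halfedge of $\mathds{G}$, i.e.\ promotes the sum over graphs to a sum over \emph{rooted} graphs, while the leftover factor $2$ is exactly what lowers the final exponent of $2$ from $C(\mathds{G}-\mathcal{E}\suprho)$ to $C(\mathds{G}-\mathcal{E}\suprho)-1$.

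The core of the argument is to show that, once a root halfedge is pinned, the combinatorial prefactor $\mathscr{W}(\mathds{G})$ --- built in~\eqref{eq:partit} from $\tfrac{1}{V!\,2^V}$, $\prod_v\prod_q\tfrac{1}{\deg^q v!}$ and $\abs{Stab_{\mathfrak{T}}}$ --- collapses to $2^{-C(\mathds{G}-\mathcal{E}\suprho)}$, and this proceeds in two independent steps, exactly as in the matrix case of Section~\ref{sec: pert exp and graphical}. \textbf{(i) Relabelings.} Summing the labeled maps with weight $\bigl(V!\prod_v\prod_q\deg^q v!\bigr)^{-1}$ equals summing isomorphism classes of edge multicolored combinatorial maps with weight $\abs{Aut_0}^{-1}$, $Aut_0$ the group of dart relabelings commuting with $\pi$ and all the $\alpha^{\mathcal{C}}$. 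Since a connected combinatorial map carrying a marked dart is rigid --- an automorphism commuting with a transitive action and fixing one dart is the identity --- the $2E(\mathds{G})$ choices of root, weighted by $\abs{Aut_0}^{-1}$, reorganize precisely into the set of pairwise distinct rooted maps, each counted once; this is the familiar statement that rooting trivializes the ordinary symmetry factors. \textbf{(ii) Twists.} What survives is the factor $2^{-V}\abs{Stab_{\mathfrak{T}}}$ attached to the twist group $\mathfrak{T}\simeq(\mathbb{Z}/2)^V$ of vertex re-embeddings. Because every element of $\mathfrak{T}$ is a homeomorphism of the underlying (a priori disconnected) surface, the face numbers $F_c$ --- hence, by Theorem~\ref{thm: ribbonamplitude}, the full amplitude $\mathscr{A}(\mathds{G})$ --- are constant along $\mathfrak{T}$-orbits, so summing $\mathscr{A}$ over an orbit merely reproduces the amplitude of a representative. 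It then remains to show that, on a \emph{rooted} multi-ribbon graph, $2^{-V}\abs{Stab_{\mathfrak{T}}}$ simplifies to $2^{-C(\mathds{G}-\mathcal{E}\suprho)}$: the mechanism is the one of the matrix case, namely that $\mathfrak{T}$ acts trivially on the colorless $\varrho$-edges, so for twisting purposes a $\varrho$-edge is absent and the number of connected components of $\mathds{G}-\mathcal{E}\suprho$ enters, one residual $\mathbb{Z}/2$ per component, of which pinning the root removes exactly one.

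Assembling (i) and (ii) with the extra factor $2$ from $4\lambda_q\,\del_{\lambda_q} = 2\cdot 2\lambda_q\,\del_{\lambda_q}$, the contribution of a connected rooted multi-ribbon graph with at least one edge to $\sum_q 4\lambda_q\,\del_{\lambda_q}\ln Z$ carries the coefficient $2\cdot 2^{-C(\mathds{G}-\mathcal{E}\suprho)} = 2^{-(C(\mathds{G}-\mathcal{E}\suprho)-1)}$ in front of $\mathscr{A}(\mathds{G})$, which is the weight of~\eqref{eq: G2-tensor}. It remains to account for the inhomogeneous term $\prod_{c\in\mathcal{D}} N_c$ of the Dyson--Schwinger equation, which is identified with the (conventionally included, edgeless) rooted multi-ribbon graph on a single vertex: for it all $E_q=0$, $C(\mathds{G}-\mathcal{E}\suprho)=1$ and $F_c(\mathds{G})=1$ for every $c$, so that $2^{-(C-1)}=1$ and $\mathscr{A}(\mathds{G}) = \prod_c\bigl((-1)^{\abs c}N_c\bigr)$, which is $\prod_c N_c$ up to the (immaterial) overall $N\to-N$ sign. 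Adding this term completes the sum over all connected rooted multi-ribbon graphs and proves the Proposition; the matrix case $D=2$ is recovered verbatim, with the $\mathcal{C}_q=\emptyset$ edges playing the role of the $\varrho$-edges of Section~\ref{sec: matrix}.

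The step I expect to be the genuine obstacle --- the only one that is not routine --- is (ii): keeping precise track of how the twist group $\mathfrak{T}$ acts once a root has been fixed, and in particular establishing that the surviving contribution is exactly $2^{-C(\mathds{G}-\mathcal{E}\suprho)}$ after the $2^{-V}$ normalization, so that it is precisely the connected components of the $\varrho$-edge-deleted graph that control the final weight. Everything else --- the Dyson--Schwinger equation~\eqref{eq: DSE-tensor}, the action of the Euler operator $\sum_q\lambda_q\,\del_{\lambda_q}$ on amplitudes that are monomial in the couplings, and the rigidity of rooted connected combinatorial maps --- is bookkeeping that runs in parallel with the matrix derivation already given.
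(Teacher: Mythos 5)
Your overall strategy mirrors the paper's: feed the labeled expansion \eqref{eq:partit} into the Dyson--Schwinger equation \eqref{eq: DSE-tensor}, read the Euler operator $\sum_q 4\lambda_q\,\del_{\lambda_q}$ as $2\cdot 2E$ to root the graphs, use rigidity of rooted connected combinatorial maps to kill the $V!$ and $\prod\deg^q v!$ relabeling factors, and then account for the residual twist/re-embedding degeneracy. Step (i) is exactly the argument in the paper. The arithmetic at the end is correct. The handling of the inhomogeneous $\prod_c N_c$ term is also fine (the paper is actually silent about it, and your worry about the $(-1)^{\abs c}$ is legitimate but peripheral).

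The problem is step (ii), which you yourself flag as the genuine obstacle: you neither prove it nor describe it correctly. You write that $\mathfrak{T}\simeq(\mathbb{Z}/2)^V$ is the group of vertex re-embeddings, that its elements are homeomorphisms of the underlying surface, and that consequently the face numbers $F_c$ and hence $\mathscr{A}(\mathds{G})$ are constant along $\mathfrak{T}$-orbits, so that ``summing $\mathscr{A}$ over an orbit merely reproduces the amplitude of a representative.'' None of this is right. In \eqref{eq:partit} and \eqref{eq: intermedexpansion2} the group $\mathfrak{T}$ is the \emph{edge-twist} group, of order $2^{E}$, not $(\mathbb{Z}/2)^V$; twisting a ribbon edge is \emph{not} a homeomorphism of the surface and genuinely changes the face structure (and hence the amplitude — that is the whole point of having the $(-1)^{\tau(e)}$ in the sign computation of Theorem~\ref{thm: ribbonamplitude}). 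The inner sum $\sum_{[\mathds{G}]\in Orb_{\mathfrak{T}}(\mathcal{M})}\mathscr{A}(\mathds{G})$ is therefore a genuine sum over distinct ribbon graphs with distinct amplitudes, not a degeneracy factor times a single amplitude. The factor $\abs{Stab_{\mathfrak{T}}}$ in the numerator counts how many of the $2^E$ twist assignments on $\mathcal{M}$ give the same equivalence class, and the non-trivial content of the appendix is Lemma~\ref{thm: stab}, $\abs{Stab_{\mathfrak{T}}(\mathcal{G})}=2^{V_1+V_2}$ for a rooted connected ribbon graph (twisting the lone edge at a degree-one vertex, or both edges at a degree-two vertex, is absorbed by a re-embedding). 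You also need the separate multiplicity count $2^{V_{\ge 3}}$ of combinatorial maps $\mathcal{M}$ whose $\mathfrak{T}$-orbit contains a given $[\mathds{G}]$ (coming from re-embeddings of degree-$\ge 3$ vertices); it is the product $2^{V_1+V_2}\cdot 2^{V_{\ge 3}}=2^{V-C}$, divided by the $2^{V-1}$ of the rooted expansion, that yields $2^{-(C-1)}$. Your formula ``$2^{-V}\abs{Stab_{\mathfrak{T}}}\to 2^{-C}$'' quietly conflates the stabilizer with this multiplicity and ends up at the right number, but the argument offered for it would not survive scrutiny, and the actual Lemma that carries the proof is left to ``the mechanism of the matrix case'' without being established.
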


Before proving this Proposition we discuss some useful facts.
The symmetry factor of a ribbon graph in the perturbative series \eqref{eq: intermedexpansion2} of \(Z(\lambda)\) is obtained as:
\begin{itemize}\normalfont
\item a factor \(\frac{1}{2^V n_{\mathrm{IVP}}}\), where \(n_{\mathrm{IVP}}\) is the number of permutations of vertex labels, that give the same labeled map.
\item a factor \(\frac{1}{\deg v! \deg\suprho v!}\) for every vertex.
\item a factor counting the number of ways to connect labeled halfedges to form the same combinatorial map \(\mathcal{M}\suprho\) underlying \(\mathcal{G}\suprho\), taking into account the different ways to label the halfedges.
\item a factor \(\abs{Stab_{\mathfrak{T}}}\) and the number of combinatorial maps such that \(\mathcal{G}\suprho\) is contained in their orbits under \(\mathfrak{T}\).
\end{itemize} 
For example, the weight of the ribbon graph in Fig.~\ref{fig: matrix-arrows} is:
\begin{equation*}
\underbrace{ \frac{1}{2^3\cdot 1} }_{2^V n_{\mathrm{IVP}}}
\underbrace{ \frac{\overbrace{\ 2^2\ }^{\abs{Stab_{\mathfrak{T}}}} }{ 5!\cdot 1!\cdot 2!\cdot 0!\cdot 1!\cdot 3} }_{\prod_v \deg v!\deg\suprho v!}
\cdot
\overbrace{2\cdot 5!\cdot 3!}^{\substack{\text{labeling and}\\ \text{connecting halfedges}}}
=\frac{1}{2} \; .
\end{equation*}

\paragraph{Stabilizer of Rooted Ribbon Graphs with respect to \(\mathfrak{T}\).}

Rooting simplifies the calculation of \ \(\abs{Stab_{\mathfrak{T}}(\smash[t]{\mathcal{M}})}\). The finite group \(\mathfrak{T}\) that twists the ribbon edges is defined on graphs with a fixed but arbitrary labeling of their edges. The rooting can be used to induce such a labeling: Fix a spanning tree and enumerate all edges as they are encountered on a counter-clockwise walk following the unique face of the tree, starting at the root.

We first focus on ordinary combinatorial maps and ribbon graphs. The results can be generalized to graphs with \(\varrho\)-edges, by considering the ordinary ribbon graph that is obtained by deleting the \(\varrho\)-edges.

\begin{lem}\label{thm: stab}
Let \(\mathcal{G}\) be a rooted, connected ribbon graph. We denote by \(V_1\) and \(V_2\) the number of non-root vertices of degree one and two, respectively. Then:
\[ \abs{Stab_{\mathfrak{T}}(\mathcal{G})} = 2^{V_1+V_2} 
;. \]
\end{lem}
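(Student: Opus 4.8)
The strategy is to describe $Stab_{\mathfrak{T}}(\mathcal{G})$ explicitly and then count it. Recall that $\mathfrak{T}\cong(\mathds{Z}/2)^{E(\mathcal{G})}$ acts on isomorphism classes of ribbon graphs by twisting an arbitrary subset $S$ of the ribbon edges, and that two ribbon graphs are equivalent when one is obtained from the other by re-embedding some of its vertices, a re-embedding of a vertex $v$ reversing the cyclic order of the halfedges at $v$ and twisting every incident edge (a self-loop being twisted twice, hence left untwisted). Re-embedding the root vertex would alter the rooting, so only re-embeddings of the \emph{non-root} vertices are available. For a set $S'$ of vertices let $\partial S'\subseteq\mathcal{E}(\mathcal{G})$ denote the set of non-self-loop edges with exactly one endpoint in $S'$; this is precisely the set of edges whose twist state changes when all vertices of $S'$ are re-embedded simultaneously. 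I claim
\[
Stab_{\mathfrak{T}}(\mathcal{G})=\bigl\{\,\partial S'\ \big|\ S'\subseteq\mathcal{V}(\mathcal{G})\ \text{consists of non-root vertices with}\ \deg v\le 2\,\bigr\}\,,
\]
and that $S'\mapsto\partial S'$ is injective on such subsets $S'$; together these give $\lvert Stab_{\mathfrak{T}}(\mathcal{G})\rvert = 2^{V_1+V_2}$.

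For the inclusion ``$\supseteq$'' I would use that a cyclic word in pairwise distinct letters equals its reversal iff it has length at most $2$: hence, if every $v\in S'$ has degree at most two, then re-embedding all of $S'$ leaves the rotation system unchanged and changes the twist state of exactly the edges in $\partial S'$; since re-embeddings of non-root vertices are isomorphisms of rooted ribbon graphs, this shows $t_{\partial S'}(\mathcal{G})\cong\mathcal{G}$. For ``$\subseteq$'': if $t_S(\mathcal{G})\cong\mathcal{G}$ then $t_S(\mathcal{G})$ is obtained from $\mathcal{G}$ by a composition of re-embeddings of non-root vertices; letting $S'$ be the set of vertices re-embedded an odd number of times, equality of the two rotation systems forces each vertex of $S'$ to be reversal-invariant, i.e.\ of degree at most $2$, and equality of the two twist assignments then forces $S=\partial S'$. (For graphs with $\varrho$-edges one runs the same argument on the underlying ribbon graph obtained by deleting them, the $\varrho$-halfedges living inside corners that are unaffected by the reversals at degree-$\le 2$ vertices.)

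The remaining and only substantial point is the injectivity of $\partial$. Suppose $\partial S'=\partial S''$ for two subsets of non-root, degree-$\le 2$ vertices and put $T=S'\triangle S''$, so that $\partial T=\emptyset$ and $T$ again consists of non-root vertices of degree at most $2$. Then every non-self-loop edge incident to a vertex of $T$ has both endpoints in $T$; since each vertex of $T$ carries at most two halfedges in total, no edge joins $T$ to $\mathcal{V}(\mathcal{G})\setminus T$, so $T$ is a union of connected components of $\mathcal{G}$. As $\mathcal{G}$ is connected, $T$ is empty or all of $\mathcal{V}(\mathcal{G})$, and the latter is excluded because $T$ omits the root; hence $T=\emptyset$ and $S'=S''$. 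Consequently $\partial$ is a bijection from the $2^{V_1+V_2}$ subsets of the non-root vertices of degree one or two onto $Stab_{\mathfrak{T}}(\mathcal{G})$, and $\lvert Stab_{\mathfrak{T}}(\mathcal{G})\rvert=2^{V_1+V_2}$.

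I expect the step costing the most care to be bookkeeping rather than conceptual: making precise that an isomorphism between the labelled rooted ribbon graphs $t_S(\mathcal{G})$ and $\mathcal{G}$ is realised by a composition of non-root vertex re-embeddings, and tracking how such a composition acts simultaneously on the rotation system and on the edge twists (including the self-loop-twisted-twice convention and the placement of the $\varrho$-halfedges, as well as the precise sense in which the rooting forbids re-embedding the root vertex). Once that is settled, the connectivity argument above closes the proof.
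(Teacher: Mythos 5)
Your proof is correct and follows the paper's approach: re-embeddings of non-root vertices of degree at most two are exactly the operations that absorb twists, and each such vertex contributes an independent factor of two to the stabilizer. You make explicit what the paper's brief argument treats as evident---namely the reverse inclusion (any stabilizing composition of re-embeddings must involve only degree-$\le 2$ vertices, since $\mathfrak{T}$ leaves the rotation system untouched) and the independence, via your connectivity argument showing $S'\mapsto\partial S'$ is injective---which is precisely what upgrades the paper's exhibition of generators to the claimed equality $\abs{Stab_{\mathfrak{T}}(\mathcal{G})} = 2^{V_1+V_2}$.
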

\begin{proof}
The orientation of the root vertex is held fixed. If a non-root vertex has degree one, twisting the edge incident to it does not change the ribbon graph---the twist is ``reducible''. If a non-root vertex has degree two, twisting both incident edges again does not change the ribbon graph. If both halfedges of a degree two vertex belong to the same edge, it is necessarily the root vertex, since \(\mathcal{G}\) is assumed to be connected. This is depicted in Fig.~\ref{fig: redtwist}.
\end{proof}
	
	\begin{figure}[ht]
		\centering
		\begin{tikzpicture}[scale=6/8]
			\node {\includegraphics[scale=6]{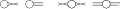}};
			\node at (-3.09,0) {\(\sim\)};
			\node at (2.39,0) {\(\sim\)};
		\end{tikzpicture}
		\caption{Reducible twists at vertices of degree one and two.}
		\label{fig: redtwist}
	\end{figure}
It follows that:
\begin{equation}\label{eq: doublesum}
\sum_{\substack{\mathcal{M} \\ \text{connected, rooted}}}
\frac{\abs{Stab_{\mathfrak{T}}(\smash[t]{\mathcal{M})}} }{ 2^{V}}
\sum_{[\mathcal{G}]\in Orb_{\mathfrak{T}}(\mathcal{M})}
\mathscr{A}(\mathcal{G})
= \sum_{\substack{\mathcal{M} \\ \text{connected, rooted}}}
\frac{1}{2^{V_{\geq 3}}} 
\sum_{[\mathcal{G}]\in Orb_{\mathfrak{T}}(\mathcal{M})}
\mathscr{A}(\mathcal{G})
\; ,
\end{equation}
where \(V_{\geq 3}\) denotes the number of non-root vertices of degree \(\geq 3\). In order to reshuffle this expression into a sum over rooted ribbon graphs, we recall that two ribbon graphs are equivalent if one can be obtained from the other by vertex re-embeddings. This implies that if two combinatorial maps \(\mathcal{M}_1\) and \(\mathcal{M}_2\) differ only by reversing the order of halfedges around some of their vertices then
$Orb_{\mathfrak{T}}(\mathcal{M}_1) = Orb_{\mathfrak{T}}(\mathcal{M}_2)$.
Reversing the order of halfedges at a vertex of degree lower than three is trivial hence for rooted ribbon graphs the multiplicity is \(2^{V_{\geq 3}}\).
As a result, the perturbative series of the two point function \(G_2\) (\ref{eq: DSE-G2}), for \(\kappa=0\) is:
\begin{equation*}
G_2(0,\lambda)=\sum_{[\mathcal{G}] \text{connected, rooted}}
\mathscr{A}(\mathcal{G}) 
; .
\end{equation*}
	
When taking the \(\varrho\)-edges into account, we recall that \(\mathfrak{T}\) acts trivial on them. Thus, it is sufficient to consider the ribbon graph \(\mathcal{G=\mathcal{G}\suprho - \mathcal{E}\suprho}\) obtained by deleting all the \(\varrho\)-edges of \(\mathcal{G}\suprho\).
However, when calculating \(\abs{Stab_{\mathfrak{T}}(\smash[t]{\mathcal{G}\suprho})}\) a subtlety arises: \(\mathcal{G=\mathcal{G}\suprho - \mathcal{E}\suprho}\) is not necessarily connected. The \(\varrho\)-edges can be used to induce a rooting at every connected component \(\mathcal{G}_c\subset \mathcal{G}\): Consider the connected components as effective vertices in a graph with only \(\varrho\)-edges; pick a spanning tree in that graph; from every \(\mathcal{G}_c\) there is a unique path in the tree to the original root; let the halfedge of \(\mathcal{G}_c\), belonging to that path, be another root. The stabilizer \(Stab_{\mathfrak{T}}(\mathcal{G})\) factors over the \(\mathcal{G}_c\) and using Lemma~\ref{thm: stab} for each rooted connected component one obtains:
\[
\abs{Stab_{\mathfrak{T}}(\mathcal{G})} = \prod_{\mathcal{G}_c\subset\mathcal{G}} 2^{V_1(\mathcal{G}_c)+V_2(\mathcal{G}_c)}
\; .
\]

One has to partially resum the double sum over combinatorial maps and ribbon graphs with \(\varrho\)-edges analogous to \eqref{eq: doublesum} into a sum over rooted ribbon graphs with \(\varrho\)-edges. The multiplicity of a ribbon graph with multiple rooted connected components is
$\prod_{\mathcal{G}_c\subset\mathcal{G}}2^{V_{\ge 3}(\mathcal{G}_c)}$
and one arrives at:
\[
\prod_{\mathcal{G}_c\subset\mathcal{G}}2^{V(\mathcal{G}_c)-1} = 2^{V(\mathcal{G})-C(\mathcal{G})}
\; ,
\]
where \(C\) denotes the number of connected components: the \(-1\) in the exponent appears because \(V_1\), \(V_2\) and \(V_{\geq 3}\) count only non-root vertices hence sum up to $ V(\mathcal{G}_c)-1$ in each connected component.

\begin{proof}[Proof of Proposition \ref{prom:combitwopoint}]
The discussion above goes through mutatis mutandis for multi ribbon graphs.
Combining \eqref{eq:partit} with \eqref{eq: DSE-tensor}, the perturbative series of the two point function writes:
\[
 G_2(\lambda) = \sum_{\mathcal{M} 
 \text{ connected, rooted}
 }
\frac{
\abs{Stab_{\mathfrak{T}}(\smash[t]{\mathcal{M})}} 
}{V(\mathcal{M})!\, 2^{V(\mathcal{M})-1}}  \Big(\prod_{v\in\mathcal{M}}  \frac{1}{
\prod_{q} \deg^q v!}\Big)  
\sum_{[\mathds{G}]\in Orb_{\mathfrak{T}}(\mathcal{M} )} \mathscr{A}(\mathds{G} ) \;,
\]
with $\mathcal{M}$ and edge multicolored combinatorial map. All objects in the above expression are fully labeled. Rooting prevents non-trivial symmetry factors, thus it is sufficient to count the ways to assign labels to a multi-ribbon graph:
1.~Pick a spanning tree;
2.~There are \(V!\) ways to label the vertices;
3.~At the root vertex \(v_0\), the root breaks the cyclicity of halfedges, thus there are \(\prod_{q}\deg^q v_0!\) ways to label the different types of multi-ribbon halfedges;
4.~At each non-root vertex one halfedge is part of the unique path in the tree towards the root; This again breaks cyclicity and there are \(\prod_{q}\deg^q v!\) ways to label the halfedges.
The amplitudes do not depend on the labeling, thus, in terms of unlabeled but rooted objects:
\[
G_2(\lambda) = \sum_{\substack{\mathcal{M} 
	\text{ connected,}\\ \text{rooted, unlabeled}}
}
\frac{
	\abs{Stab_{\mathfrak{T}}(\smash[t]{\mathcal{M})}} 
}{2^{V(\mathcal{M})-1}} 
\sum_{[\mathds{G}]\in Orb_{\mathfrak{T}}(\mathcal{M} )} \mathscr{A}(\mathds{G} ) 
=
\sum_{\lbrack\mathds{G}\rbrack\ \text{connected, rooted}}
\frac{1}{2^{C(\mathds{G}-\mathcal{E}\suprho )-1}} \mathscr{A}(\mathds{G}) \;, \]
where \(C(\mathds{G}-\mathcal{E}\suprho)\) counts the number of connected components of the multi-ribbon graph after deletion of the \(\varrho\)-edges.
\end{proof}

For example $G_2$ up to quadratic order in the coupling constants for $D=2$ is:
\[
	\begin{split}
		G_2(\kappa,\lambda)= N_1N_2 
		&-\lambda\Big(
		\tikzmarknode{rg1}{N_1 N_2} + \tikzmarknode{rg2}{N_1^2N_2} + \tikzmarknode{rg3}{N_1N_2^2}
		\Big)
		\\[19pt]
		&+\lambda^2\Big(
		\left(\tikzmarknode{rg6}{2}+\tikzmarknode{rg9}{2}+\tikzmarknode{rg8}{1}\right) N_1N_2 +
		\left(\tikzmarknode{rg5}{4}+\tikzmarknode{rg10}{1}\right)N_1^2N_2 +
		\left(\tikzmarknode{rg12}{4}+\tikzmarknode{rg14}{1}\right) N_1N_2^2 
		\\[19pt]
		&+ 
		\left(\tikzmarknode{rg11}{4}+\tikzmarknode{rg13}{1}\right) N_1^2N_2^2 +
		\tikzmarknode{rg7}{2 N_1N_2^3} +
		\tikzmarknode{rg4}{2 N_1^3N_2}
		\Big)
		\\[20pt]
		&-2\kappa \Big( \tikzmarknode{rg15}{N_1N_2} + \frac{1}{2}\tikzmarknode{rg16}{N_1^2N_2^2}\Big)
		\\[19pt]
		&+4\kappa^2\Big( (\tikzmarknode{rg17}{1}+\tikzmarknode{rg18}{2})N_1N_2 + \frac{1}{2}(\tikzmarknode{rg19}{4}+\tikzmarknode{rg20}{1}) N_1^2N_2^2 + \tikzmarknode{rg21}{\frac{2}{4}N_1^3N_2^3} \Big)
		\\[22pt]
		&+2\kappa\lambda\Big( (\tikzmarknode{rg22}{4}+\tikzmarknode{rg23}{2})N_1N_2 + (\tikzmarknode{rg24}{4}+\tikzmarknode{rg25}{2})N_1^2N_2
		\\[19pt]
		&+  (\tikzmarknode{rg26}{4}+\tikzmarknode{rg27}{2})N_1N_2^2 + \frac{4}{2}\tikzmarknode{rg28}{N_1^2N_2^2} + \frac{4}{2}\tikzmarknode{rg29}{N_1^3N_2^2} + \frac{4}{2}\tikzmarknode{rg30}{N_1^2N_2^3} \Big) + \dots\; .
	\end{split}
	\begin{tikzpicture}[overlay, remember picture, node distance=3pt and -8pt]
		\node [below=of rg1] {\includegraphics[scale=3]{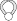}};
		\node [below=of rg2] {\includegraphics[scale=3]{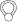}};
		\node [below=of rg3] {\includegraphics[scale=3]{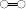}};
		\node [below=of rg4] {\includegraphics[scale=3]{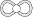}};
		\node [below=of rg5] {\includegraphics[scale=3]{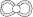}};
		\node [below left=of rg6] {\includegraphics[scale=3]{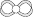}};
		\node [below=of rg7] {\includegraphics[scale=3]{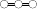}};
		\node [below right=of rg8] {\includegraphics[scale=3]{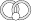}};
		\node [below=of rg9] {\includegraphics[scale=3]{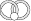}};
		\node [below right=of rg10] {\includegraphics[scale=3]{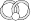}};
		\node [below=of rg11] {\includegraphics[scale=3]{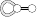}};
		\node [below=of rg12] {\includegraphics[scale=3]{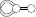}};
		\node [below right=of rg13] {\includegraphics[scale=3]{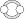}};
		\node [below right=of rg14] {\includegraphics[scale=3]{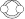}};
		\node [below=of rg15] {\includegraphics[scale=3]{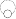}};
		\node [below=of rg16] {\includegraphics[scale=3]{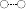}};
		\node [below left=of rg17] {\includegraphics[scale=3]{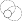}};
		\node [below=of rg18] {\includegraphics[scale=3]{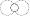}};
		\node [below left=of rg19] {\includegraphics[scale=3]{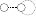}};
		\node [below=of rg20] {\includegraphics[scale=3]{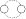}};
		\node [below=of rg21] {\includegraphics[scale=3]{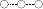}};
		\node [below left=of rg22] {\includegraphics[scale=3]{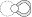}};
		\node [below=of rg23] {\includegraphics[scale=3]{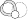}};
		\node [below left=of rg24] {\includegraphics[scale=3]{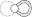}};
		\node [below=of rg25] {\includegraphics[scale=3]{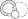}};
		\node [below left=of rg26] {\includegraphics[scale=3]{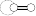}};
		\node [below=of rg27] {\includegraphics[scale=3]{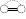}};
		\node [below=of rg28] {\includegraphics[scale=3]{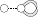}};
		\node [below=of rg29] {\includegraphics[scale=3]{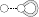}};
		\node [below=of rg30] {\includegraphics[scale=3]{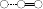}};
	\end{tikzpicture}
\]

\bigskip

Take for example the last three graphs. After deleting the \(\varrho\)-edges each splits into two connected components, this gives a factor \(\frac{1}{2}\) and in addition there are \(4\) distinct ways of rooting these graphs.
	
%%%%%%%% References %%%%%%%%%%%%%%%%%%%%%%%%%%%%%%%%%%%
{\pagestyle{plain}
\printbibliography[heading=bibintoc,title=References]
}
\end{document}